\renewcommand{\vec}[1]{\bm{#1}}
\newcommand{\true}{\top}
\newcommand{\false}{\bot}
\newtheorem{problem}{\textbf{Problem}}
\newtheorem{proposition}{\bf{Proposition}}
\newtheorem{definition}{\bf{Definition}}
\newtheorem{lemma}{\bf{Lemma}}
\newtheorem{theorem}{\bf{Theorem}}
\newtheorem{example}{\textbf{Example}}
\newtheorem{brule}{\textbf{Rule}}
\begin{document}
\title{Sampling-Based Planning Under STL Specifications: A Forward Invariance Approach}

\author{Gregorio Marchesini, ~\IEEEmembership{Student Member,~IEEE},  Siyuan Liu, ~\IEEEmembership{Member,~IEEE}, \\
Lars Lindemann, ~\IEEEmembership{Member,~IEEE}, and Dimos V. Dimarogonas, ~\IEEEmembership{Fellow,~IEEE}
\thanks{This work was supported in part by the Horizon Europe EIC project SymAware (101070802), 
the ERC LEAFHOUND Project, the Swedish Research Council (VR), Digital Futures, and the Knut and Alice Wallenberg (KAW) Foundation.}
\thanks{Gregorio Marchesini, Siyuan Liu, and Dimos V. Dimarogonas are with the Division
of Decision and Control Systems, KTH Royal Institute of Technology, Stockholm, Sweden.
	E-mail: {\tt\small \{gremar,siyliu,dimos\}@kth.se}.  Lars Lindemann is with the Thomas Lord Department of Computer Science, University of Southern California, Los Angeles, CA, USA.
	E-mail: {\tt\small \{llindema\}@usc.ed}.   
}
}

\maketitle
\begin{abstract}
We propose a variant of the Rapidly Exploring Random Tree Star (RRT$^{\star}$) algorithm to synthesize trajectories satisfying a given spatio-temporal specification expressed in a fragment of Signal Temporal Logic (STL) for linear systems. Previous approaches for planning trajectories under STL specifications using sampling-based methods leverage either mixed-integer or non-smooth optimization techniques, with poor scalability in the horizon and complexity of the task. We adopt instead a control-theoretic perspective on the problem, based on the notion of set forward invariance. Specifically, from a given STL task defined over polyhedral predicates, we develop a novel algorithmic framework by which the task is efficiently encoded into a time-varying set via linear programming, such that trajectories evolving within the set also satisfy the task. Forward invariance properties of the resulting set with respect to the system dynamics and input limitations are then proved via non-smooth analysis. We then present a modified RRT$^{\star}$ algorithm to synthesize asymptotically optimal and dynamically feasible trajectories satisfying a given STL specification, by sampling a tree of trajectories within the previously constructed time-varying set. We showcase two use cases of our approach involving an autonomous inspection of the International Space Station and room-servicing task requiring timed revisit of a charging station.
\end{abstract}
\begin{IEEEkeywords}
Signal Temporal Logic, Rapidly Exploring Random Trees, Linear Programming.
\end{IEEEkeywords}
\section{Introduction}\label{sec:introduction}
\begin{figure}[t]
    \centering
\includegraphics[width=\linewidth]{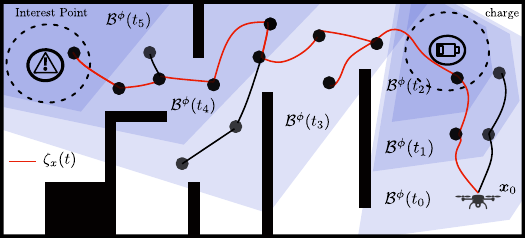}
    \caption{A trajectory $\zeta_{x}$ (red) that satisfies an STL task $\phi$ is obtained by expanding a tree of trajectories, using an RRT, from a given initial state $\vec{x}_0$. The tree of trajectories is expanded within the time-varying set $\mathcal{B}^{\phi}(t)$ such trajectory evolving into $\mathcal{B}^{\phi}(t)$ also satisfy the STL task $\phi$.}
    \label{fig:RRT expansion example}
\end{figure}

The application of sampling-based planners to plan kino-dynamically feasible trajectories in complex environments for systems subject to spatio-temporal constraints has been an active area of research during the past decades \cite{plaku2016motion,branicky2006sampling}. Particularly, with the aim of deploying autonomous systems with verifiable performance guarantees, a growing body of literature has been devoted toward designing planning algorithms to synthesize trajectories satisfying tasks expressed as Linear Temporal Logic (LTL) formulas \cite{kress2009temporal,saha2014automated,MurrayLTL,aminof2019planning,plaku2012planning} and, more recently, Signal Temporal Logic (STL) formulas \cite{kurtz2022mixed,raman2015reactive,sun2022multi,sewlia2023cooperative}.

The focus of this work is on trajectory planning for linear systems under STL specification, leveraging Rapidly Exploring Random Trees (RRT) \cite{lavalle2001randomized} and, in particular, its asymptotically optimal variant RRT$^\star$ \cite{karaman2011sampling}, with applications to real-time robot motion planning in complex environments. 
\subsubsection*{Contributions}
In this paper, we propose to adopt a novel approach to real-time sampling-based planning of trajectories for linear systems subject to STL constraints, leveraging the notion of forward invariance of time-varying sets. Namely, we formalize a simple, yet effective, approach to cast a STL specification, expressed over linear predicate functions, into a time-varying set, expressed as a time-varying polyhedron, which we design via linear programming. We show via a non-smooth analysis that the resulting set is forward invariant for the system dynamics, i.e., there exists a controller that maintains the system within the set under input constraints, given that the system starts within the set.  While the problem of casting STL specifications into time-varying sets was previously considered in \cite{lindemann2018control, charitidou2021barrier,yu2024continuous,Jiang,reach1}, our contribution differs from these previous works in the following terms. While \cite{yu2024continuous,Jiang,reach1} consider the entire STL fragment of specifications, their approach considers complex reachable set computations, which often can not be achieved in real-time. In this regard, we trade off a loss in expressivity, as we consider a subfragment of STL, for a reduction in computational complexity, for which we don't need to undertake complex reachable sets computations. On the other hand, compared to \cite{lindemann2018control, charitidou2021barrier}, we here consider a more expressive STL fragment that includes nested temporal operators, thus allowing for a richer set of specifications, e.g., recurring tasks. 

Specifically, we encode an STL task $\phi$ into a time-varying polyhedron $\mathcal{B}^{\phi}(t)$ (see. Fig \ref{fig:RRT expansion example}), we provide a real-time implementation of RRT$^\star$ to synthesize dynamically feasible trajectories satisfying the task $\phi$. Namely, if we  let $\zeta_x(t)$ represent a trajectory of a given linear system, we then propose to expand a tree of sampled trajectory that evolves within the time-varying set $\mathcal{B}^{\phi}(t)$ from which we obtain a minimum cost trajectory satisfying $\phi$.
\subsubsection*{Related work}
The authors in \cite{barbosa2019guiding,vasile2017sampling,linard2023real,karlsson2020sampling,sewlia2023cooperative} first explored the application of RRT$^\star$ to synthesize a feasible trajectory satisfying a given STL specification. Namely,  \cite{barbosa2019guiding,linard2023real,karlsson2020sampling} exploit the notion of STL robust semantics \cite{fainekos2009robustness}—a metric functional defining the distance of a trajectory from the set of trajectories satisfying a given task $\phi$—to guide the synthesis of satisfying trajectories using RRT$^\star$. However, the dynamics of the system, as well as input limitations, are not directly considered therein, which are instead considered in the present work. A different approach is considered in \cite{sewlia2023cooperative}, where an RRT algorithm is defined for a multi-agent system case. A recursive set of rules is defined to guide the expansion of the tree of sampled trajectories, avoiding the introduction of robust semantics, but the dynamics of the system are again not considered, whereas we consider controllable linear systems. On the other hand, \cite{vasile2017sampling} proposes an RRT$^{\star}$ algorithm that handles nonlinear dynamics with input limits and STL constraints via robust semantics, which are non-smooth and require recursive evaluations. We instead avoid recursive evaluation by casting STL tasks as time-varying sets offline. More in line with our approach are the works in \cite{ahmad2022adaptive,yang2019sampling,planningwithSTL} where Control Barrier Function (CBF) constraints are introduced within a sampling-based planner, but with the goal of synthesizing collision-free trajectories, rather than enforcing spatio-temporal constraints on them. Particularly in \cite{planningwithSTL}, Signal Temporal Logic constraints are considered by leveraging a Mixed-Integer Linear Programming (MILP) formulation \cite{raman2015reactive}, which is known to scale poorly over the state dimension and duration of the task.
\subsubsection*{Organization} Sections \ref{sec:preliminaries} and \ref{sec:problem} introduce preliminaries and problem formulation. In Sec. \ref{sec:stl tasks} and \ref{sec:forward invariance verification} our first contribution is provided, developing an algorithmic approach to construct a time-varying polyhedral set from a given STL task $\phi$, with guaranteed forward invariance properties, such that trajectories evolving in this set also satisfy $\phi$. In Sec. \ref{sec:rrt} an implementation of RRT$^{\star}$ leveraging such a time-varying set is shown, while simulations showcasing our algorithms are given in Sec. \ref{sec:sim}. Conclusions and future work are provided in Sec. \ref{sec:conclusion}.
\subsubsection*{Notation} Bold letters denote column vectors and capital letters indicate matrices and sets. The sets $\mathbb{N}$, $\mathbb{R}$ and $\mathbb{R}_{\geq 0}$ represent natural, real, and non-negative real numbers, respectively. The symbols $\top$ and $\bot$ denote logical \textit{True} and \textit{False}, respectively. Given two metric spaces $Y$ and $Z$,  the set $\mathcal{F}(Y,Z)$ is the set of functions from $Y$ to $Z$. For a set $\mathcal{S}$, $|\mathcal{S}|$ and $2^{\mathcal{S}}$ represent cardinality and power set (set of subsets), respectively. For two sets $A$ and $B$, the Cartesian product is $A\times B = \{(x,y)\mid x\in A\;, y\in B\}$. For a given interval $[a,b]\subset \mathbb{R}$ and $c \in \mathbb{R}$, we let $c \oplus[a,b] =  [c+a,c+b]$. Standard 2-norm and $\infty$-norm are denoted as $\|\vec{x}\|$ and $\|\vec{x}\|_{\infty}$, respectively,  while $\|\vec{x}\|_Q:= \sqrt{\vec{x}^T Q\vec{x}}$ for a positive definite matrix $Q$. For a real-valued function $g: \mathbb{R}^n \times \mathbb{R} \rightarrow \mathbb{R} $, we denote the standard gradient as $\nabla g(\vec{x},t) = [\frac{d}{d\vec{x}}g(\vec{x},t)\; \frac{d}{dt}g(\vec{x},t)]^T \in \mathbb{R}^{n+1}$. An extended class-$\mathcal{K}$ function $\kappa : \mathbb{R} \rightarrow \mathbb{R}$ is a strictly monotonically increasing function and such that $\kappa(0) = 0$. Vector inequalities $A\vec{x} \leq \vec{b}$ with $\vec{x} \in \mathbb{R}^n$, $A\in \mathbb{R}^{p\times n}$ and $\vec{b} \in \mathbb{R}^p$ are interpreted element-wise. A set $\mathcal{P} = \{\vec{x}\mid Ax\leq \vec{b} \}$ is a \textit{polyhedron} and it is a polytope if it is bounded. Ordered sequences are defined as $(s_i)_{i=1}^{n}$ with $s_{i+1} \geq s_i$, while (unordered) sets are defined as $\{s_i\}_{i=1}^{n}$. For an integer $n\in \mathbb{N}$, let the index set $[[n]] := \{1 ,\ldots n\}$. For a set $\{\vec{x}_k\} \subset \mathbb{R}^{n}$, the \textit{convex hull} is $\text{co}(\{\vec{x}_k\}) := \{\sum_{k}\lambda_k\vec{x}_k \mid \lambda_k \geq 0, \; \sum_{k}\lambda_k=1 \}$. The symbols $0_{n\times m}$ and $\bm{1}_n$ denote a zero matrix and a vector of ones.
\section{Preliminaries}\label{sec:preliminaries}
Consider the continuous-time linear dynamics 
\begin{equation}\label{eq:single agent dynamics}
\dot{\vec{x}} = f(\vec{x},\vec{u}) =A\vec{x} + B\vec{u} + \vec{p},
\end{equation}
where $\vec{x}\in \mathbb{R}^{n} \subseteq \mathbb{X}$ and $\vec{u} \in \mathbb{R}^{m} \subseteq \mathbb{U}$ are state and input vectors, respectively, while $\vec{p}\in \mathbb{R}^{n}$ is a constant vector (e.g. a constant off-set). Let $(A,B)$ be a controllable pair, while $\mathbb{X}$ and $\mathbb{U}$ are bounded convex polytopes of the form
\begin{equation}
\begin{aligned}
    \mathbb{X} := \{ \vec{x}\in \mathbb{R}^{n} \mid D_x \vec{x}\leq \vec{c}_x \},\\
    \mathbb{U} := \{ \vec{u}\in \mathbb{R}^{m} \mid D_u \vec{u}\leq \vec{c}_u \},
\end{aligned}
\end{equation}
where $D_{x}\in \mathbb{R}^{p_x\times n}$, $D_{u}\in \mathbb{R}^{p_u\times m}$, $\vec{c}_u \in \mathbb{R}^{p_{u}}$ and $\vec{c}_{x} \in \mathbb{R}^{p_{x}}$, with $p_x\geq n+1$ and $p_u\geq m+1$. Since $\mathbb{X}$ is a closed and bounded polytope, there exists a finite set of vertices $V_{\mathbb{X}} = \{\vec{\eta}_k\} \subset \mathbb{R}^{n}$, with $n^v_{\mathbb{X}} = |V_{\mathbb{X}}|$, such that $\mathbb{X} = co(\{\vec{\eta}_k\})$. Thus each $\vec{x}\in \mathbb{X}$ is a convex combination $\vec{x}= \sum_{k=1}^{n_{\mathbb{X}}^v} \lambda_k \vec{\eta}_k$, where $\lambda_k\geq 0$ and $\sum_{k=1}^{n_{\mathbb{X}}^v} \lambda_k =1$ \cite[Prop. 2.2]{ziegler2012lectures}. We later leverage the discrete vertex representation of $\mathbb{X}$ to infer invariance properties of time-varying sets defined over $\mathbb{X}$.

We hereafter consider Carathéodory solutions to \eqref{eq:single agent dynamics}, i.e., absolutely continuous functions (or trajectories) $\zeta_x : [t_0,t_1] \rightarrow \mathbb{X}$ that satisfy 
\begin{subequations}
\begin{align}
\zeta_x(t) &= \int_{t_0}^{t} f(\zeta_x(t), \zeta_u(t)) dt + \vec{x}_0, \\
\dot{\zeta}_x(t) &= f(\zeta_x(t), \zeta_u(t)) \quad a.e. \;  t\in [t_0,t_1],\label{eq:dynamics constraint}
\end{align}
\end{subequations}
with initial state $\vec{x}_0 \in \mathbb{X}$ and measurable control input signal $\zeta_u: [t_0,t_1] \rightarrow \mathbb{U}$ \cite[Prop. S1]{cortes2008discontinuous}. Caratheodory solutions are commonly considered in the analysis of non-smooth dynamical systems e.g., when  Lipschitz continuous inputs are not sufficiently rich to enforce stability/invariance properties as in the settings of our work.

\subsection{Signal Temporal Logic}
Signal Temporal Logic (STL) is a predicate logic suitable to define spatial and temporal specification over state signals deriving from dynamical systems such as \eqref{eq:single agent dynamics}  \cite{maler2004monitoring}. Specifically, let a set of \textit{predicate} functions $h: \mathbb{R}^{n} \rightarrow \mathbb{R}$, with level set
\begin{equation}\label{eq:general predicate set}
\mathcal{H} := \{\vec{x} \in \mathbb{X} \mid h(\vec{x}) \geq 0\}.
\end{equation}
The membership of a state $\vec{x}$ in the set $\mathcal{H}$ is logically encoded by the Boolean predicate $\mu^h(\vec{x}) = \mu(h(\vec{x})) :=
\bigl\{\begin{smallmatrix*}
\true &\text{if} \; h(\vec{x})\geq 0 \\
\false &\text{if} \; h(\vec{x})<0 ,
\end{smallmatrix*}$, such that $\mu^h(\vec{x})= \top \Leftrightarrow \vec{x} \in \mathcal{H}$. Using Backus–Naur notation, STL formulas are then recursively defined over a set of predicates according to the grammar: 
\begin{equation}\label{eq:general fragment}
\begin{gathered}
\phi::= \top \mid  \mu^{h} \mid \neg \phi \mid \phi_1 U_{[a,b]} \phi_2 \mid \phi_1 \land \phi_2,\\
\end{gathered}
\end{equation}
where $U$ is the temporal \textit{until} operator, with time interval $[a,b] \subset \mathbb{R}_{\geq 0}$, while $\neg$ and $\land$ represent logical negation and conjunction operators. The disjunction operator $\lor$  derives from these by De Morgans's laws. The temporal \textit{always} and \textit{eventually} operators derive from the \textit{until} as $ G_{[a,b]}\phi = \neg(\top U_{[a,b]} \neg \phi)$ and  $F_{[a,b]}\phi = \top U_{[a,b]} \phi$, respectively. We here consider tasks $\phi$ with bounded time domain \cite{farahani2018shrinking} i.e. formulas whose temporal operators have a bounded domain, for which we define the time horizon of a task $\phi$ as
\begin{subequations}\label{eq:horizon times}
\begin{align}
t_{hr}(\mu^{h}) &= 0, \; t_{hr}(\neg \phi) = t_{hr}(\phi), \\
t_{hr}(T_{[a,b]}\phi) &= t_{hr}(\phi) + b, \; T \in \{G,F\} \\
t_{hr}(\phi_1 \star \phi_2) &= \max\{t_{hr}(\phi_1), t_{hr}(\phi_2)\}, \; \star \in \{\land, \lor\},
\end{align}
\end{subequations}
where De Morgan's laws can be used to infer the time horizon for other operators. A simple example provides some intuition. 
\begin{example}\label{ex:example 1}
 Consider Fig. \ref{fig:RRT expansion example}. The predicate functions $h^c(\vec{x}):= 
\epsilon_r - \|
\vec{x}_c - \vec{x}\|$ and $h^i(\vec{x}):= 
\epsilon_r - \|
\vec{x}_i - \vec{x}\|$ represent a charging area (black battery in Fig. \ref{fig:RRT expansion example}) and a region of interest (warning sign in Fig. \ref{fig:RRT expansion example}), respectively.  The term $\epsilon_r>0$ is a positive radius, while $\vec{x}_c$ and $\vec{x}_i$ denote the position of the charging area and interest region in the workspace, respectively.  Dashed circles represent the level sets $\mathcal{H}^i$ and $\mathcal{H}^c$, where $\mu^{h^i}(\vec{x}) = \top$ and $\mu^{h^c} (\vec{x})= \top$, respectively. A task for the black drone (lower right in Fig. \ref{fig:RRT expansion example}) could be: ``In the next 10 minutes, always visit the charging station with intervals of at most 2 minutes and eventually visit the region of interest and stay there for at least 1 minute". In the STL formalism this can be written for example as $\phi = G_{[0,10]}F_{[0,2]}\mu^{h^c}(\vec{x}) \land F_{[0,10]}G_{[0,1]}\mu^{h^i}(\vec{x})$. In this work, we aim at designing a time-varying set $\mathcal{B}^{\phi}(t)$ (blue sets in Fig. \ref{fig:RRT expansion example}) such that the evolution of the drone state within the set $\mathcal{B}^{\phi}(t)$ guarantees the satisfaction of the task $\phi$. \hfill $\square$
\end{example}

 To characterize the satisfaction of a given STL task, we consider the STL quantitative semantics (see e.g. \cite[Def. 10]{fainekos2009robustness} and \cite[Sec. 2]{donze2013efficient}). Namely, let $(\zeta_x,t)\vDash \phi$ denote that the signal $\zeta_x : \mathbb{R}_{\geq 0} \rightarrow \mathbb{X}$ satisfies $\phi$ starting from the reference time $t\in \mathbb{R}_{\geq 0}$ and let the function  $\rho^{\phi}: \mathcal{F}(\mathbb{R}_{\geq 0},\mathbb{R}^n) \times \mathbb{R}_{\geq 0} \rightarrow  \mathbb{R}$ be recursively defined over a trajectory $\zeta_x$ as 
\begin{subequations}\label{eq:robust semantics}
\begin{align}
\rho^{\mu}(\zeta_x,t)&=h(\zeta_x(t)), \label{eq:predicate robust}\\
\rho^{\neg \phi}(\zeta_x,t)& = -\rho^{\phi}(\zeta_x,t), \label{eq:negation robust}\\
\rho^{F_{[a, b]} \phi}(\zeta_x,t)&=\max _{\tau \in t\oplus [a, b]} \{\rho^\phi\left(\zeta_x, \tau \right)\}, \label{eq:eventually robust}\\
\rho^{G_{[a, b]} \phi}(\zeta_x,t)&= \min _{\tau \in t\oplus[a, b]} \{\rho^\phi\left(\zeta_x, \tau\right)\},\label{eq:always robust}\\
\rho^{\phi_1 U_{[a,b]}\phi_2}(\zeta_x,t) &= \max_{\tau \in t\oplus[a,b]}\label{eq:until robust} \\ 
& \hspace{-1.5cm} \{ \min\{\rho^{\phi_2}(\zeta_x, \tau ), \min_{\tau' \in [t,\tau]} \rho^{\phi_1}(\zeta_x,\tau')\}\},\notag\\
\rho^{\phi_1 \wedge \phi_2}(\zeta_x,t)& = \min \left\{\rho^{\phi_1}(\zeta_x,t), \rho^{\phi_2}(\zeta_x,t)\right\},  \label{eq:conjunction robust} 
\end{align}
\end{subequations}
from which we know the semantic relation $\rho^{\phi}(\zeta_x,t) >0 \Rightarrow (\zeta_x,t)\vDash \phi$ \cite[Prop. 16]{fainekos2009robustness}. Note that the selection of the start time is arbitrary, and $t=0$ is selected as the convention hereafter without loss of generality. Differently from other semantics, quantitative semantics capture the \textit{degree of satisfaction} of a specification $\phi$. Specifically, for a given margin $r>0$, the signal $\zeta_x$ is said to \textit{robustly satisfy} $\phi$ with degree $r$, if $\rho^{\phi}(\zeta_x,0) \geq r>0$, which we denote as $(\zeta_x,0)\vDash_{r} \phi \equiv \rho^{\phi}(\zeta_x,0) \geq r$ such that
\begin{equation}
(\zeta_x,0)\vDash_{r} \phi \equiv \rho^{\phi}(\zeta_x,0) \geq r \Rightarrow (\zeta_x,0)\vDash \phi.
\end{equation}

In this work, we consider predicate functions of the form
\begin{equation}\label{eq:general form predicate}
\begin{aligned}
h(\vec{x})  := \min_{k\in [[n_h]]}\{\vec{d}_k^T \vec{x} +c_k\},\\
\end{aligned}
\end{equation}
for some $\vec{d}_k \in \mathbb{R}^{n}$, $c_k \in \mathbb{R}$ and $n_h\geq 1$. Thus, the set $\mathcal{H}$ in \eqref{eq:general predicate set} is a polyhedron since $h(\vec{x}) \geq 0 \Leftrightarrow D\vec{x} + \vec{c} \geq 0
$
where 
\begin{equation}
    D = \begin{bmatrix}
        \vec{d}_1^T \\
       \vdots \\
         \vec{d}_{n_h}^T \\
    \end{bmatrix}\in \mathbb{R}^{n_h\times n},\;
    \vec{c} = \begin{bmatrix}
       c_1 \\
       \vdots \\
      c_{n_h}\\
    \end{bmatrix} \in \mathbb{R}^{n_h}.
\end{equation}

Linear predicates, as per \eqref{eq:general form predicate}, are commonly considered in the STL literature for their computational tractability while allowing for rich types of specifications (cf. \cite{kurtz2022mixed,raman2015reactive,vasile2017sampling}). Moreover, we consider STL tasks expressed in the fragment  
\begin{subequations}\label{eq:subfragment}
    \begin{gather}
        \varphi ::= T_{[a,b]}\mu^{h} \mid T_{[a,b]}T'_{[a',b']} \mu^{h}, \label{eq:with time}\\
        \phi ::= \varphi \mid \varphi_1 \land \varphi_2,\label{eq:conjunction}\\
        \psi ::= \phi \mid \phi_1 \lor \phi_2 \label{eq:disjunction}
    \end{gather}
\end{subequations}
where $T,T' \in \{G,F\}$. Formulas $\varphi$, as per \eqref{eq:with time}, represent temporally extended specifications over a single predicate based on the \textit{always} or \textit{eventually} operators, or a composition of these. Note that, formulas of type $\mu^{h_1}U_{[a,b]}\mu^{h_2}$ can be expressed in \eqref{eq:subfragment} as $G_{[a,\tau]}\mu^{h_1}\land F_{[\tau,\tau]}\mu^{h_2}$ for some $\tau \in [a,b]$. Likewise, arbitrary nestings of the same temporal operator are part of \eqref{eq:with time} since $T_{[a_1,b_2]}T_{[a_2,b_3]} ...T_{[a_N,b_N]} \mu^{h} \equiv T_{[\sum_{n=1}^N a_n ,\sum_{n=1}^N b_n]} \mu^h$. Formulas $\phi$, as per \eqref{eq:conjunction}, represent conjunctions of formulas $\varphi$, as per \eqref{eq:conjunction}, to compose complex specifications (e.g. Example \ref{ex:example 1}), while formulas of type $\psi$, as per \eqref{eq:disjunction}, represent disjunctions of formulas $\phi$. Informally speaking, the formula $\psi = \lor_k\phi_k$ is satisfied if at least one of the formulas $\phi_k$ is satisfied.

While the fragment in \eqref{eq:subfragment} does not capture the full expressivity of the STL fragment in \eqref{eq:general fragment}, we focus on \eqref{eq:subfragment} for the following reasons. First, synthesizing trajectories that satisfy general STL specifications is an NP-hard problem, typically tackled using Mixed-Integer Linear Programming (MILP). Although MILP solvers are sound and complete, their computational demands are often prohibitive, even for formulas within the fragment \eqref{eq:subfragment}, making real-time planning infeasible. Regarding the absence of the negation operator in \eqref{eq:subfragment} — which is commonly used to enforce safety properties (e.g., ``always avoid region $\mathcal{H}$") — this limitation does not significantly reduce expressivity in our settings since safety requirements will be handled at the sampling level by rejecting trajectories entering unsafe regions of the workspace, e.g. obstacles. At the same time, we admittedly can handle the $\lor$ operator only at a high level of the formula such that formulas of type $F_{[a,b]}G_{[a',b']} (\mu^{h_1} \lor \mu^{h_2})$ are not within fragment \eqref{eq:subfragment} and we leave this extension as future work.

To summarize, compared to planning via MILP solvers (e.g. \cite{kurtz2022mixed}),  we trade off a reduction in expressivity,  for a simple and effective implementation that allows for real-time planning of trajectories satisfying STL tasks in the fragment \eqref{eq:subfragment}, from which a rich set of behaviors of common interest (e.g. rescue, exploration, and patrolling) can be obtained.
\subsection{Viability and forward invariance of set-valued maps}

As pointed out in the introduction, we propose to adopt a forward invariance perspective over the satisfaction of STL specifications from fragment \eqref{eq:subfragment} on the same line of \cite{lindemann_control_2019,charitidou2021barrier}. Specifically, we consider encoding STL tasks from the fragment \eqref{eq:subfragment} into a time-varying set derived as the level set of a non-smooth Control Barrier Function (CBF) \cite{glotfelter2019hybrid} of the form $\mathfrak{b}: \mathbb{R}^{n}\times [t_0,t_1] \rightarrow \mathbb{R}$, for some interval $[t_0,t_1] \subset \mathbb{R}_{\geq 0}$, with level set $\mathcal{B} : [t_0,t_1] \rightarrow 2^{\mathbb{X}}$ defined as
\begin{equation}\label{eq:general level set}
\mathcal{B}(t) := \{\vec{x}\in \mathbb{X} \mid \mathfrak{b}(\vec{x},t)\geq 0 \}.\\
\end{equation}
We consider the function $\mathfrak{b}$ to be Lipschitz continuous and concave on $\mathbb{R}^{n}$, while it is piece-wise linear on $[t_0,t_1]$, with a finite sequence of discontinuities $(s_i)_{i=1}^{n_s}$ such that $s_1 = t_0$ and $s_{n_s} = t_{1}$. We provide an analytical form of $\mathfrak{b}$ later in Section \ref{sec:stl tasks}, from which these assumptions will be clarified. For the time-varying set $\mathcal{B}(t)$, we often consider the left limit defined as $\lim_{\tau \rightarrow^{-} t }\mathcal{B}(\tau) = \{\vec{x}\in \mathbb{X} \mid \lim_{\tau \rightarrow^{-} t }\mathfrak{b}(\vec{x},\tau)\geq 0 \}$, which exists for all $t\in [t_0,t_1]$, as per piece-wise linearity of the function $\mathfrak{b}(\cdot, t)$, and we note that in general, $\lim_{\tau \rightarrow^{-} s_i }\mathcal{B}(\tau) \neq \mathcal{B}(s_i)$ due to the time discontinuity at each $s_i$. Moreover, $\mathcal{B}(t)$ is convex for every time $t\in [t_0,t_1]$, by the concavity of $\mathfrak{b}(\vec{x}, \cdot)$ and convexity of $\mathbb{X}$ \cite[Sec. 3.1.6]{boyd2004convex}. 

A pivotal aspect in our planning framework is the ability to synthesize dynamically feasible trajectories that evolve within a time-varying set of the form \eqref{eq:general level set}, which we ought to design in order to satisfy a given STL task. The notion of forward-invariance of set-valued maps is thus introduced.
\begin{definition}\label{def:forward invariance}
   A time-varying set $\mathcal{B}: [t_0,t_1] \rightarrow 2^{\mathbb{X}}$ is \textit{forward invariant} w.r.t \eqref{eq:single agent dynamics} over $[t_0,t_1]$, if for every $\vec{x}_0 \in \mathcal{B}(t_0)$, there exists an input signal $\zeta_u : [t_0,t_1] \rightarrow \mathbb{U}$, with corresponding solution $\zeta_x : [t_0,t_1] \rightarrow \mathbb{X}$ to \eqref{eq:single agent dynamics}, such that $\zeta_x(t_0)= \vec{x}_0$ and $\zeta_x(t)\in \mathcal{B}(t), \; \forall t \in [t_0,t_1]$.
\end{definition}

Due to the switching sequence $(s_i)_{i=1}^{n_s}$ induced by the function $\mathfrak{b}$, an essential precondition for forward invariance of the set $\mathcal{B}(t)$ is the notion of \textit{viability}.
\begin{definition}\label{def:vaiable}
    A time-varying set $\mathcal{B}: [t_0,t_1] \rightarrow 2^{\mathbb{X}}$ is \textit{viable} if $\mathcal{B}(t) \neq \emptyset$ and $\lim_{\tau \rightarrow^{-} t } \mathcal{B}(\tau) \subseteq  \mathcal{B}(t),\; \forall t \in [t_0,t_1]$.  
\end{definition}

The notion of viability ensures that a time-varying set $\mathcal{B}(t)$ is well behaved at the switching times $(s_i)_{i=1}^{n_s}$ according to the following intuition. If we let $\zeta_{x} : [t_0,t_1] \rightarrow \mathbb{X}$ be an absolutely continuous solution to \eqref{eq:single agent dynamics} such that $\lim_{\tau \rightarrow^{-} s_i} \zeta_x(\tau) \in \lim_{\tau \rightarrow^{-} s_i} \mathcal{B}(\tau)$ for any $s_i$ in the sequence $(s_i)_{i=1}^{n_s}$, then $\zeta_x(s_i) \in \mathcal{B}(s_i)$ even if $\lim_{\tau \rightarrow^{-} s_i} \mathcal{B}(\tau) \neq \mathcal{B}(s_i)$ such that $\zeta_x$ can not \textit{jump} outside $\mathcal{B}(t)$ at time $s_i$ (c.f. \cite[Def. 2]{glotfelter2019hybrid}, \cite[Sec. III]{lindemann_control_2019}). Hereafter, we term the function $\mathfrak{b}$ a \textit{candidate} Control Barrier Function (cCBF) over $[t_0,t_1]$ if its level set is viable. 

We next establish a sufficient condition under which $\mathcal{B}(t)$, as per \eqref{eq:general level set}, is guaranteed to be forward invariant w.r.t. the dynamics in \eqref{eq:single agent dynamics}. This problem was originally studied in  \cite{glotfelter2019hybrid} from which the next Theorem \ref{thm:forward invariance non smooth theorem} is inspired and adapted to the notation of our settings. Namely, let the generalized gradient $\partial \mathfrak{b}(\vec{x},t) : \mathbb{R}^{n}\times (t_0,t_1) \rightarrow 2^{\mathbb{R}^{n+1}}$ of $\mathfrak{b}$ be defined as
\begin{equation}\label{eq:generalized gradient}
\begin{aligned}
&\partial \mathfrak{b}(\vec{x},t) = \\
& co \left(\{ \lim_{i\rightarrow \infty} \nabla \mathfrak{b}(\vec{x}_i,t_i):(\vec{x}_i,t_i) \rightarrow (\vec{x},t), (\vec{x}_i,t_i) \in \mathcal{N} \cup\mathcal{N}_h \}\right),
\end{aligned}
\end{equation} 
where $\mathcal{N}$ is an arbitrary set of measure zero and $\mathcal{N}_h$ is the set of measure zero where $\mathfrak{b}$ is not differentiable \cite[Thm. 2.5.1]{clarke1990optimization}. Note that the generalized gradient is defined over open intervals, which is why we consider the open time interval $(t_0,t_1)$. Furthermore, let the \textit{extended} state-time dynamics 
\begin{equation}
\begin{gathered}
\dot{\vec{z}} = \bar{A} \vec{z} + \bar{B} \vec{u} + \bar{\vec{p}}, \\
\bar{A} = \begin{bmatrix} A &  0_{n\times 1} \\ 0_{1\times n}  &0 \end{bmatrix},\; \bar{B} =  \begin{bmatrix} B \\ 0_{1 \times m}\end{bmatrix}, \; \bar{\vec{p}} = \begin{bmatrix} \vec{p} \\ 1 \end{bmatrix} ,
\end{gathered}
\end{equation}
where $\vec{z} = [\vec{x}^T\; t]^T \in \mathbb{R}^{n+1}$ and let the weak set-valued generalized Lie derivative \cite[Eq. 3]{glotty} of $\mathfrak{b}$ be defined as
\begin{equation}\label{eq:generalized lie derivative}
\begin{aligned}
\mathcal{L}_{\mathfrak{b}}(\vec{x},t,\vec{u}) &:= \{ \vec{\nu}^T (\bar{A} \vec{z} + \bar{B} \vec{u} + \bar{\vec{p}}) \mid \;  \vec{\nu} \in \partial \mathfrak{b}(\vec{x},t)\}. \\
\end{aligned}
\end{equation}
\begin{theorem}\label{thm:forward invariance non smooth theorem}
Let $\mathfrak{b}: \mathbb{R}^{n} \times [t_0,t_1] \rightarrow \mathbb{R}$ be a cCBF and let the sequence $(s_i)_{i=1}^{n_s}$, with $s_1 = t_0$ and $s_{n_s}=t_1$, where $\mathfrak{b}$ is discontinuous. If for every $\vec{x}_0 \in \mathbb{X}$ there exists a measurable input signal $\zeta_u : [t_0,t_1] \rightarrow \mathbb{U}$ and a class-$\mathcal{K}$ function $\kappa : \mathbb{R} \rightarrow \mathbb{R}$, such that for all $i\in [[n_s-1]]$
\begin{equation}\label{eq:sufficient forward invariance condition}
\hspace{-0.25cm}\min \mathcal{L}_{\mathfrak{b}}(\zeta_x(t),t,\zeta_u(t)) \geq - \kappa(\mathfrak{b}(\zeta_x(t),t)), \forall t\in (s_{i},s_{i+1})
\end{equation}
with $\zeta_x:[t_0,t_1] \rightarrow \mathbb{X}$ being the solution to \eqref{eq:single agent dynamics} under $\zeta_u$ and such that 
$\zeta_x(0)=\vec{x}_0 $, then $\mathcal{B}(t)$ is forward invariant on $[t_0,t_1] $ and $\zeta_x(t) \in \mathcal{B}(t), \forall t\in [t_0,t_1]$.
\end{theorem}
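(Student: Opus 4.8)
The plan is to monitor the scalar signal $y(t):=\mathfrak{b}(\zeta_x(t),t)$ along the closed-loop trajectory and to prove $y(t)\geq 0$ for all $t\in[t_0,t_1]$, since this is exactly $\zeta_x(t)\in\mathcal{B}(t)$ and hence forward invariance in the sense of Definition~\ref{def:forward invariance}. I would argue by induction over the continuity intervals $(s_i,s_{i+1})$ cut out by the switching sequence, the base case being $\vec{x}_0\in\mathcal{B}(t_0)$, i.e. $y(t_0)=\mathfrak{b}(\vec{x}_0,t_0)\geq 0$, and the inductive mechanism being the combination of a differential inequality on the open intervals with the viability property at the switching instants.

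On a fixed open interval $(s_i,s_{i+1})$ the map $\mathfrak{b}(\cdot,t)$ is continuous in $t$ (piece-wise linearity places all time-discontinuities at the $s_i$), and since $\zeta_x$, and therefore the extended state $\vec{z}=[\zeta_x^T\; t]^T$, is absolutely continuous while $\mathfrak{b}$ is Lipschitz, the composition $y$ is absolutely continuous on $(s_i,s_{i+1})$ and differentiable almost everywhere there. The key nonsmooth chain-rule step is to bound this derivative from below by the generalized Lie derivative \eqref{eq:generalized lie derivative}: invoking Clarke's nonsmooth calculus \cite[Thm. 2.5.1]{clarke1990optimization} within the framework of \cite{glotty,glotfelter2019hybrid}, for almost every $t$ the derivative $\dot y(t)$ equals $\vec{\nu}^T(\bar{A}\vec{z}+\bar{B}\zeta_u+\bar{\vec{p}})$ for every $\vec{\nu}\in\partial\mathfrak{b}(\zeta_x(t),t)$, and in particular $\dot y(t)\geq \min\mathcal{L}_{\mathfrak{b}}(\zeta_x(t),t,\zeta_u(t))$.

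Combining this lower bound with the hypothesis \eqref{eq:sufficient forward invariance condition} yields $\dot y(t)\geq -\kappa(y(t))$ for almost every $t\in(s_i,s_{i+1})$. Since $\kappa$ is class-$\mathcal{K}$ with $\kappa(0)=0$, the constant $w\equiv 0$ solves $\dot w=-\kappa(w)$, so if the interval is seeded with $y(s_i)\geq 0$ the comparison lemma gives $y(t)\geq w(t)=0$ on $[s_i,s_{i+1})$; taking the left limit then yields $\lim_{\tau\to^- s_{i+1}}\zeta_x(\tau)\in\lim_{\tau\to^- s_{i+1}}\mathcal{B}(\tau)$. To cross the switching time I would invoke viability (Definition~\ref{def:vaiable}): because $\zeta_x$ is continuous, $\zeta_x(s_{i+1})=\lim_{\tau\to^- s_{i+1}}\zeta_x(\tau)$, and $\lim_{\tau\to^- s_{i+1}}\mathcal{B}(\tau)\subseteq\mathcal{B}(s_{i+1})$ gives $\zeta_x(s_{i+1})\in\mathcal{B}(s_{i+1})$, i.e. $y(s_{i+1})\geq 0$. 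This re-seeds the next interval, and iterating over $i\in[[n_s-1]]$ establishes $y\geq 0$ on all of $[t_0,t_1]$, proving the claim.

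I expect the main obstacle to be the rigorous justification of the nonsmooth chain-rule step in the second paragraph. Because $\mathfrak{b}$ is only Lipschitz and concave in $\vec{x}$ (and merely piece-wise linear in $t$), the generalized gradient $\partial\mathfrak{b}$ is genuinely set-valued at the kinks of the $\min$-type predicates in \eqref{eq:general form predicate}, so one must verify carefully---controlling the measure-zero exceptional sets $\mathcal{N}\cup\mathcal{N}_h$ appearing in \eqref{eq:generalized gradient}---that the pointwise time-derivative of $y$ is dominated by $\min\mathcal{L}_{\mathfrak{b}}$ uniformly off a single null set, rather than by some unfavorable selection from the generalized gradient. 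A secondary point requiring care is the right-continuity convention of $\mathfrak{b}(\cdot,t)$ at each $s_i$, which is what makes the value $y(s_i)\geq 0$ obtained from viability actually seed the non-negativity propagation on the subsequent open interval.
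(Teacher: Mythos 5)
The paper does not actually prove Theorem~\ref{thm:forward invariance non smooth theorem}: it is stated as an adaptation of a result from the cited reference \cite{glotfelter2019hybrid}, so there is no in-paper proof to compare against. Your sketch reconstructs the standard argument behind that result, and its structure is sound: reduce to the scalar signal $y(t)=\mathfrak{b}(\zeta_x(t),t)$, obtain $\dot y\geq-\kappa(y)$ a.e.\ on each open interval $(s_i,s_{i+1})$ via the nonsmooth chain rule, propagate nonnegativity by comparison (which, as you use it, needs no Lipschitzness of $\kappa$ --- strict monotonicity and $\kappa(0)=0$ suffice for the one-sided bound), and bridge the switching instants with viability and continuity of $\zeta_x$. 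The two caveats you flag are indeed the only places where rigor is needed, and both close in this setting: the claim that $\dot y(t)=\vec{\nu}^T\dot{\vec z}(t)$ for \emph{every} $\vec{\nu}\in\partial\mathfrak{b}$ (rather than merely for some selection) is not true for general Lipschitz functions, but holds here because on each slab $\mathbb{X}\times(s_i,s_{i+1})$ the function $\mathfrak{b}$ is a minimum of affine functions of $(\vec{x},t)$, hence concave, so $-\mathfrak{b}$ is Clarke regular and the regular chain rule of \cite{cortes2008discontinuous,clarke1990optimization} applies; and the seeding of each interval from $y(s_i)\geq0$ is legitimate because the constructions in Sections~\ref{sec:stl tasks}--\ref{sec:forward invariance verification} make $\mathfrak{b}(\vec x,\cdot)$ right-continuous at each $s_i$ (the switch map $\mathcal{I}^{\phi}$ is constant on half-open intervals $[\beta_{l-1},\beta_l)$). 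With those two points made explicit, your proof is complete and consistent with the cited source.
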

\section{Problem Formulation}\label{sec:problem}
Consider the state set $\mathbb{X}$ and a set of obstacles $\mathcal{O}_k \subset \mathbb{X},\; k\in [[n_o]]$, for some $n_o \geq 0$ (e.g., black rectangles in Fig. \ref{fig:RRT expansion example}) such that a trajectory $\zeta_x : [t_0,t_1] \rightarrow \mathbb{X}$ is \textit{safe} if for all times $t \in [t_0,t_1]$ it holds $\zeta_x(t) \not\in  \cup_{k \in [[n_o]]}\mathcal{O}_k$ (i.e., no collisions occur). Then the problem we approach is formalized as follows:

\begin{problem}\label{prob:problem 1}
    Consider the dynamical system \eqref{eq:single agent dynamics} and an STL task $\phi$ as per \eqref{eq:conjunction} with maximum horizon $t_{hr}(\phi)$. Design an algorithm that returns a trajectory $\zeta_x : [0,t_{hr}(\phi)] \rightarrow \mathbb{X}$  such that $\zeta_x$ is safe and $(\zeta_x,0) \models_{r} \phi$ with robustness degree $r>0$. 
\end{problem}

To approach Problem \ref{prob:problem 1}, we develop the following steps. In the next Sections \ref{sec:stl tasks}-\ref{sec:forward invariance verification}, an algorithmic approach is defined to design a time-varying set $\mathcal{B}^{\phi}(t)$, in the form of \eqref{eq:general level set}, from a given task $\phi$, as per \eqref{eq:conjunction}, such that 1) the set $\mathcal{B}^{\phi}(t)$ is forward invariant as per Def. \ref{def:forward invariance}, (i.e., we show the existence of a control law that maintains systems \eqref{eq:single agent dynamics} within the set $\mathcal{B}^{\phi}(t)$ at every time) 2) any trajectory of system \eqref{eq:single agent dynamics} evolving in $\mathcal{B}^{\phi}(t)$ also satisfies $\phi$ with a given robustness $r>0$.

When considering the task disjunction $\psi = \lor_k \phi_k$, as per \eqref{eq:disjunction}, it is sufficient to only satisfy one of the tasks $\phi_k$ in order to satisfy $\psi$. Thus, we design a set $\mathcal{B}_k^{\phi}(t)$ for each $\phi_k$ in parallel, and the task achievable with highest robustness is selected to satisfy $\psi$, such that Problem \ref{prob:problem 1} naturally genealizes to the satisfaction of tasks of type $\psi$ as per \eqref{eq:disjunction}.

Eventually, in Section \ref{sec:rrt}, we generate trajectories that satisfy Problem \ref{prob:problem 1} via a modified implementation of RRT$^\star$. Namely, we enforce the STL task satisfaction by generating a tree of sampled trajectories that evolves within the previously designed time-varying sets, while safety is enforced by rejecting trajectories intersecting the obstacles $\mathcal{O}_k \subset \mathbb{X},\; k\in [[n_o]]$.
\section{From STL tasks to time-varying sets : Viability}\label{sec:stl tasks}
Leveraging the notion of \textit{parametric} Control Barrier Functions (CBF), in this section we propose an algorithmic approach to design a viable time-varying set $\mathcal{B}^{\phi}(t)$, encoding an STL task $\phi$, as per \eqref{eq:general level set},  as the level set of a parametric CBF. We start first by considering the construction approach for a single task $\varphi$, as per \eqref{eq:with time}, to then generalize to the conjunction $\phi$, as per \eqref{eq:conjunction}. When considering a task $\psi = \lor_k \phi_k$, obtained as the disjunction of multiple tasks $\phi_k$, we apply the same construction approach separately for each $\phi_k$.

\subsection{Parametric Control Barrier Functions}
For a task $\varphi$, let the family of \textit{parametric} CBFs $\mathfrak{b}^{\varphi}: \mathbb{R}^n \times [0,\beta] \times \Theta \rightarrow \mathbb{R}$ with $\beta \geq 0$, defined as \footnote{The notation  $\mathfrak{b}^{\varphi}(\cdot,\cdot|\vec{\vartheta})$ and $\mathcal{B}^{\varphi}(\cdot |\vec{\vartheta})$ applies to highlight the parametric dependence from the parameter vector $\vec{\vartheta}$.}
\begin{subequations}\label{eq:parametric forms}
\begin{align}
    &\mathfrak{b}^{\varphi}(\vec{x},t | \vec{\vartheta} ) = h(\vec{x}) +  \gamma^{\varphi}(t|\vec{\vartheta}) \label{eq:open form of the barriers},\\
    &\mathcal{B}^{\varphi}(t| \vec{\vartheta}) = \{ \vec{x}\in \mathbb{X} \mid \mathfrak{b}^{\varphi}(\vec{x}, t |\vec{\vartheta}) \geq 0 \}, \label{eq:parametric level set} 
\end{align}
\end{subequations}
where $h$ is the predicate associated to $\varphi$, as per \eqref{eq:general form predicate}, and the function $\gamma^{\varphi} : [0,\beta] \times \Theta \rightarrow \mathbb{R}$ is a continuous piece-wise linear function over the sequence of switches $(s_i)_{i=1}^{3} = (0,\alpha, \beta)$ with $\beta \geq \alpha \geq 0$ such that for $i\in \{1,2\}$ we have
\begin{equation}\label{eq:gamma general}
   \gamma^{\varphi}(t|\vec{\vartheta}) = e_i(\vec{\vartheta})\; t + g_i(\vec{\vartheta}), \; \forall t\in [s_{i},s_{i+1}], 
\end{equation}
with  
\begin{equation}\label{eq:coefficients of time compact}
\begin{array}{lll}
e_1(\vec{\vartheta}) = -\frac{\bar{\gamma}}{\alpha},  &  g_1(\vec{\vartheta}) = \bar{\gamma} -r, &t\in [s_1,s_{2}] =[0,\alpha],\\
e_2(\vec{\vartheta}) = 0,  &  g_2(\vec{\vartheta}) = - r, & t\in [s_2,s_{3}] = [\alpha,\beta].\\
\end{array}
\end{equation}
 An example of $\gamma^{\varphi}$ is given in Fig. \ref{fig:gamma general image}, where $\gamma^{\varphi}$ intuitively represents a functional encoding of the temporal operators $G_{[a,b]}$ and $F_{[a,b]}$, such that the reader should associate the interval $[\alpha,\beta]$ with the interval $[a,b]$ of the corresponding temporal operator.
The parameters of $\gamma^{\varphi}$ (and thus $\mathfrak{b}^{\varphi}$) are stacked in the vector $\vec{\vartheta} = [\bar{\gamma},  r ] ^T \in \Theta \subset \mathbb{R}^2 $. By letting the \textit{robust} level set of $h$ be defined as
\begin{equation}
\mathcal{H}^{r} := \{ \vec{x}\in \mathbb{X} \mid h(\vec{x}) -r \geq 0 \},
\end{equation}
we define $\Theta \subset \mathbb{R}^{2}$ as
\begin{equation}\label{eq:parameters bound}
\Theta := \left\{ \vec{\vartheta} \in \mathbb{R}^2 \; \mid  
\begin{array}{c}
     \bar{\gamma} \geq 0, \; r>0,\; \mathcal{H}^{r} \neq \emptyset
\end{array}\right\},
\end{equation}
from which we derive $\mathcal{H}^{r} \subset \mathcal{H}$, since $r>0$. Since $\gamma^{\varphi}$ is defined by two piece-wise linear sections, it will be useful to introduce the index map $\Upsilon : [0,\beta] \rightarrow \{1,2\}$ as
\begin{equation}\label{eq:upsilon}
\Upsilon(t) = \left\{\begin{array}{ll}
1 & \text{if} \; t\in [s_1,s_2)\\
2 & \text{if} \; t\in [s_2,s_3]\\
\end{array},\right.
\end{equation}
defining which of the two linear sections of $\gamma^{\varphi}$, a given time instant $t$ lies in (see right panel in Fig. \ref{fig:gamma general image}). At time instant $s_2$, we use the convention $\Upsilon(t)=2$ without loss of generality since $\gamma^{\varphi}$ is continuous at $s_2$. By the form of the predicate functions in \eqref{eq:general form predicate}, each $\mathfrak{b}^{\varphi}$ in \eqref{eq:parametric forms} is explicitly written as $\mathfrak{b}^{\varphi}(\vec{x},t|\vec{\vartheta}) = \min_{k\in [[n_h]]}\{ \vec{d}_k^T \vec{x} + c_k\} + \gamma^{\varphi}(t|\vec{\vartheta})$, and the level set $\mathcal{B}^{\varphi}(t|\vec{\vartheta})$ is a time-varying polyhedron, which is written in matrix form as
\begin{equation}\label{eq:polyhedral representation}
\mathcal{B}^{\varphi}(t|\vec{\vartheta}) = \left\{\vec{x} \in \mathbb{X} \mid  E^i(\vec{\vartheta}) \begin{bmatrix}
     \vec{x}  \\
     t 
\end{bmatrix} + \vec{g}^i(\vec{\vartheta}) \geq 0,\, i = \Upsilon(t) \right\},  
\end{equation}
where $E^i(\vec{\vartheta}) \in \mathbb{R}^{n_h \times (n+1)}$, and $\vec{g}^i(\vec{\vartheta}) \in \mathbb{R}^{n_h}$ are defined as
\begin{equation}\label{eq:normal vectors stack}
E^i(\vec{\vartheta}) = \begin{bmatrix}
     D & \vec{1}_{n_h} \cdot e_i(\vec{\vartheta}) \\
\end{bmatrix} ,\; 
\vec{g}^i(\vec{\vartheta}) = \vec{1}_{n_h} \cdot g_i(\vec{\vartheta}) + \vec{c},
\end{equation}
with index $i\in \{1,2\}$ indicating either of the two linear sections of $\gamma^{\varphi}$. As the next proposition shows, the set $\mathcal{B}^{\varphi}(t|\vec{\vartheta})$ is viable, by construction, for any $\vec{\vartheta} \in \Theta$.
\begin{proposition}\label{prop:important properties}
    For each function $\mathfrak{b}^{\varphi}$ it holds :
    \begin{enumerate}
        \item $\lim_{\tau \rightarrow^{-} t} \mathcal{B}^{\varphi}(\tau|\vec{\vartheta}) = \mathcal{B}^{\varphi}(t|\vec{\vartheta}),\; \forall t\in [0,\beta]$.
        \item $\mathcal{B}^{\varphi}(t|\vec{\vartheta})$ is viable.
        \item $\mathcal{B}^{\varphi}(t|\vec{\vartheta}) \supseteq \mathcal{B}^{\varphi}(\tau|\vec{\vartheta}),\; \forall \tau \in [t,\beta],\; \forall t\in [0,\beta]$.
    \end{enumerate}
\end{proposition}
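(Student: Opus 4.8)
The plan is to reduce all three claims to two elementary properties of the scalar function $\gamma^{\varphi}(\cdot|\vec{\vartheta})$: that it is continuous and monotonically non-increasing on $[0,\beta]$, with minimum value $-r$ attained on $[\alpha,\beta]$. From \eqref{eq:coefficients of time compact} the slopes satisfy $e_1 = -\bar{\gamma}/\alpha \leq 0$ and $e_2 = 0$ (using $\bar{\gamma}\geq 0$ from \eqref{eq:parameters bound}), so $\gamma^{\varphi}$ is non-increasing; evaluating at the switches gives $\gamma^{\varphi}(0|\vec{\vartheta}) = \bar{\gamma}-r$, $\gamma^{\varphi}(\alpha|\vec{\vartheta}) = -r$ from both sections (confirming continuity at $\alpha$ and hence continuous piecewise-linearity overall), and $\gamma^{\varphi}(\beta|\vec{\vartheta}) = -r$. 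Since $\mathfrak{b}^{\varphi}(\vec{x},t|\vec{\vartheta}) = h(\vec{x}) + \gamma^{\varphi}(t|\vec{\vartheta})$ separates state and time additively, membership in $\mathcal{B}^{\varphi}(t|\vec{\vartheta})$ is equivalent to the scalar inequality $h(\vec{x}) \geq -\gamma^{\varphi}(t|\vec{\vartheta})$, and each claim becomes a statement about the threshold $-\gamma^{\varphi}(t|\vec{\vartheta})$.

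For claim 3, I would fix $t$ and take any $\tau \in [t,\beta]$. Monotonicity of $\gamma^{\varphi}$ yields $-\gamma^{\varphi}(\tau|\vec{\vartheta}) \geq -\gamma^{\varphi}(t|\vec{\vartheta})$, so any $\vec{x}$ with $h(\vec{x})\geq -\gamma^{\varphi}(\tau|\vec{\vartheta})$ also satisfies $h(\vec{x}) \geq -\gamma^{\varphi}(t|\vec{\vartheta})$, whence $\mathcal{B}^{\varphi}(\tau|\vec{\vartheta}) \subseteq \mathcal{B}^{\varphi}(t|\vec{\vartheta})$. For claim 1, I would invoke continuity of $\gamma^{\varphi}$: for fixed $\vec{x}$ the map $t\mapsto \mathfrak{b}^{\varphi}(\vec{x},t|\vec{\vartheta})$ is continuous, so $\lim_{\tau \rightarrow^{-} t}\mathfrak{b}^{\varphi}(\vec{x},\tau|\vec{\vartheta}) = \mathfrak{b}^{\varphi}(\vec{x},t|\vec{\vartheta})$; substituting this into the definition of the left-limit set gives $\lim_{\tau \rightarrow^{-} t}\mathcal{B}^{\varphi}(\tau|\vec{\vartheta}) = \mathcal{B}^{\varphi}(t|\vec{\vartheta})$. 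This is precisely the point at which the switching sequence $(0,\alpha,\beta)$ induces \emph{no} jump, in contrast to the general cCBF discussion around \eqref{eq:general level set}.

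Claim 2 then follows by assembling the pieces against Definition \ref{def:vaiable}. The inclusion $\lim_{\tau \rightarrow^{-} t}\mathcal{B}^{\varphi}(\tau|\vec{\vartheta}) \subseteq \mathcal{B}^{\varphi}(t|\vec{\vartheta})$ is immediate from the equality in claim 1. For non-emptiness, I would evaluate claim 3 at $\tau=\beta$ to obtain $\mathcal{B}^{\varphi}(\beta|\vec{\vartheta}) \subseteq \mathcal{B}^{\varphi}(t|\vec{\vartheta})$ for every $t$, and observe that since $\gamma^{\varphi}(\beta|\vec{\vartheta})=-r$ the terminal set is exactly $\mathcal{B}^{\varphi}(\beta|\vec{\vartheta}) = \{\vec{x}\in\mathbb{X}\mid h(\vec{x})\geq r\} = \mathcal{H}^{r}$, which is non-empty by the constraint $\mathcal{H}^{r}\neq\emptyset$ built into $\Theta$ in \eqref{eq:parameters bound}. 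Hence $\emptyset \neq \mathcal{H}^{r} \subseteq \mathcal{B}^{\varphi}(t|\vec{\vartheta})$ for all $t\in[0,\beta]$, completing viability. I do not expect a genuine obstacle: the proposition is essentially a bookkeeping consequence of the signs of the slopes and the additive structure of $\mathfrak{b}^{\varphi}$. The only step warranting care is the non-emptiness argument, where one must recognize that the terminal set coincides with the robust level set $\mathcal{H}^{r}$ and invoke the $\mathcal{H}^{r}\neq\emptyset$ condition from the parameter domain, rather than attempting to verify emptiness time-by-time.
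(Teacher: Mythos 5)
Your proposal is correct and follows essentially the same route as the paper: claim 1 from continuity of $\gamma^{\varphi}$, claim 3 from its monotone decrease, and viability from claim 1 together with the inclusion $\mathcal{H}^{r}\subseteq \mathcal{B}^{\varphi}(t|\vec{\vartheta})$ and the condition $\mathcal{H}^{r}\neq\emptyset$ from $\Theta$. The only cosmetic difference is that the paper obtains non-emptiness directly from the bound $\gamma^{\varphi}(t|\vec{\vartheta})\geq -r$ rather than by evaluating claim 3 at $\tau=\beta$, which amounts to the same thing.
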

\begin{proof}
1) By construction $\mathfrak{b}^{\varphi}$ is continuous since $h$ is continuous \cite[Prop. 7]{cortes2008discontinuous} and $\gamma^{\varphi}$ is continuous by construction, such that $\lim_{\tau \rightarrow^{-} t }\mathcal{B}^{\varphi}(\tau|\vec{\vartheta}) = \{\vec{x}\in \mathbb{X} \mid \lim_{\tau \rightarrow^{-} t }\mathfrak{b}(\vec{x},\tau|\vec{\vartheta})\geq 0 \} =  \{\vec{x}\in \mathbb{X} \mid \mathfrak{b}(\vec{x},t|\vec{\vartheta})\geq 0 \}= \mathcal{B}^{\varphi}(t|\vec{\vartheta})$. 2) From \eqref{eq:parameters bound}, the set $\mathcal{H}^{r} \neq \emptyset$, and, from the fact that $\bar{\gamma} \geq 0$, we have $\gamma^{\varphi}(t|\vec{\vartheta}) \geq -r, \forall t\in [0,\beta]$. Thus  $\mathfrak{b}(\vec{x},t) = h(\vec{x}) + \gamma^{\varphi}(t|\vec{\vartheta}) \geq h(\vec{x})-r$ for all $(\vec{x},t) \in \mathbb{X}\times [0,\beta]$, concluding that $\mathcal{B}^{\varphi}(t|\vec{\vartheta}) \supseteq \mathcal{H}^{r} \neq \emptyset, \; \forall t\in [0,\beta]$. Together with 1) this proves the viability of $\mathcal{B}^{\varphi}(t|\vec{\vartheta})$ as per Def. \ref{def:vaiable}. 3) Since $\gamma^{\varphi}(t|\vec{\vartheta})$ is monotonically decreasing on $[0,\beta]$, then for any $\vec{x} \in \mathbb{X}$ and $0  \leq t \leq \tau \leq \beta$, we have $\mathfrak{b}^{\varphi}(\vec{x},\tau | \vec{\vartheta} ) \leq \mathfrak{b}^{\varphi}(\vec{x},t | \vec{\vartheta})$. Thus $\mathcal{B}^{\varphi}(t|\vec{\vartheta}) \supseteq \mathcal{B}^{\varphi}(\tau|\vec{\vartheta}),\; \forall \tau\in [t,\beta], \; \forall t\in [0,\beta]$.
\end{proof}

\begin{figure}
    \centering
    \includegraphics[width=\linewidth]{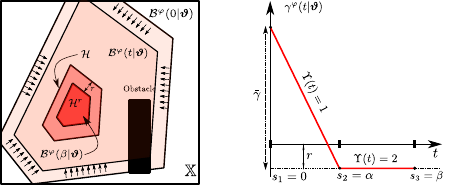}
    \caption{(left) Representation of the time-varying set $\mathcal{B}^{\varphi}(t| \vec{\vartheta})$ shrinking over the interval $[0,\beta]$. (right) Representation of the parametric function $\gamma^{\varphi}(t|\vec{\vartheta})$ with switching times $0,\alpha$ and $\beta$.} 
    \label{fig:gamma general image}
\end{figure}

Informally, Prop. \ref{prop:important properties} not only guarantees viability of $\mathcal{B}^{\varphi}(t|\vec{\vartheta})$, but also that the set is monotonically \textit{shrinking} toward the robust predicate level set $\mathcal{H}^{r}$ (left panel in Fig. \ref{fig:gamma general image}). This is expected from the fact that $\gamma^{\varphi}$ is monotonically decreasing for any choice of the parameters $\vec{\vartheta}$ in the set $\Theta$, as per \eqref{eq:parameters bound}. We highlight that the parameter $r$, representing the robustness of satisfaction for the task $\varphi$, should be chosen to be positive and such that $\mathcal{H}^{r} \neq \emptyset$ as per \eqref{eq:parameters bound}. This is indeed a necessary condition to have at least one state $\vec{x}\in \mathbb{X}$ where $h(\vec{x}) \geq r >0$, leading to the robust satisfaction of $\varphi$ as per the robust semantics in \eqref{eq:robust semantics}. From viability of $\mathcal{B}^{\varphi}(t|\vec{\vartheta})$, it follows directly that each barrier function $\mathfrak{b}^{\varphi}$ is a cCBF on $[0,\beta]$. 

We now present a set of rules to define the switching times $\alpha$ and $\beta$, depending on the temporal operators appearing in $\varphi$, as per \eqref{eq:with time}, such that trajectories evolving within $\mathcal{B}^{\varphi}(t)$ satisfy $\varphi$. For brevity, during the presentation of the next rules, we use the short-hand notation
\begin{equation}\label{eq:short hand fw}
\zeta_x \rhd_{[t_0,t_1]} \mathcal{B}(t)\;  \equiv \; \zeta_x(t) \in \mathcal{B}(t),\; \forall t\in [t_0,t_1].
\end{equation}

\subsection{Time-varying set encoding: single operator case}
First, the construction rules for formulas of type $\varphi = G_{[a,b]}\mu^{h}$ and $\varphi = F_{[a,b]}\mu^{h}$ are presented.
\begin{brule}\label{rule:eventually rule}
Let $\varphi = F_{[a,b]}\mu^{h}$ with $\mathfrak{b}^{\varphi}(\vec{x},t|\vec{\vartheta}) = \gamma^{\varphi}(t|\vec{\vartheta}) + h(\vec{x})$, where $\alpha \in [a,b]$ and $\beta \in [\alpha,b]$. Then $\zeta_x \rhd_{[0,\beta]} \mathcal{B}^{\varphi}(t|\vec{\vartheta}) \Rightarrow (\zeta_x,0) \vDash_{r} \varphi$.
\end{brule}
\begin{proof}
  By \eqref{eq:gamma general}, with $\alpha \in [a,b]$ and  $\beta \in [\alpha,b]$, we have $\gamma^{\varphi}(t|\vec{\vartheta}) = - r$ for all $t \in [\alpha,\beta]$, such that $\mathcal{B}^{\varphi}(t|\vec{\vartheta}) = \mathcal{H}^{r},\; \forall t \in [\alpha,\beta]$. Then $\zeta_x \rhd_{[0,\beta]} \mathcal{B}^{\varphi}(t|\vec{\vartheta}) \Rightarrow \zeta_x(t)\in \mathcal{B}^{\varphi}(t|\vec{\vartheta})= \mathcal{H}^{r},\; \forall t\in [\alpha,\beta] \Rightarrow  h(\zeta_x(t)) \geq r,\; \forall t\in [\alpha,\beta]$. Since $[\alpha,\beta] \subseteq [a,b]$, then by \eqref{eq:eventually robust} we have $\rho^{F_{[a,b]}}(\zeta_x,0) = \max_{\tau \in [a,b]} \{h(\zeta_x(\tau))\} \geq \max_{\tau \in [\alpha,\beta]} h(\zeta_x(t)) \geq r$, from which  $(\zeta_x,0)\vDash_{r} F_{[a,b]}\mu^{h} = \varphi$. 
\end{proof}

Thus, the set  $\mathcal{B}^{\varphi}(t|\vec{\vartheta})$ is intuitively designed to eventually converge to $\mathcal{H}^{r}$ for some freely chosen time $\alpha\in [a,b]$. Next, the rule for the \textit{always} operator is given where the set $\mathcal{B}^{\varphi}(t|\vec{\vartheta})$ is instead forced to converge to $\mathcal{H}^{r}$ at time $t=a$ and remain equal to $\mathcal{H}^{r}$ until $t=b$. 
\begin{brule}\label{rule:always formula} Let $\varphi = G_{[a,b]}\mu^{h}$ with $\mathfrak{b}^{\varphi}(\vec{x},t|\vec{\vartheta}) = \gamma^{\varphi}(t|\vec{\vartheta}) + h(\vec{x})$, where $\alpha=a,\; \beta=b$. Then $\zeta_x \rhd_{[0,\beta]} \mathcal{B}^{\varphi}(t|\vec{\vartheta}) \Rightarrow (\zeta_x,0) \vDash_{r}\varphi$.
\end{brule}
\begin{proof}
Noting that $\gamma^{\varphi}(t|\vec{\vartheta}) = -r, \forall t\in [a,b]$ from which $\mathcal{B}^{\varphi}(t|\vec{\vartheta}) = \mathcal{H}^r, \; \forall t\in [a,b]$, then the result follows, similarly to Rule \ref{rule:eventually rule}, applying the robust semantics \eqref{eq:always robust}.
\end{proof}

 We want to highlight that Rules \ref{rule:eventually rule} and \ref{rule:always formula} were originally presented in \cite{charitidou2021barrier,lindemann2018control,lindemann2025formal}, with different notation. On the other hand, the next two rules, involving nested operators, represent a novel contribution. 

\subsection{Time-varying set encoding: nested operator case}
\begin{brule}\label{rule:eventually always rule} Let $\varphi =  F_{[a,b]}G_{[a',b']}\mu^{h}$ with $\mathfrak{b}^{\varphi}(\vec{x},t|\vec{\vartheta}) = \gamma^{\varphi}(t|\vec{\vartheta}) + h(\vec{x})$, where $\alpha \in a' \oplus [a , b]$, $\beta = \alpha + (b'-a')$. Then $\zeta_x \rhd_{[0,\beta]} \mathcal{B}^{\varphi}(t|\vec{\vartheta}) \Rightarrow (\zeta_x,0) \vDash_{r}  \varphi$.
\end{brule}
\begin{proof}
    See Appendix \ref{app:rule 4}.
\end{proof}

Conceptually, we satisfy $\varphi = F_{[a,b]}G_{[a',b']}\mu^{h}$ similar to a task of type $\varphi = G_{[a',b']}\mu^{h}$ where now $\mathcal{B}^{\varphi}(t|\vec{\vartheta})$ converges to $\mathcal{H}^r$ over the shifted interval $\alpha\oplus [a',b']$ with $\alpha \in [a,b]$. Next, the rule for a task of type $\varphi = G_{[a,b]}F_{[a',b']}\mu^{h}$ is shown, following the result of the next proposition.
\begin{proposition}\label{prop:always eventually decomposition}
    Let $\varphi = G_{[a,b]}F_{[a',b']} \mu^{h}$ and $\tilde{\phi}= \land_{w=1}^{n_{f}} F_{[a_{w}, b_{w}]} \mu^{h}$ with $n_{f}\geq \lceil \frac{b-a}{b'-a'} \rceil$ and 
    \begin{equation}\label{eq:tau constraints}
        \begin{aligned}
        a_{w} &= a_{w-1} + \delta_w (b'-a'), \;  \delta_w \in \left[\frac{1}{n_f} \frac{b-a}{b'-a'},1 \right]\\
        b_w &= a_{w}
        \end{aligned}
    \end{equation}
    for all $w\in [[n_f]]$ and $a_0 = a+a'$. Then,  $(\zeta_x,0) \vDash_{r} \tilde{\phi} \Rightarrow (\zeta_x,0) \vDash_{r} \phi$.
\end{proposition}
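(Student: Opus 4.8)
The plan is to unfold the quantitative semantics on both sides and reduce the implication to a purely geometric covering statement about the sample times $a_w$. Since $b_w = a_w$, each conjunct $F_{[a_w,b_w]}\mu^{h}$ of $\tilde{\phi}$ is a point-eventually, so by \eqref{eq:eventually robust} and \eqref{eq:conjunction robust} the hypothesis $(\zeta_x,0)\vDash_r \tilde{\phi}$ is equivalent to $h(\zeta_x(a_w)) \geq r$ for every $w \in [[n_f]]$. On the target side, chaining \eqref{eq:always robust} and \eqref{eq:eventually robust} gives
\begin{equation*}
\rho^{\varphi}(\zeta_x,0) = \min_{\tau \in [a,b]} \; \max_{\tau' \in [\tau+a',\,\tau+b']} h(\zeta_x(\tau')),
\end{equation*}
so it suffices to show that for every $\tau \in [a,b]$ the sliding window $[\tau+a',\tau+b']$ contains at least one sample time $a_w$. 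At such a point the inner maximum is $\geq h(\zeta_x(a_w)) \geq r$, and taking the outer minimum over $\tau\in[a,b]$ yields $\rho^{\varphi}(\zeta_x,0)\geq r$, i.e. $(\zeta_x,0)\vDash_r \varphi$.

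To make this precise I would first rewrite the membership $a_w \in [\tau+a',\tau+b']$ dually as a constraint on $\tau$: it holds exactly when $\tau \in I_w := [a_w - b',\, a_w - a']$, an interval of width $b'-a'$. The claim then becomes $[a,b] \subseteq \bigcup_{w=1}^{n_f} I_w$, which I would establish through three facts. (i) \emph{Left end covered:} using $a_0 = a+a'$ and $a_1 = a_0 + \delta_1(b'-a')$ with $\delta_1\le 1$, the left endpoint of $I_1$ equals $a - (1-\delta_1)(b'-a') \le a$, while its right endpoint is $a+\delta_1(b'-a')\ge a$, so $a \in I_1$. (ii) \emph{No gaps:} since $a_{w+1}-a_w = \delta_{w+1}(b'-a') \le b'-a'$ (because $\delta_{w+1}\le 1$), the left endpoint $a_{w+1}-b'$ of $I_{w+1}$ does not exceed the right endpoint $a_w - a'$ of $I_w$, so consecutive intervals overlap or abut. (iii) \emph{Right end covered:} summing the recursion gives $a_{n_f} = a+a' + (b'-a')\sum_{w=1}^{n_f}\delta_w$, and the lower bound $\delta_w \ge \tfrac{1}{n_f}\tfrac{b-a}{b'-a'}$ forces $\sum_w \delta_w \ge \tfrac{b-a}{b'-a'}$, hence $a_{n_f}\ge b+a'$ and the right endpoint $a_{n_f}-a'$ of $I_{n_f}$ satisfies $a_{n_f}-a' \ge b$.

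Combining (i)–(iii), the intervals $I_1,\dots,I_{n_f}$ form a contiguous chain whose union starts at or below $a$ and ends at or above $b$, so it contains all of $[a,b]$. This is exactly the covering property required, and the conclusion follows as described above.

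The main obstacle is precisely this covering step, and in particular separating the two roles played by the constraint on $\delta_w$: the upper bound $\delta_w \le 1$ controls the \emph{spacing} so that the sliding window never skips over a sample time (fact (ii)), while the lower bound $\delta_w \ge \tfrac{1}{n_f}\tfrac{b-a}{b'-a'}$, together with $n_f \ge \lceil \tfrac{b-a}{b'-a'}\rceil$ ensuring the admissible interval for $\delta_w$ is nonempty, guarantees enough cumulative \emph{reach} for the last sample to pass $b+a'$ (fact (iii)). A minor point of care is at the window boundaries: the intervals $I_w$ are closed and may only touch, but a shared endpoint lies in both intervals, so the union stays connected and the closed interval $[a,b]$ remains fully covered.
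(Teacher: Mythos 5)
Your proof is correct and follows essentially the same route as the paper's: both unfold the semantics to $\min_{\tau\in[a,b]}\max_{\tau'\in\tau\oplus[a',b']}h(\zeta_x(\tau'))$ and reduce the claim to covering $[a,b]$ by intervals of $\tau$ whose window contains a sample time, with $\delta_w\le 1$ ensuring contiguity and $\delta_w\ge\tfrac{1}{n_f}\tfrac{b-a}{b'-a'}$ ensuring the chain reaches $b$. The only (cosmetic) difference is that you use the full-width overlapping intervals $I_w=[a_w-b',a_w-a']$, while the paper tiles $[a,\tau_{n_f}-a']$ with the abutting subintervals $[\tau_{w-1}-a',\tau_w-a']$.
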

\begin{proof}
    See Appendix \ref{app:always eventually proof}.
\end{proof}

By recalling that trajectories that robustly satisfying $\varphi = F_{[a,b]} \mu^{h}$ are those that visit the set $\mathcal{H}^{r}$ at some $t\in [a,b]$, then Prop. \ref{prop:always eventually decomposition} suggests that $\varphi = G_{[a,b]}F_{[a',b']} \mu^{h}$ can be satisfied by repeatedly revisiting the set $\mathcal{H}^{r}$ at least $n_f = \lceil \frac{b-a}{b'-a'} \rceil$ times. The next rule is a direct consequence of this result.
\begin{brule}\label{rule: always eventually} 
Let $\varphi = G_{[a,b]}F_{[a',b']}\mu^{h}$ and let the formulas $\varphi^w = F_{[a_w,b_w]}\mu^{h}$ with $w\in[[n_{f}]]$, $n_{f}\geq \lceil \frac{b-a}{b'-a'} \rceil$ and $a_w,b_w$ satisfying \eqref{eq:tau constraints}. For each $\varphi^w$ let the function $\mathfrak{b}_w^{\varphi}(\vec{x},t|\vec{\vartheta}_w)$ be constructed as per Rule \ref{rule:eventually rule} with level set $\mathcal{B}_w^{\varphi}(t|\vec{\vartheta}_w)$. Then $\zeta_x \rhd_{[0,\beta_w]} \mathcal{B}^{\varphi}_w(t|\vec{\vartheta}_w),\; \forall w\in [[n_{f}]] \Rightarrow (\zeta_x,0) \vDash_{r} \phi$.
\end{brule}
\begin{proof}
    Derives directly by Prop. \ref{prop:always eventually decomposition} and Rule \ref{rule:eventually rule}.
\end{proof}

Table \ref{tab:table of functions} summarises the requirements over the intervals $[\alpha,\beta]$ for each task type according to the construction Rules \ref{rule:eventually rule}-\ref{rule: always eventually} and note that for each task $\varphi$, we have $\beta \leq t_{hr}(\varphi)$.
\begin{table}[]
    \centering
    \begin{tabular}{llll}
    \toprule[2pt]
    \toprule[1pt]
        \# &Task ($\varphi$)  & Switch times & $t_{hr}(\varphi)$ \\
    \toprule[1pt]
        \ref{rule:eventually rule}&$F_{[a,b]}\mu^{h}$ & 
        $\alpha \in[a,b] , \beta \in [\alpha,b] $ & $b$\\
        \midrule
        \ref{rule:always formula}&$G_{[a,b]}\mu^{h}$ & $ \alpha=a , \beta=b   $& $b$\\
        \midrule
        \ref{rule:eventually always rule}&$F_{[a,b]}G_{[a',b']}\mu^{h}$   & $\begin{aligned} &\alpha = a' \oplus [a,b]  \\  &\beta = \alpha+ (b'-a')  \end{aligned} $& $b'+b$\\
        \bottomrule[2pt]
    \end{tabular}
    \caption{Summary of Rules \ref{rule:eventually rule}-\ref{rule:eventually always rule}, with respective switching times. Each formula is converted into a parameteric cCBF $\mathfrak{b}^{\varphi}(\vec{x},t|\vec{\vartheta}) = \gamma^{\varphi}(t|\vec{\vartheta}) + h(\vec{x})$ where  $\gamma^{\varphi}(t|\vec{\vartheta})$ is defined as per \eqref{eq:gamma general} with switching times $\alpha$ and $\beta$ as summarized in the table. Rule \ref{rule: always eventually} is a special case of Rule \ref{rule:eventually rule} as per Prop. \ref{prop:always eventually decomposition} and is not reported explicitly.}
    \label{tab:table of functions}
\end{table}

\subsection{Time-varying set encoding: the conjunction case}\label{sec:task conjunction}
Consider now the conjunction $\phi = \land_{l=1}^{n_{\phi}} \varphi_{l}$ with $n_{\phi}\geq 1$. Specifically, let $\mathfrak{b}^{\varphi}_l$ and $\mathcal{B}_l^{\varphi}(t|\vec{\vartheta})$ be the cCBF and level set associated with each task $\varphi_{l}$ as per Rules \ref{rule:eventually rule}-\ref{rule: always eventually}\footnote{For ease of presentation, even if the formula $\varphi_{l} = G_{[a,b]}F_{[a',b']} \mu^{h}$ is technically associated with multiple functions $\mathfrak{b}^{\varphi}_{l,w}$ as per Rule \ref{rule: always eventually}, we still consider one barrier for each task since $\varphi_{l} = G_{[a,b]}F_{[a',b']} \mu^{h}$ is decomposed as a conjunction $\land_{w}\varphi_l^w$, which is then blended in the original conjunction $\phi = \land_l \varphi_l$.}. Then, for a given predicate $h_l(\vec{x}) = \min_{k\in [[n_h^l]]}\{\vec{d}_{k,l}^{T} + c_{k,l}\}$, each function  $\mathfrak{b}^{\phi}_l$ takes the form
\begin{equation}\label{eq:the fucking beta l}
\mathfrak{b}^{\phi}_l(\vec{x},t|\vec{\vartheta}_l) = \min_{k\in [[n_h^l]]} \{\vec{d}_{k,l}^{T} + c_{k,l}\} + \gamma^{\varphi}_l(t|\vec{\vartheta}_l) 
\end{equation}
with 
$$
\gamma_l^{\varphi}(t|\vec{\vartheta}_l) = e_{i,l}(\vec{\vartheta}_l) \cdot t + g_{i,l}(\vec{\vartheta}_l), \; \forall t\in [s^l_{i},s^l_{i+1}],
$$
as per \eqref{eq:gamma general}, where $\vec{\vartheta}_l = [\bar{\gamma}_l, r_l]^T \in \mathbb{R}^2 $ and $(s_i^l)_{i=1}^{3}=(0,\alpha_l,\beta_l)$ is the switching sequence of task $\varphi_l$. The next Lemma \ref{lemma:conjunction} relates the satisfaction of $\phi$ with the barrier function 
\begin{equation}\label{eq:super min barrier}
\begin{aligned}
\mathfrak{b}^{\phi}(\vec{x},t|\vec{\theta}) := \min_{l \in \mathcal{I}^{\phi}(t)} \{\mathfrak{b}_l^{\varphi}(\vec{x},t|\vec{\vartheta}_l)\}
\end{aligned}
\end{equation}
and corresponding time-varying level set
\begin{equation}\label{eq:conjunction set}
\mathcal{B}^{\phi}(t|\vec{\theta}) := \{\vec{x} \in \mathbb{X} \mid \mathfrak{b}^{\phi}(\vec{x},t|\vec{\theta}) \geq 0 \} = \hspace{-0.2cm} \bigcap_{l\in \mathcal{I}^{\phi}(t)} \hspace{-0.2cm} \mathcal{B}_l^{\varphi}(t|\vec{\vartheta}_l),
\end{equation}
where $\vec{\theta} = [\vec{\vartheta}_{1}^T, \ldots \vec{\vartheta}_{n_{\phi}}^T]^T \in \bar{\Theta}$, $\bar{\Theta} = \bigtimes_{l=1}^{n_{\phi}} \Theta_l$ and the switch map  $\mathcal{I}^{\phi}: \mathbb{R}_{\geq 0} \rightarrow 2^{\mathbb{N}}$ is defined as   
\begin{equation}\label{eq:switch map}
\mathcal{I}^{\phi}(t) := \{ l\in [[n_{\phi}]] \mid \beta_{l} > t\}.
\end{equation}
Namely, let, without loss of generality, the time instants $\beta_l$ be gathered into an ordered sequence $(\beta_l)_{l=0}^{n_{\phi}}$, where we set the convention $\beta_0 = 0$ and such that $\beta_l \geq \beta_{l-1}$ for all $l\in [[n_{\phi}]]$. Then the map $\mathcal{I}^{\phi}(t)$ is interpreted as a \textit{switch} map, removing the task of index $l$ when $\varphi_l$ has already been satisfied after the time $\beta_l$ has passed (see Fig. \ref{fig:sets limits figure}). Indeed, the cardinality of $\mathcal{I}^{\phi}(t)$ reduces over time with $\mathcal{I}^{\phi}(t) \subseteq \mathcal{I}^{\phi}(\tau)$ when $t \geq \tau$, while $\mathcal{I}^{\phi}(t)$ is constant in the half-open intervals $[\beta_{l-1},\beta_l)$. Noting that at the exact time $\beta_l$ the map $\mathcal{I}^{\phi}(\beta_l)$ does not contain the index $l$, then in general $\lim_{t\rightarrow^{-} \beta_l} \mathcal{B}^{\phi}(t|\vec{\theta}) \neq \mathcal{B}^{\phi}(\beta_l|\vec{\theta})$ due to the remotion of at least one set $\mathcal{B}_l^{\varphi}(t|\vec{\vartheta}_l)$ from the conjunction set \eqref{eq:conjunction set} at time $\beta_l$ (see Fig.  \ref{fig:sets limits figure}). Additionally, viability of $\mathcal{B}_l^{\varphi}(\beta_l|\vec{\vartheta}_l)$, for all $l\in [[n_{\phi}]]$, does not necessarily imply viability of the conjunction set $\mathcal{B}^{\phi}(t|\vec{\theta})$. As shown next, non-emptiness of $\mathcal{B}^{\phi}(t|\vec{\theta})$ is sufficient to prove its viability and trajectories evolving in $\mathcal{B}^{\phi}(t|\vec{\theta})$ also robustly satisfy $\phi$.
\begin{figure}[t]
    \centering
    \includegraphics[width=1.\linewidth]{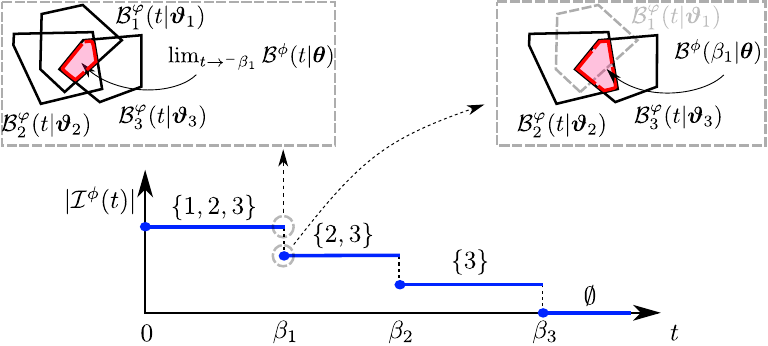}
    \caption{Example of limit of the set $\mathcal{B}^{\phi}(t|\vec{\theta})  = \cap_{l\in\mathcal{I}^{\phi}(t)} \mathcal{B}_l^{\varphi}(t|\vec{\vartheta}_l)$ at the switching time $\beta_1$ for a conjunction of three tasks $\varphi_1$,$\varphi_2$ and $\varphi_3$. The blue graph shows the time progression of the switch map $\mathcal{I}^{\phi}(t)$.}
    \label{fig:sets limits figure}
\end{figure}

\begin{lemma}\label{lemma:conjunction}
    If $\mathcal{B}^{\phi}(t|\vec{\theta}) \neq \emptyset, \; \forall t\in [0,\beta_{n_{\phi}}]$, with 
    \begin{equation}\label{eq:maximum satisfaction time}
    \beta_{n_{\phi}} = \max_{l\in [[n_{\phi}]]}\{\beta_l\},
    \end{equation}
    then $\mathcal{B}^{\phi}(t|\vec{\theta})$ is viable. Moreover, any solution $\zeta_x :[0,\beta_{n_{\phi}}] \rightarrow \mathbb{X}$ to \eqref{eq:single agent dynamics} such that $\zeta_x \rhd_{[0,\beta_{n_{\phi}}]} \mathcal{B}^{\phi}(t|\vec{\theta})$ implies $(\zeta_x,0) \models_r \phi$ with $r=\min_{l\in[[n_{\phi}]]}\{r_l\}$.
\end{lemma}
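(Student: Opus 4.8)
The plan is to prove the two assertions in sequence, using the left-continuity of each component set established in Proposition~\ref{prop:important properties}.1 as the common tool. For viability, by Definition~\ref{def:vaiable} I need non-emptiness, which is exactly the hypothesis, together with the inclusion $\lim_{\tau \rightarrow^{-} t}\mathcal{B}^{\phi}(\tau|\vec{\theta}) \subseteq \mathcal{B}^{\phi}(t|\vec{\theta})$ for every $t \in [0,\beta_{n_\phi}]$. At any $t$ that is not one of the switch instants $\beta_l$, the switch map $\mathcal{I}^{\phi}$ is locally constant and each $\mathfrak{b}_l^{\varphi}$ is continuous, so the inclusion holds with equality and nothing needs to be done.

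The only nontrivial case is a switch time $s$, which is where I would focus. I would set $J_s = \{l \mid \beta_l = s\}$ (the indices removed at $s$), observe that for $\tau$ immediately below $s$ one has $\mathcal{I}^{\phi}(\tau) = \mathcal{I}^{\phi}(s)\cup J_s$, and use the left-limit-set definition together with continuity of each $\mathfrak{b}_l^{\varphi}$ to write $\lim_{\tau\rightarrow^{-}s}\mathcal{B}^{\phi}(\tau|\vec{\theta}) = \bigcap_{l\in \mathcal{I}^{\phi}(s)\cup J_s}\mathcal{B}_l^{\varphi}(s|\vec{\vartheta}_l)$. Since this is an intersection over a superset of indices, it is contained in $\bigcap_{l\in \mathcal{I}^{\phi}(s)}\mathcal{B}_l^{\varphi}(s|\vec{\vartheta}_l) = \mathcal{B}^{\phi}(s|\vec{\theta})$, delivering the required inclusion and hence viability.

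For the satisfaction claim I would first apply the conjunction semantics \eqref{eq:conjunction robust}, reducing the goal to showing $(\zeta_x,0)\models_{r_l}\varphi_l$ for each $l$, after which $\rho^{\phi}(\zeta_x,0) = \min_l \rho^{\varphi_l}(\zeta_x,0) \geq \min_l r_l = r$. Fixing $l$, the definition of the switch map \eqref{eq:switch map} gives $l\in\mathcal{I}^{\phi}(t)$ exactly when $t<\beta_l$, so the hypothesis $\zeta_x \rhd_{[0,\beta_{n_\phi}]}\mathcal{B}^{\phi}$ combined with $\mathcal{B}^{\phi}(t|\vec{\theta})\subseteq \mathcal{B}_l^{\varphi}(t|\vec{\vartheta}_l)$ yields $\zeta_x(t)\in\mathcal{B}_l^{\varphi}(t|\vec{\vartheta}_l)$ only on the \emph{half-open} interval $[0,\beta_l)$. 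The decisive step is to recover the endpoint $t=\beta_l$, since Rules~\ref{rule:eventually rule}-\ref{rule: always eventually} demand membership on the closed interval. I would invoke joint continuity of $\mathfrak{b}_l^{\varphi}$ (from continuity of $h$ and $\gamma_l^{\varphi}$) and absolute continuity of $\zeta_x$: since $\mathfrak{b}_l^{\varphi}(\zeta_x(\tau),\tau|\vec{\vartheta}_l)\geq 0$ for all $\tau<\beta_l$, passing to the limit $\tau\rightarrow^{-}\beta_l$ gives $\mathfrak{b}_l^{\varphi}(\zeta_x(\beta_l),\beta_l|\vec{\vartheta}_l)\geq 0$, i.e. $\zeta_x(\beta_l)\in\mathcal{B}_l^{\varphi}(\beta_l|\vec{\vartheta}_l)$. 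Hence $\zeta_x \rhd_{[0,\beta_l]}\mathcal{B}_l^{\varphi}(t|\vec{\vartheta}_l)$, and the Rule matching the type of $\varphi_l$ gives $(\zeta_x,0)\models_{r_l}\varphi_l$; combining over $l$ closes the argument.

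The main obstacle I anticipate is precisely this endpoint subtlety at the switch times: the switch map is half-open by design, so the trajectory is directly certified inside each component set only on $[0,\beta_l)$, and the whole argument hinges on the continuity and left-continuity machinery to upgrade this to the closed interval required by the single-task Rules. The viability established in the first part and the satisfaction in the second are two faces of the same left-continuity property, so I would emphasize that both reductions trace back to Proposition~\ref{prop:important properties}.1.
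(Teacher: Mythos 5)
Your proposal is correct and follows essentially the same route as the paper's proof: viability via the left-continuity of each component set (Proposition~\ref{prop:important properties}.1) together with the monotonicity of the switch map $\mathcal{I}^{\phi}$, and satisfaction by recovering the closed-interval membership $\zeta_x(\beta_l)\in\mathcal{B}_l^{\varphi}(\beta_l|\vec{\vartheta}_l)$ through continuity of $\mathfrak{b}_l^{\varphi}$ and of the trajectory, then invoking Rules~\ref{rule:eventually rule}--\ref{rule: always eventually} and the conjunction semantics \eqref{eq:conjunction robust}. The only cosmetic difference is that you make the removed-index set $J_s$ explicit and pass the limit through the scalar composition $\mathfrak{b}_l^{\varphi}(\zeta_x(\tau),\tau)$ rather than through set limits, which does not change the substance of the argument.
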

\begin{proof}
    Concerning viability, by Prop. \ref{prop:important properties}, each $\mathcal{B}_l^{\varphi}(t|\vec{\vartheta}_l)$ is viable over $[0,\beta_l]$ and also $\lim_{\tau \rightarrow^{-} t }\mathcal{B}_l^{\varphi}(\tau|\vec{\vartheta}_l) = \mathcal{B}_l^{\varphi}(t|\vec{\vartheta}_l),\; \forall t \in [0,\beta_l]$. Thus
    \begin{subequations}\label{eq:limit of big set}
    \begin{align}
    &\lim_{\tau \rightarrow^{-} t }\mathcal{B}^{\phi}(\tau|\vec{\theta}) =  \lim_{\tau \rightarrow^{-} t } \left(\bigcap_{l\in \mathcal{I}^{\phi}(\tau)} \mathcal{B}_l^{\varphi}(\tau|\vec{\vartheta}_l)\right) =  \\ 
    & = \bigcap_{l\in \lim_{\tau \rightarrow^{-} t} \mathcal{I}^{\phi}(\tau)} \lim_{\tau \rightarrow^{-} t } \mathcal{B}_l^{\varphi}(\tau|\vec{\vartheta}_l)  \label{eq:the passage}\\
   &=  \bigcap_{l\in \lim_{\tau \rightarrow^{-} t} \mathcal{I}^{\phi}(\tau)} \mathcal{B}_l^{\varphi}(t|\vec{\vartheta}_l) \quad  \text{(by Prop. 1 - (1) )}\\
   &\subseteq \bigcap_{l\in \mathcal{I}^{\phi}(t)} \mathcal{B}_l^{\varphi}(t|\vec{\vartheta}_l)\quad (\text{by}\; \mathcal{I}^{\phi}(t)\subseteq \;   \mathcal{I}^{\phi}(\tau), \; t \geq \tau ) \\
   &= \mathcal{B}^{\phi}(t|\vec{\theta}),
    \end{align}
    \end{subequations}
    for all $t \in [0,\beta_{n_{\phi}}]$. From this fact and letting $\mathcal{B}^{\phi}(t|\vec{\theta}) \neq \emptyset,\; \forall t\in [0,\beta_{n_{\phi}}]$, as per the lemma statement, then $\mathcal{B}^{\phi}(t|\vec{\theta})$ is viable as per Def. \ref{def:vaiable}. Consider now the second statement and let any index $l \in [[n_{\phi}]]$. The condition $\zeta_x \rhd_{[0,\beta_{n_{\phi}}]} \mathcal{B}^{\phi}(t|\vec{\theta})$ implies, by definition of $\mathcal{I}^{\phi}(t)$, that $\zeta_x(t) \in \mathcal{B}_{l}^{\varphi}(t|\vec{\vartheta}_{l})$ for all $t\in [0, \beta_{l})$. Since $\zeta_x: [0,\beta_{n_{\phi}}] \rightarrow \mathbb{X}$ is absolutely continuous, and $\mathcal{B}_{l}^{\varphi}(t|\vec{\vartheta}_{l})$ is well defined over the closed interval $[0,\beta_{l}]$, then  $\lim_{t\rightarrow^{-} \beta_{l} } \zeta_x(t) \in \lim_{t \rightarrow^{-} \beta_{l}} \mathcal{B}_{l}^{\varphi}(t|\vec{\vartheta}_{l})$ and thus, by Prop \ref{prop:important properties} point 1), $\zeta_x(\beta_{l}) \in \mathcal{B}_{l}^{\varphi}(\beta_{l}|\vec{\vartheta}_{l})$, even if $l\not\in \mathcal{I}^{\phi}(\beta_l)$. Hence, $\zeta_x(t) \in \mathcal{B}_{l}^{\varphi}(t|\vec{\vartheta}_{l})$ for all $t\in [0, \beta_{l}]$ and, by Rules \ref{rule:eventually rule}-\ref{rule: always eventually}, $\rho^{\varphi_{l}}(\zeta_x,0) \geq r_{l}$. Thus we conclude $\zeta_x \rhd_{[0,\beta_{n_{\phi}}]} \mathcal{B}^{\phi}(t|\vec{\theta}) \Rightarrow  \rho^{\varphi_{l}}(\zeta_x,0) \geq r_l,\; \forall l \in [[n_{\phi}]] \Rightarrow \rho^{\phi}(\zeta_x,0) = \min_{l \in [[n_{\phi}]]}\{\rho^{\varphi_{l}}(\zeta_x,0)\} \geq r$. Thus,  $(\zeta_x,0) \models_r \phi$ by \eqref{eq:conjunction robust}.
\end{proof}

Hence, as a result of Lemma \ref{lemma:conjunction}, to prove viability of $\mathcal{B}^{\phi}(t|\vec{\theta})$ it is sufficient to check for non-emptiness over the interval $[0,\beta_{n_{\phi}}]$, which can be done by checking a finite number of inclusions as shown next.
\begin{proposition}\label{prop:discrete viability prop}
    The set $\mathcal{B}^{\phi}(t|\vec{\theta})$ is viable if for all $(\beta_l)_{l=1}^{n_{\phi}}$, there exists $\vec{\xi}_l \in \mathbb{X}$ such that $\vec{\xi}_l \in \lim_{t\rightarrow^{-} \beta_l} \mathcal{B}^{\phi}(t|\vec{\theta})$.
\end{proposition}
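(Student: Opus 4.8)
The plan is to reduce the pointwise non-emptiness of $\mathcal{B}^{\phi}(t|\vec{\theta})$ over the whole continuum $[0,\beta_{n_{\phi}}]$ — which, by Lemma \ref{lemma:conjunction}, is already sufficient for viability — to the finite collection of checks at the switching times $\beta_l$ stated in the hypothesis. The structural fact I would exploit is that, with the $\beta_l$ gathered into the ordered sequence $(\beta_l)_{l=0}^{n_{\phi}}$ ($\beta_0=0$), the switch map $\mathcal{I}^{\phi}(t)$ is constant on each half-open interval $[\beta_{l-1},\beta_l)$, so that on such an interval $\mathcal{B}^{\phi}(t|\vec{\theta})$ is an intersection over a fixed index set, each member of which shrinks monotonically in $t$ by Prop. \ref{prop:important properties}-(3).

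First I would fix $l\in[[n_{\phi}]]$, take $t\in[\beta_{l-1},\beta_l)$, and write $\mathcal{I}_l:=\mathcal{I}^{\phi}(t)$ for the constant active index set there; note $l\in\mathcal{I}_l$ since $\beta_l>t$, and that every active $j\in\mathcal{I}_l$ satisfies $\beta_j\geq\beta_l$ (otherwise $j$ would drop out before $\beta_l$). Applying the monotonic-shrinking property to each active component — legitimate because $\beta_l\in[t,\beta_j]$ — gives $\mathcal{B}_j^{\varphi}(t|\vec{\vartheta}_j)\supseteq\mathcal{B}_j^{\varphi}(\beta_l|\vec{\vartheta}_j)$, and intersecting over $j\in\mathcal{I}_l$ yields
\[
\mathcal{B}^{\phi}(t|\vec{\theta})=\bigcap_{j\in\mathcal{I}_l}\mathcal{B}_j^{\varphi}(t|\vec{\vartheta}_j)\supseteq\bigcap_{j\in\mathcal{I}_l}\mathcal{B}_j^{\varphi}(\beta_l|\vec{\vartheta}_j)=\lim_{\tau\rightarrow^{-}\beta_l}\mathcal{B}^{\phi}(\tau|\vec{\theta}),
\]
where the last equality reuses the passage in \eqref{eq:limit of big set}: by the constancy of $\mathcal{I}^{\phi}$ on $[\beta_{l-1},\beta_l)$ the limiting index set is exactly $\mathcal{I}_l$, and by Prop. \ref{prop:important properties}-(1) each $\lim_{\tau\rightarrow^{-}\beta_l}\mathcal{B}_j^{\varphi}(\tau|\vec{\vartheta}_j)=\mathcal{B}_j^{\varphi}(\beta_l|\vec{\vartheta}_j)$.

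The hypothesis supplies $\vec{\xi}_l\in\lim_{\tau\rightarrow^{-}\beta_l}\mathcal{B}^{\phi}(\tau|\vec{\theta})$, so the inclusion above gives $\vec{\xi}_l\in\mathcal{B}^{\phi}(t|\vec{\theta})$ for every $t\in[\beta_{l-1},\beta_l)$, hence non-emptiness on that interval. Letting $l$ range over $[[n_{\phi}]]$ covers $[0,\beta_{n_{\phi}})$. The only remaining point is the right endpoint $t=\beta_{n_{\phi}}$, where $\mathcal{I}^{\phi}(\beta_{n_{\phi}})=\emptyset$ by \eqref{eq:switch map}, so the intersection in \eqref{eq:conjunction set} is over an empty index set and equals $\mathbb{X}\neq\emptyset$. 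Thus $\mathcal{B}^{\phi}(t|\vec{\theta})\neq\emptyset$ for all $t\in[0,\beta_{n_{\phi}}]$, and viability follows from Lemma \ref{lemma:conjunction}.

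I expect the main obstacle to be bookkeeping rather than a genuine difficulty: one must argue carefully that the smallest set encountered on each interval $[\beta_{l-1},\beta_l)$ is precisely the left limit at $\beta_l$ — so that a single non-emptiness check controls the whole interval — and that this left limit is taken with the correct \emph{constant} index set $\mathcal{I}_l$, not the already-reduced $\mathcal{I}^{\phi}(\beta_l)$ from which task $l$ has been dropped. Getting the half-open/closed interval conventions right and handling the empty-intersection endpoint are the places where the argument could slip.
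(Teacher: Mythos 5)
Your proposal is correct and takes essentially the same route as the paper's own proof: constancy of the switch map $\mathcal{I}^{\phi}(t)$ on each half-open interval $[\beta_{l-1},\beta_l)$, combined with the monotone shrinking of each active $\mathcal{B}_j^{\varphi}$ from Prop.~\ref{prop:important properties}-(3), shows that $\lim_{\tau\rightarrow^{-}\beta_l}\mathcal{B}^{\phi}(\tau|\vec{\theta})$ is the smallest set on that interval, so the single membership check there gives non-emptiness throughout, and Lemma~\ref{lemma:conjunction} then yields viability. The only (harmless) divergence is at the closed right endpoints: the paper extends to $t=\beta_l$ via the unconditional inclusion $\lim_{\tau\rightarrow^{-}\beta_l}\mathcal{B}^{\phi}(\tau|\vec{\theta})\subseteq\mathcal{B}^{\phi}(\beta_l|\vec{\theta})$ from \eqref{eq:limit of big set}, whereas you only need $t=\beta_{n_{\phi}}$ and handle it via the empty-intersection convention, which is equally valid.
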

\begin{proof}
    Given in Appendix \ref{proof of discrete viability prop}.
\end{proof}

Since $\mathcal{B}^{\phi}(t|\vec{\theta})$ is an intersection of polyhedrons, then the condition $\vec{\xi}_l \in \lim_{t\rightarrow^{-} \beta_l} \mathcal{B}^{\phi}(t|\vec{\theta})$ is equivalent to a set of linear inequalities. Namely, recall \eqref{eq:the passage} and recall that $\mathcal{I}^{\phi}(t)$ is constant over the interval $[\beta_{l-1}, \beta_l)$ such that $\lim_{\tau \rightarrow^{-} \beta_l} \mathcal{I}^{\phi}(\tau) = \mathcal{I}^{\phi}(\beta_{l-1})$, for which we have
\begin{equation}
\lim_{t\rightarrow^{-} \beta_l} \mathcal{B}^{\phi}(t|\vec{\theta}) = \bigcap_{l\in \mathcal{I}^{\phi}(\beta_{l-1})} \mathcal{B}_l^{\varphi}(\beta_{l}|\vec{\vartheta}_l).
\end{equation}
Noting that each $\mathcal{B}_l^{\varphi}(t|\vec{\vartheta}_l)$ has the matrix form  \eqref{eq:polyhedral representation}, then the condition $\vec{\xi}_l \in \lim_{t\rightarrow^{-} \beta_l} \mathcal{B}^{\phi}(t|\vec{\theta})$ in Prop. \ref{prop:discrete viability prop} is equivalent to the set of linear inequalities 
\begin{equation}
E^i_{\tilde{l}}(\vec{\vartheta}) \; \begin{bmatrix} \vec{\xi}_l \\ \beta_l\end{bmatrix} + \vec{g}_{\tilde{l}}^i(\vec{\vartheta}) \geq 0,\; \begin{array}{c}
i = \Upsilon_{\tilde{l}}(\beta_l),\\
\forall \tilde{l}\in \mathcal{I}^{\phi}(\beta_{l-1}),
\end{array}
\end{equation}
where  $\Upsilon_{\tilde{l}}(\beta_l)=1$ or $\Upsilon_{\tilde{l}}(\beta_l)=2$ depending on which linear section of the barrier $\mathfrak{b}^{\varphi}_{\tilde{l}}$, with $\tilde{l}\in \mathcal{I}^{\phi}(\beta_{l-1})$, the time $\beta_l$ is found, as per \eqref{eq:upsilon}.

\section{From STL tasks to time-varying sets : Forward Invariance}\label{sec:forward invariance verification}
Now that we have analyzed the viability properties of the set $\mathcal{B}^{\phi}(t|\vec{\theta})$, we want to focus on forward invariance. Namely, let the set $\mathcal{B}^{\phi}(t|\vec{\theta})$ be defined by an appropriate selection of the switching times from Rules \ref{rule:eventually rule}-\ref{rule: always eventually}. We want to find an optimal assignment of the parameters $\vec{\theta} = [\vec{\vartheta}_1^T \ldots \vec{\vartheta}^T_{n_{\phi}}]^T\in \bar{\Theta}$, such that $\mathcal{B}^{\phi}(t|\vec{\theta})$ is forward invariant and the robustness of satisfaction of the task $\phi = \land^{n_{\phi}}_{l=1}\varphi_l$ is maximized, by considering the general optimization problem
\begin{equation}\label{eq:very general optimization program}
\begin{aligned}
\min_{\vec{\theta}} \; \sum_{l=1}^{n_{\phi}} -r_l \quad \text{s.t.} \quad \mathfrak{g}(\vec{\theta}) \geq 0,
\end{aligned}
\end{equation}
where $\mathfrak{g} : \bar{\Theta} \rightarrow \mathbb{R}^{n_g}$ is a set of constraints on $\vec{\theta}$ such that the optimal set of parameters $\vec{\theta}^*$ solving \eqref{eq:very general optimization program} guarantees $\mathcal{B}^{\phi}(t|\vec{\theta}^*)$ is viable and forward invariant while  maximizing the robustness of satisfaction for $\phi$. In this section, we clarify how the constraints in $\mathfrak{g}$ are defined starting first by showing constraints under which a single set $\mathcal{B}^{\varphi}(t|\vec{\vartheta})$ can be made forward invariant, to then generalize to the conjunction set $\mathcal{B}^{\phi}(t|\vec{\theta})$, based on the result of Theorem \ref{thm:forward invariance non smooth theorem}. 
\subsection{Enforcing forward invariance : single task case}\label{subsec:single task}
For convenience, we temporarily omit the dependence from the parameters $\vec{\vartheta}$ and drop the index $l$ of each cCBF $\mathfrak{b}^{\varphi}$ in the derivations. We reintroduce full notation in the presentation of the main results. Namely, recall that for a single task $\varphi$ we have the associated cCBF $\mathfrak{b}^{\varphi}(\vec{x},t)= \min_{k\in [[n_h]]}\{\vec{d}_k^T \vec{x} +c_k\} + \gamma^{\varphi}(t),$ as per \eqref{eq:parametric forms}-\eqref{eq:gamma general} with switching sequence $(s_i)_{i=1}^{3} = (0,\alpha, \beta)$ induced by the function $\gamma^{\varphi}$, and recall the map $\Upsilon(t)$ identifying in which of the two linear sections of $\gamma^{\varphi}$ a given time instant $t$ is found, as per \eqref{eq:upsilon} (see Fig. \ref{fig:gamma general image}).

To enforce the forward invariance of the set $\mathcal{B}^{\varphi}(t)$ over the interval $[0,\beta]$ we leverage the result of Theorem \ref{thm:forward invariance non smooth theorem} for which we show the existence of a control input $\zeta_u : [0,\beta] \rightarrow \mathbb{U}$ satisfying the condition
\begin{equation}\label{eq:forward invariance varphi}
\min \mathcal{L}_{\mathfrak{b}^{\varphi}}(\zeta_x(t),t,\zeta_u(t)) \geq - \kappa(\mathfrak{b}^{\varphi}(\zeta_x(t),t)), \; \forall t\in (s_i,s_{i+1}),
\end{equation}
for both intervals $(s_i,s_{i+1}),i\in \{1,2\}$ and for every initial condition $\vec{x}_0 \in \mathcal{B}^{\varphi}(0)$, with $\zeta_x : [0,\beta]\rightarrow \mathbb{X}$ being the solution to \eqref{eq:single agent dynamics} under $\zeta_u$ such that $\zeta_x(0)= \vec{x}_0$. To this purpose, we further analyze the value of the Lie derivative over each interval $(s_i,s_{i+1})$. Particularly, since for all $t\in (s_i,s_{i+1})$ the function $\gamma^{\varphi}$ is differentiable, the generalized gradient of $\mathfrak{b}^{\varphi}$ over $(s_i,s_{i+1})$, $i\in \{1,2\}$ can be computed applying \cite[Prop. 6(iii), Prop. 7(iii)]{cortes2008discontinuous} as
\begin{equation}\label{eq:task generalized gradient}
\partial^{i}\mathfrak{b}^{\varphi}(\vec{x},t) = co(\{ [\vec{d}_k^T\; \vec{e}_i]^T\in \mathbb{R}^{n+1} \mid k\in \mathcal{A}(\vec{x},t) \}), 
\end{equation}
where
\begin{equation}\label{eq:active set map for barrier} 
 \mathcal{A}(\vec{x},t) = \{ k\in [[n_h]] \mid \vec{d}_k^T \vec{x} + c_k + \gamma^{\varphi}(t) = \mathfrak{b}^{\varphi}(\vec{x},t) \},
\end{equation}
is the set of \textit{active} components of $\mathfrak{b}^{\varphi}(\vec{x},t)$ defining the components exactly equal to the minimum in the definition of $\mathfrak{b}^{\varphi}$. After replacing \eqref{eq:task generalized gradient} into \eqref{eq:generalized lie derivative}, we also have that the weak set-valued Lie derivative of $\mathfrak{b}^{\varphi}$ over the intervals $(s_i,s_{i+1}), i\in \{1,2\}$, takes the convex hull representation\footnote{This is derived replacing \eqref{eq:task generalized gradient} into \eqref{eq:generalized lie derivative} and recalling that for any set of vectors $\{\vec{\nu}_j\}\subset \mathbb{R}^n$ and for a given vector $\vec{c} \in \mathbb{R}^n$, then it holds $\{\vec{\nu}^T \vec{c} \mid \; \vec{\nu} \in co(\{ \vec{\nu}_j \})   \} = co(\{\vec{c}^T \vec{\nu}_j\})$ by the definition of convex hull.}
\begin{equation}
\begin{array}{l}\label{eq:convex hull lie derivative representation}
   \mathcal{L}^i_{\mathfrak{b}^{\varphi}}(\vec{x},\vec{u},t) = co \left(\{[\vec{d}_k^T\; e_i] (\bar{A}\vec{z} + \bar{B}\vec{u} + \bar{\vec{p}})\mid k\in \mathcal{A}(\vec{x},t)\}\right ),
\end{array}
\end{equation} 
where we remind that $\vec{z} = [\vec{x}^T\; t]^T$. We thus further consider the control signal $\zeta_u : [0,\beta] \rightarrow \mathbb{U}$ given by
\begin{subequations}\label{eq:standard control input}
\begin{gather}
    \zeta_u(t) = \text{argmin}_{\vec{u} \in \mathbb{U}} \; \|\vec{u}\|  \quad  \text{s.t.} \\
     \min \mathcal{L}^i_{\mathfrak{b}^{\varphi}}(\vec{x},t,\vec{u}) \geq - \kappa(\mathfrak{b}^{\varphi}(\vec{x},t)),\; i= \Upsilon(t), \label{eq:condition single shot}\\
     \vec{x} = \zeta_{x}(t),
    \end{gather}
\end{subequations}
for all $t\in (s_i,s_{i+1}), i\in \{1,2\}$, where the value of $\zeta_u(s_i)$ for $\; i=\{1,2,3\}$ can potentially take any bounded value (since this happens on a set of measure zero), even if in practice we set $\zeta_u(s_i) = \lim_{t\rightarrow s_i} \zeta_u(s_i)$. Note that the map $\Upsilon(t) =i, \; \forall t\in (s_i,s_{i+1})$ is constant by definition over the intervals $(s_i,s_{i+1})$, as per \eqref{eq:upsilon}. By the fact that, $\mathcal{L}_{\mathfrak{b}^{\varphi}}(\vec{x},t,\vec{u})  = \mathcal{L}^i_{\mathfrak{b}^{\varphi}}(\vec{x},t,\vec{u}), \; \forall t\in (s_i,s_{i+1})$, it follows directly that if $\zeta_u$ in \eqref{eq:standard control input} is well defined (i.e. \eqref{eq:standard control input} has a solution) and it is measurable over $[0,\beta]$, then $\zeta_u$ satisfies \eqref{eq:forward invariance varphi} and thus $\mathcal{B}^{\varphi}(t)$ is forward invariant over $[0,\beta]$ by Thm. \ref{thm:forward invariance non smooth theorem}. We thus further analyze the feasibility of \eqref{eq:standard control input}. In the presentation, we assume that the control input \eqref{eq:standard control input} is measurable, which is a standard assumption when working with non-smooth dynamical systems (see e.g. \cite[Sec. III(B)]{glotfelter2019hybrid}), as the measurability assumption is violated only in pathological cases, which rarely occur in practice. 

\begin{proposition}\label{prop:preparatory convex hull representation}
Let an interval $(s_i,s_{i+1}),i\in \{1,2\}$. For any $(\vec{x},t) \in \mathbb{X} \times (s_{i},s_{i+1})$ there exists $\vec{u} \in \mathbb{U}$ satisfying 
\begin{equation}\label{eq:vertices condition}
[\vec{d}_k^T\; e_i] (\bar{A}\vec{z} + \bar{B}\vec{u} + \bar{\vec{p}}) \geq - \kappa(\mathfrak{b}^{\varphi}(\vec{x},t)), \; \forall k \in  \mathcal{A}(\vec{x},t)
\end{equation}   
if and only if $\min \mathcal{L}^i_{\mathfrak{b}^{\varphi}}(\vec{x},t,\vec{u}) \geq - \kappa(\mathfrak{b}^{\varphi}(\vec{x},t)).$
\end{proposition}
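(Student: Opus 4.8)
The plan is to reduce the entire statement to the elementary fact that the convex hull of a \emph{finite set of real numbers} is a closed interval whose left endpoint is the smallest generator. Concretely, from the representation \eqref{eq:convex hull lie derivative representation}, the set $\mathcal{L}^i_{\mathfrak{b}^{\varphi}}(\vec{x},t,\vec{u})$ is the convex hull of the scalars $a_k := [\vec{d}_k^T\; e_i](\bar{A}\vec{z} + \bar{B}\vec{u} + \bar{\vec{p}})$ indexed by $k \in \mathcal{A}(\vec{x},t)$, since each row $[\vec{d}_k^T\; e_i] \in \mathbb{R}^{1\times(n+1)}$ paired with the vector $\bar{A}\vec{z} + \bar{B}\vec{u} + \bar{\vec{p}} \in \mathbb{R}^{n+1}$ yields a real number. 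As $\mathcal{A}(\vec{x},t)$ is finite, $\text{co}(\{a_k\})$ equals the interval $[\min_k a_k,\; \max_k a_k]$, and hence $\min \mathcal{L}^i_{\mathfrak{b}^{\varphi}}(\vec{x},t,\vec{u}) = \min_{k \in \mathcal{A}(\vec{x},t)} a_k$.

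First I would fix $(\vec{x},t) \in \mathbb{X}\times(s_i,s_{i+1})$ and an arbitrary $\vec{u} \in \mathbb{U}$ and establish the pointwise equivalence: the scalar inequality $\min_{k} a_k \geq -\kappa(\mathfrak{b}^{\varphi}(\vec{x},t))$ holds if and only if $a_k \geq -\kappa(\mathfrak{b}^{\varphi}(\vec{x},t))$ for every $k \in \mathcal{A}(\vec{x},t)$. This is just the trivial ``$\min \geq c \iff$ all $\geq c$'' over a finite index set, and the right-hand condition is precisely \eqref{eq:vertices condition}. Combining this with the identity $\min \mathcal{L}^i_{\mathfrak{b}^{\varphi}}(\vec{x},t,\vec{u}) = \min_{k} a_k$ from the first step yields, for the same fixed $\vec{u}$, that \eqref{eq:vertices condition} holds if and only if $\min \mathcal{L}^i_{\mathfrak{b}^{\varphi}}(\vec{x},t,\vec{u}) \geq -\kappa(\mathfrak{b}^{\varphi}(\vec{x},t))$.

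Finally I would lift this to the existential form of the statement: because the two conditions are equivalent for each individual $\vec{u} \in \mathbb{U}$, a control satisfying \eqref{eq:vertices condition} exists if and only if a control satisfying the minimum condition exists, which is the claim. I do not expect any genuine obstacle here; the single point deserving care is verifying that the active-index collection in \eqref{eq:convex hull lie derivative representation} consists of scalars, so that $\min \mathcal{L}^i_{\mathfrak{b}^{\varphi}}$ is unambiguous and coincides with the smallest generator, after which the result is immediate. The value of the proposition is organizational rather than technical: it replaces the set-valued, nonsmooth Lie-derivative condition of Theorem \ref{thm:forward invariance non smooth theorem} by finitely many affine inequalities in $\vec{u}$, which is exactly what renders the controller program \eqref{eq:standard control input} a tractable (indeed linear) feasibility problem compatible with the linear-programming approach promised in the abstract.
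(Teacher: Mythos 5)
Your proposal is correct and rests on the same underlying fact as the paper's proof, namely that $\mathcal{L}^i_{\mathfrak{b}^{\varphi}}(\vec{x},t,\vec{u})$ is the convex hull of the finitely many scalars $a_k$ and therefore has minimum $\min_k a_k$; the paper merely unpacks this into the two directions (a convex-combination inequality for $\Rightarrow$ and a contradiction using the vertex $\lambda_{\tilde k}=1$ for $\Leftarrow$) rather than invoking the interval description directly. Your pointwise-per-$\vec{u}$ equivalence and the existential lift match what the paper does implicitly, so no gap.
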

\begin{proof}
    Given in Appendix \ref{proof:preparatory convex hull representation}.
\end{proof}

Proposition \ref{prop:preparatory convex hull representation} suggests that \eqref{eq:condition single shot} is equivalent to a set of linear inequalities, which we can use to verify the feasibility of \eqref{eq:condition single shot} over each interval $(s_i,s_{i+1})$, by checking the existence of a control input $\vec{u}\in \mathbb{U}$ satisfying \eqref{eq:vertices condition} for each $(\vec{x},t) \in \mathbb{X} \times (s_{i},s_{i+1})$. However, recalling that $\mathfrak{b}^{\varphi}$ depends, by definition, on the parameters $\vec{\vartheta}$, we have that also $\mathcal{A}(\vec{x},t)=\mathcal{A}(\vec{x},t|\vec{\vartheta})$, which is known before setting $\vec{\vartheta}$ (note indeed that in \eqref{eq:active set map for barrier} we have that $\gamma^{\varphi}(t) = \gamma^{\varphi}(t|\vec{\vartheta})$). We thus resort to a sufficient condition that allows for checking the feasibility of \eqref{eq:condition single shot} without recurring to the knowledge of $\mathcal{A}(\vec{x},t)$.
\begin{proposition}\label{prop:the piece of the puzzle that will make me the king}
Let an interval $(s_i,s_{i+1}),i\in \{1,2\}$. For any $(\vec{x},t) \in \mathbb{X} \times (s_{i},s_{i+1})$, if there exists $\vec{u} \in \mathbb{U}$ such that 
\begin{equation}\label{eq:vertices condition weaker}
[\vec{d}_k^T\; e_{i}] (\bar{A}\vec{z} + \bar{B}\vec{u} + \bar{\vec{p}}) \geq - \kappa( \vec{d}_k^T\vec{x} + c_k + \gamma^{\varphi}(t)), \forall k \in [[n_{h}]]
\end{equation}  
then also $\min \mathcal{L}^i_{\mathfrak{b}^{\varphi}}(\vec{x},t,\vec{u}) \geq - \kappa(\mathfrak{b}^{\varphi}(\vec{x},t)).$
\end{proposition}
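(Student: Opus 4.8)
The plan is to reduce the claim to Proposition~\ref{prop:preparatory convex hull representation} by exploiting the defining property of the active set $\mathcal{A}(\vec{x},t)$ in \eqref{eq:active set map for barrier}. The key observation is that the hypothesis \eqref{eq:vertices condition weaker} is imposed over \emph{all} indices $k \in [[n_h]]$, whereas the conclusion only concerns the generalized Lie derivative, which through Proposition~\ref{prop:preparatory convex hull representation} is governed solely by the \emph{active} indices $k \in \mathcal{A}(\vec{x},t)$.

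First I would recall that, by definition of the active set in \eqref{eq:active set map for barrier}, every active index $k \in \mathcal{A}(\vec{x},t)$ saturates the minimum defining $\mathfrak{b}^{\varphi}$, i.e. $\vec{d}_k^T \vec{x} + c_k + \gamma^{\varphi}(t) = \mathfrak{b}^{\varphi}(\vec{x},t)$. Consequently, for such $k$ the right-hand side of \eqref{eq:vertices condition weaker} collapses to $-\kappa(\mathfrak{b}^{\varphi}(\vec{x},t))$, so that inequality \eqref{eq:vertices condition weaker} evaluated at any active index reads exactly as inequality \eqref{eq:vertices condition} of Proposition~\ref{prop:preparatory convex hull representation}. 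Note that this substitution is an exact equality and therefore does not even require the monotonicity of $\kappa$.

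Next, since $\mathcal{A}(\vec{x},t) \subseteq [[n_h]]$, the assumption that \eqref{eq:vertices condition weaker} holds for all $k \in [[n_h]]$ forces it, in particular, to hold for all $k \in \mathcal{A}(\vec{x},t)$. Combining this with the previous substitution shows that the control input $\vec{u}$ satisfies \eqref{eq:vertices condition} for every $k \in \mathcal{A}(\vec{x},t)$. Proposition~\ref{prop:preparatory convex hull representation} then yields directly $\min \mathcal{L}^i_{\mathfrak{b}^{\varphi}}(\vec{x},t,\vec{u}) \geq -\kappa(\mathfrak{b}^{\varphi}(\vec{x},t))$, which is the desired conclusion.

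There is essentially no hard step here: the statement is a one-directional strengthening of Proposition~\ref{prop:preparatory convex hull representation}, trading the exact (but parameter-dependent) active-set condition for a uniform condition over all $k$ that can be checked without knowing $\mathcal{A}(\vec{x},t)$ --- which is precisely why it is useful, since $\mathcal{A}(\vec{x},t)$ depends on the yet-unassigned parameters $\vec{\vartheta}$. The only subtlety worth flagging is that the converse fails: for inactive indices the right-hand side $-\kappa(\vec{d}_k^T\vec{x}+c_k+\gamma^{\varphi}(t))$ differs from $-\kappa(\mathfrak{b}^{\varphi}(\vec{x},t))$ (here monotonicity of $\kappa$ makes it strictly smaller), so \eqref{eq:vertices condition weaker} genuinely imposes more constraints than strictly necessary, rendering it sufficient but not necessary.
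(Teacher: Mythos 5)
Your proof is correct and follows essentially the same route as the paper: both arguments observe that for active indices $k \in \mathcal{A}(\vec{x},t)$ the right-hand side of \eqref{eq:vertices condition weaker} equals exactly $-\kappa(\mathfrak{b}^{\varphi}(\vec{x},t))$, so that \eqref{eq:vertices condition weaker} implies \eqref{eq:vertices condition}, and then invoke Proposition~\ref{prop:preparatory convex hull representation}. Your additional remarks (that monotonicity of $\kappa$ is not needed for this direction, and that the converse fails for inactive indices) are accurate but not required.
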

\begin{proof}
    We prove that \eqref{eq:vertices condition weaker} implies \eqref{eq:vertices condition}. The result then follows by Prop. \ref{prop:preparatory convex hull representation}. For any $i\in \{1,2\}$, select $(\vec{x},t)\in \mathbb{X} \times (s_{i},s_{i+1})$. By \eqref{eq:active set map for barrier}, we have $\vec{d}_k^T \vec{x}+ c_k + \gamma^{\varphi}(t) = \mathfrak{b}^{\varphi}(\vec{x},t)$ for all $k\in \mathcal{A}(\vec{x},t)$ and $\vec{d}_k^T \vec{x}+ c_k + \gamma^{\varphi}(t) > \mathfrak{b}^{\varphi}(\vec{x},t)$ for all $k\not\in \mathcal{A}(\vec{x},t)$. Thus  \eqref{eq:vertices condition weaker} implies \eqref{eq:vertices condition} since $[\vec{d}_k^T\; e_{i}] (\bar{A}\vec{z} + \bar{B}\vec{u}+\bar{\vec{p}}) \geq - \kappa(\vec{d}_k^T \vec{x}+ c_k + \gamma^{\varphi}(t)) =- \kappa(\mathfrak{b}(\vec{x},t)) , \; \forall k \in \mathcal{A}(\vec{x},t)$.
\end{proof}

Differently from \eqref{eq:vertices condition}, condition \eqref{eq:vertices condition weaker} does not require knowledge of the map $\mathcal{A}(\vec{x},t)$ at the cost of introducing extra constraints at each $(\vec{x},t)$. 
Reintroducing now the parameters $\vec{\vartheta}$, we can write \eqref{eq:vertices condition weaker} in matrix form, for each interval $(s_i,s_{i+1}),i\in \{1,2\}$, by stacking \eqref{eq:vertices condition weaker} over the index $k \in [[n_{h}]]$ and replacing the definition of $\gamma^{\varphi}$ to obtain
\begin{equation}\label{eq:mega compact represenation}
    E^i(\vec{\vartheta}) (\bar{A} \vec{z} + \bar{B} \vec{u} + \bar{\vec{p}}) \geq -\kappa (E^i(\vec{\vartheta})\vec{z} +\vec{g}^i(\vec{\vartheta})),
\end{equation}
where $E^i(\vec{\vartheta}) \in \mathbb{R}^{n_h \times (n+1)}$, $\vec{g}^i(\vec{\vartheta}) \in \mathbb{R}^{n_h}$, with $i\in \{1,2\}$, are given in \eqref{eq:normal vectors stack}. Note that $\kappa$ is applied element-wise when applied to a vector. 

In summary, by Prop. \eqref{prop:the piece of the puzzle that will make me the king}, if for each $(\vec{x},t) \in \mathbb{X}\times (s_i,s_{i+1})$ we are able to find $\vec{u}\in \mathbb{U}$ satisfying \eqref{eq:mega compact represenation}, then this implies that constraint \eqref{eq:condition single shot} can be satisfied for every $(\vec{x},t) \in \mathbb{X}\times (s_i,s_{i+1}), i\in \{1,2\}$, thus ensuring forward invariance of $\mathcal{B}^{\varphi}(t)$ under the input signal \eqref{eq:standard control input}. However, the sets $\mathbb{X}\times (s_{i}, s_{i+1}), i\in \{1,2\},$ contain a continuum of points for which we cannot verify numerically \eqref{eq:mega compact represenation}. We thus appeal to convexity, noting that each set $\mathbb{X}\times (s_{i}, s_{i+1}), i\in \{1,2\}$ is a polytope in the composite state and time domain (as Cartesian product of the polytopes $\mathbb{X}$ and $(s_{i}, s_{i+1})$), with vertices
\begin{equation}\label{eq:space time vertices}
V^{i}_{\mathbb{Z}} =  V_{\mathbb{X}} \times \{s_{i}, s_{i+1}\} \subset \mathbb{R}^{n+1}, \ i\in \{1,2\}
\end{equation}
such that $n^v_{\mathbb{Z}}:= |V^{i}_{\mathbb{Z}}| = 2|V_{\mathbb{X}}| = 2n^v_{\mathbb{X}}$, where we recall that $V_{\mathbb{X}}$ are the vertices of the set $\mathbb{X}$. Then every $\vec{z} \in \mathbb{X}\times (s_{i}, s_{i+1})$ is equivalently written as a convex combination $\vec{z} = \sum_{q=1}^{n^v_{\mathbb{Z}}} \lambda_q\vec{\eta}^i_q,$ with $\vec{\eta}^i_q \in V_{\mathbb{Z}}^{i},\; \lambda_q \geq 0,\; \forall q \in [[n^v_{\mathbb{Z}}]]$ and $\sum_{q=1}^{n^v_{\mathbb{Z}}}\lambda_q = 1
$. As usual, the index $i\in \{1,2\}$ denotes either of linear sections of $\gamma^{\varphi}$. The next example provides some intuition on the geometry of the problem.
\begin{example}
    Consider Figure \ref{fig:geometric intuition 1}, where the time-varying set $\mathcal{B}^{\varphi}(t)$ for a task $\varphi$ is shown in red over the space and time dimensions, such that $\mathbb{X}\subset \mathbb{R}^2$. By the time discontinuities $(s_i)_i^3 = (0,\alpha,\beta)$ induced by the function $\gamma^{\varphi}$, the set $\mathcal{B}^{\varphi}(t)$ is partitioned over the two intervals $\mathbb{X} \times (s_i,s_{i+1})$, namely $\mathbb{X} \times (0,\alpha)$ and $\mathbb{X} \times (\alpha,\beta)$. The vertices of the intervals $\mathbb{X}\times (s_i,s_{i+1}), i\in \{1,2\}$ are given by the Cartesian product of the vertices $V_{\mathbb{X}}$ (black dots) and the intervals $(s_i,s_{i+1})$ for a total of 8 vertices for each set $V^i_{\mathbb{Z}}$. \hfill $\square$
\end{example}

Leveraging this finite dimensional representation, Lemma \ref{lemma:numerical forward invariance} shows that the feasibility of \eqref{eq:mega compact represenation} over the continuum of points in $\mathbb{X} \times (s_i,s_{i+1}),\; i\in \{1,2\}$, can be verified by checking a set of linear inequalities over the set of vertices $V^{i}_{\mathbb{Z}}, i\in \{1,2\}$.
\begin{lemma}\label{lemma:numerical forward invariance}
Let an interval $(s_i,s_{i+1}), i\in \{1,2\}$ and let a linear class-$\mathcal{K}$ function $\kappa : \mathbb{R} \rightarrow \mathbb{R}$. If for each  $\vec{\eta}^i_q \in V_{\mathbb{Z}}^i$ there exists a vector $\vec{u}^i_q\in \mathbb{U}$, such that
\begin{equation}\label{eq: vertex constraint representation}
\begin{array}{l}
E^i(\vec{\vartheta}) (\bar{A} \vec{\eta}^i_q + \bar{B} \vec{u}^i_q + \bar{\vec{p}}) \geq -\kappa (E^i(\vec{\vartheta})\vec{\eta}^i_q + \vec{g}^i(\vec{\vartheta})) , 
\end{array}
\end{equation}
then $\mathcal{B}^{\varphi}(t)$ is forward invariant over $[0,\beta]$, under the control law \eqref{eq:standard control input}.
\end{lemma}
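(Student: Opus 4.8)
The plan is to reduce the continuum condition \eqref{eq:mega compact represenation}, which must hold at every $\vec{z}=[\vec{x}^T\; t]^T \in \mathbb{X}\times (s_i,s_{i+1})$, to the finite set of vertex inequalities \eqref{eq: vertex constraint representation} by a convexity argument, and then to chain Proposition \ref{prop:the piece of the puzzle that will make me the king} and Theorem \ref{thm:forward invariance non smooth theorem}. Concretely, fixing $i\in\{1,2\}$ and an arbitrary $\vec{z}\in \mathbb{X}\times (s_i,s_{i+1})$, I would use the convex representation $\vec{z}=\sum_{q=1}^{n^v_{\mathbb{Z}}}\lambda_q\vec{\eta}^i_q$ established just above the lemma, and \emph{construct} a candidate input as the matching convex combination $\vec{u}=\sum_{q}\lambda_q\vec{u}^i_q$, where the $\vec{u}^i_q\in\mathbb{U}$ are the vertex inputs supplied by the hypothesis. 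Since $\mathbb{U}$ is a convex polytope, $\vec{u}\in\mathbb{U}$, so this $\vec{u}$ is admissible.

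The core step is to verify that this $\vec{u}$ satisfies \eqref{eq:mega compact represenation} at $\vec{z}$. For the left-hand side I would exploit the affinity of $\bar{A}\vec{z}+\bar{B}\vec{u}+\bar{\vec{p}}$ together with $\sum_q\lambda_q=1$ (which lets me write $\bar{\vec{p}}=\sum_q\lambda_q\bar{\vec{p}}$) to obtain $E^i(\vec{\vartheta})(\bar{A}\vec{z}+\bar{B}\vec{u}+\bar{\vec{p}})=\sum_q\lambda_q\, E^i(\vec{\vartheta})(\bar{A}\vec{\eta}^i_q+\bar{B}\vec{u}^i_q+\bar{\vec{p}})$, i.e. the convex combination of the vertex left-hand sides. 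Applying the vertex inequalities \eqref{eq: vertex constraint representation} componentwise then lower-bounds this by $-\sum_q\lambda_q\,\kappa\!\left(E^i(\vec{\vartheta})\vec{\eta}^i_q+\vec{g}^i(\vec{\vartheta})\right)$. Here the \textbf{linearity of $\kappa$} is essential: it allows me to pull the convex combination inside the argument, $\sum_q\lambda_q\,\kappa(\vec{v}_q)=\kappa\!\left(\sum_q\lambda_q\vec{v}_q\right)$ for any vectors $\vec{v}_q$, and, using $\sum_q\lambda_q=1$ once more on the constant term $\vec{g}^i(\vec{\vartheta})$, to identify $\sum_q\lambda_q(E^i(\vec{\vartheta})\vec{\eta}^i_q+\vec{g}^i(\vec{\vartheta}))=E^i(\vec{\vartheta})\vec{z}+\vec{g}^i(\vec{\vartheta})$. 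Chaining these relations yields exactly $E^i(\vec{\vartheta})(\bar{A}\vec{z}+\bar{B}\vec{u}+\bar{\vec{p}})\geq -\kappa(E^i(\vec{\vartheta})\vec{z}+\vec{g}^i(\vec{\vartheta}))$, which is \eqref{eq:mega compact represenation} at $\vec{z}$.

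Having shown that \eqref{eq:mega compact represenation} admits a feasible $\vec{u}\in\mathbb{U}$ for every $(\vec{x},t)\in\mathbb{X}\times(s_i,s_{i+1})$ and every $i\in\{1,2\}$, Proposition \ref{prop:the piece of the puzzle that will make me the king} yields $\min\mathcal{L}^i_{\mathfrak{b}^{\varphi}}(\vec{x},t,\vec{u})\geq -\kappa(\mathfrak{b}^{\varphi}(\vec{x},t))$, so the constraint \eqref{eq:condition single shot} is feasible pointwise and the minimum-norm control law \eqref{eq:standard control input} is well defined on each open interval. Under the standing measurability assumption on \eqref{eq:standard control input}, the resulting $\zeta_u$ satisfies \eqref{eq:forward invariance varphi} on both $(s_1,s_2)$ and $(s_2,s_3)$, and Theorem \ref{thm:forward invariance non smooth theorem} (with $n_s=3$) then certifies forward invariance of $\mathcal{B}^{\varphi}(t)$ over $[0,\beta]$. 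The only delicate point, beyond bookkeeping of the affine offsets via $\sum_q\lambda_q=1$, is the recombination of the right-hand side, which is precisely where the linearity of $\kappa$ (rather than a generic class-$\mathcal{K}$ function) is needed; a merely concave $\kappa$ would still suffice through Jensen's inequality, but linearity makes the identity exact.
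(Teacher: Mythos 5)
Your proposal is correct and follows essentially the same route as the paper's own proof: the same convex-combination construction $\vec{u}=\sum_q\lambda_q\vec{u}^i_q$, the same use of $\sum_q\lambda_q=1$ and the linearity of $\kappa$ to recombine the vertex inequalities into \eqref{eq:mega compact represenation}, and the same chaining through Proposition \ref{prop:the piece of the puzzle that will make me the king} and Theorem \ref{thm:forward invariance non smooth theorem}. Your closing observation that a concave $\kappa$ would also suffice via Jensen's inequality is a correct (minor) strengthening not made in the paper.
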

\begin{proof}
We want to prove that if \eqref{eq: vertex constraint representation} holds, then for each $\vec{x}_0 \in \mathbb{X}$, the input signal $\zeta_u: [0,\beta] \rightarrow \mathbb{U}$ given by \eqref{eq:standard control input} is well-defined, with $\zeta_x:  [0,\beta] \rightarrow \mathbb{X}$ being the solution to \eqref{eq:single agent dynamics} under $\zeta_u$. Namely, we prove that for every $(\vec{x},t) \in \mathbb{X} \times (s_i,s_{i+1})$ there exists $\vec{u}\in \mathbb{U}$ satisfying  \eqref{eq:mega compact represenation} from which the feasibility of the control law \eqref{eq:standard control input} is guaranteed by Prop. \ref{prop:the piece of the puzzle that will make me the king}. 

Independently of the index $i\in \{1,2\}$, let any point $\vec{z} \in \mathbb{X}\times (s_{i}, s_{i+1})$ such that $\vec{z} = \sum_{k=1}^{n_{\mathbb{V}_{\mathbb{Z}}}} \lambda_k\vec{\eta}^i_q$, where $\lambda_q\geq 0$ and $ \sum_{q=1}^{n_{\mathbb{V}_{\mathbb{Z}}}} \lambda_q=1$. Also let $\vec{u} =\sum_{q=1}^{n_{\mathbb{V}_{\mathbb{Z}}}} \lambda_q\vec{u}^i_q $ for the same coefficients $\lambda_q$ and note that $\vec{u}\in \mathbb{U}$ by convexity of $\mathbb{U}$. We can thus write (omitting $\vec{\vartheta}$ for brevity) 
$$
\begin{aligned}
&E^{i} (\bar{A} \vec{z} + \bar{B} \vec{u} + \bar{\vec{p}}) + \kappa (E^{i}\vec{z} + \vec{g}^i) =\\
&E^{i}\Bigl(\bar{A}\sum_{q=1}^{n^v_{\mathbb{Z}}}\lambda_q\vec{\eta}^i_q + \bar{B} \sum_{q=1}^{n^v_{\mathbb{Z}}} \lambda_q\vec{u}^i_q + \underbrace{\sum_{q=1}^{n^v_{\mathbb{Z}}} \lambda_q}_{1} \bar{\vec{p}} \Bigr)\\ 
&\qquad \qquad  + \kappa \Bigl(E^{i}\sum_{q=1}^{n^v_{\mathbb{Z}}}\lambda_q\vec{\eta}^i_q + \underbrace{\sum_{q=1}^{n^v_{\mathbb{Z}}} \lambda_q}_{1}  \vec{g}^i\Bigr) = \\
&E^{i} \sum_{q=1}^{n^v_{\mathbb{Z}}}\lambda_q\left(\bar{A}\vec{\eta}^i_q + \bar{B}\vec{u}^i_q + \bar{\vec{p}}\right) + \sum_{q=1}^{n^v_{\mathbb{Z}}}\lambda_q \kappa \left(E^{i}\vec{\eta}^i_q +\vec{g}^i\right) = \\
&\sum_{q=1}^{n^v_{\mathbb{Z}}} \lambda_q \left( E^{i} (\bar{A}\vec{\eta}^i_q + \bar{B}\vec{u}^i_q+ \bar{\vec{p}}) + \kappa (E^{i}\vec{\eta}^i_q + \vec{g}^i)\right) \underset{\eqref{eq: vertex constraint representation}}{\geq} 0. \\
\end{aligned}
$$    
Thus for every  $(\vec{x},t) \in \mathbb{X} \times (s_i,s_{i+1})$ the input $\vec{u} =\sum_{q=1}^{n_{\mathbb{V}_{\mathbb{Z}}}} \lambda_q\vec{u}^i_q \in \mathbb{U}$ satisfies \eqref{eq:mega compact represenation} and thus, by Prop. \ref{prop:the piece of the puzzle that will make me the king}, also satisfies constraint \eqref{eq:condition single shot}, such that \eqref{eq:standard control input} has at least one solution for each $(\vec{x},t)\in \mathbb{X}\times (s_i,s_{i+1}),i\in\{1,2\}$ and it is thus feasible over the intervals $(s_i,s_{i+1})$. We conclude noting that for every initial condition $\vec{x}_0\in \mathcal{B}^{\varphi}(0)$, if we apply $\zeta_u$ in \eqref{eq:standard control input}, then \eqref{eq:forward invariance varphi} is satisfied, proving, by Thm. \ref{thm:forward invariance non smooth theorem}, that $\mathcal{B}^{\varphi}(t)$ is forward invariant over the interval $[0,\beta] = [s_1,s_{3}]$.  
\end{proof}

In conclusion, the inequalities in \eqref{eq: vertex constraint representation}, are a set of linear constraints in the parameters $\vec{\vartheta}$ and the variables $\{\vec{u}^i_{q}\}_{q=1}^{n^i_{\mathbb{Z}}}$, that we can employ to enforce forward invariance of $\mathcal{B}^{\varphi}(t)$. We next generalize this idea to the conjunction set.

\begin{figure}[b]
    \centering
    \begin{subfigure}{0.49\linewidth}
        \centering
        =\includegraphics[width=\linewidth]{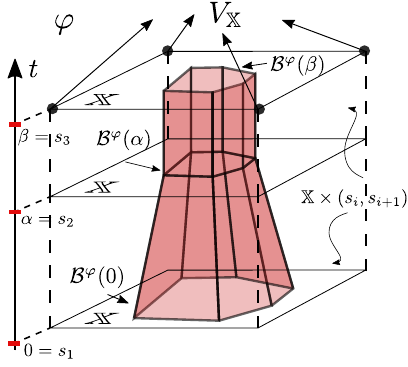}
        \caption{}
        \label{fig:geometric intuition 1}
    \end{subfigure}
    \hfill
    \begin{subfigure}{0.49\linewidth}
        \centering
        \includegraphics[width=\linewidth]{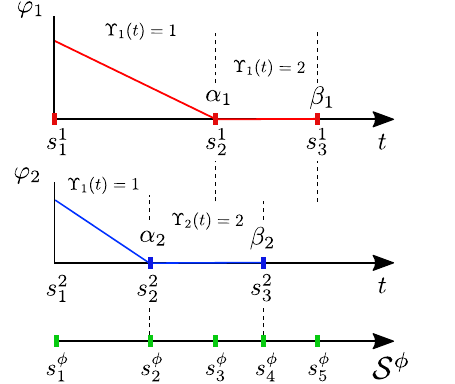}
        \caption{}
        \label{fig:geometric intuition 2}
    \end{subfigure}
    \caption{(a) Set $\mathcal{B}^{\varphi}(t)$ in the space-time dimension. Black dots represent the vertices $V^1_{\mathbb{Z}}$ and $V^2_{\mathbb{Z}}$ for the set $\mathbb{X}\times (0,\alpha)$ and $\mathbb{X}\times (\alpha,\beta)$, respectively. (b) Switching time sequence for two tasks $\varphi_l$ and $l\in \{1,2\}$. For each interval $(s_{j}^{\phi},s_{j+1}^{\phi})$ (green) there exists  index $i$ such that $(s^l_i,s^l_{i+1}) \supseteq (s_{j}^{\phi},s_{j+1}^{\phi})\; \forall l$.}
    \label{fig:geometric intuition}
\end{figure}

\subsection{Enforcing forward invariance : the conjunction case}\label{sec:forward invariance conjunciton}

We can now generalize the results of the previous section to guarantee forward invariance of the conjunction set $\mathcal{B}^{\phi}(t)$, as per \eqref{eq:conjunction set}, by applying Thm. \ref{thm:forward invariance non smooth theorem} to the barrier $\mathfrak{b}^{\phi}= \min_{l \in \mathcal{I}^{\phi}(t)} \{\mathfrak{b}_l^{\varphi}(\vec{x},t)\},$ as in \eqref{eq:super min barrier}, where each $\mathfrak{b}_l^{\varphi}(\vec{x},t)$ has a switching sequence $(s_i^l)_{i=1}^{3}=(0,\alpha_l,\beta_l)$. Namely, let the set
\begin{equation}\label{eq:parameters list times}
\mathcal{S}^{\phi}=\{0\} \cup \left(\bigcup_{l\in [[n_{\phi}]]}\{\alpha_l, \beta_l\} \right)= \left(\bigcup_{l\in [[n_{\phi}]]}(s_i^l)_{i=1}^{3} \right),
\end{equation}
gathering all the switching times, with $(s^{\phi}_j)_{j=1}^{2n_{\phi} +1}$ being the sequence obtained by ordering the elements in $\mathcal{S}^{\phi}$ (see Fig. \ref{fig:geometric intuition 2}). Recalling that the switch map $\mathcal{I}^{\phi}(t)$, as in  \eqref{eq:switch map}, is constant over the interval $[\beta_{l},\beta_{l+1})$, then $\mathcal{I}^{\phi}(t)$ is also constant over the intervals $[s_{j}^{\phi},s_{j+1}^{\phi} )$, since each $\beta_l$ is part of the sequence $(s^{\phi}_j)_{j=1}^{2n_{\phi}+1}$. Moreover, for each index $l \in \mathcal{I}^{\phi}(t)$ that is active over $(s_{j}^{\phi},s_{j+1}^{\phi})$, we have that either $(s_{j}^{\phi},s_{j+1}^{\phi}) \subseteq (s_1^l,s_2^l)$ or $ (s_{j}^{\phi},s_{j+1}^{\phi}) \subseteq (s_2^l,s_3^l)$, meaning that also the map $\Upsilon_l(t)$ is constant over $(s_{j}^{\phi},s_{j+1}^{\phi})$ (see Fig. \ref{fig:geometric intuition 2}). We can thus write 
\begin{equation}\label{eq:Constance relation}
\begin{aligned}
    \mathcal{I}^{\phi}(t) &= \mathcal{I}^{\phi}(s_j^{\phi}), \; \forall t\in (s_j^{\phi},s_{j+1}^{\phi}),\\
    \Upsilon_l(t)         &=  \Upsilon_l(s_j^{\phi}), \; \forall t\in (s_j^{\phi},s_{j+1}^{\phi}), \forall l\in [[n_{\phi}]]\\
\end{aligned}
\end{equation}
Similarly to the single task case, we need to find a control input such that the condition    
\begin{equation}\label{eq:forward invariance conjuntion}
\min \mathcal{L}_{\mathfrak{b}^{\phi}}(\zeta_x(t),t,\zeta_u(t)) \geq - \kappa(\mathfrak{b}^{\phi}(\zeta_x(t),t)), \; \forall t\in (s_j^{\phi},s_{j+1}^{\phi}), 
\end{equation}
is satisfied from every $\vec{x}_0 \in \mathcal{B}^{\phi}(0)$, from which forward invariance can be proved leveraging Thm. \ref{thm:forward invariance non smooth theorem}. With this goal, we select the control input given by 
\begin{subequations}\label{eq:standard control input conjunction}
\begin{gather}
    \zeta_u(t) = \text{argmin}_{\vec{u} \in \mathbb{U}} \; \|\vec{u}\|  \quad \text{s.t.} \\ 
     \begin{array}{l}
      \hspace{-0.638cm}\min \mathcal{L}^i_{\mathfrak{b}_l^{\varphi}}(\vec{x},t,\vec{u}) \geq - \kappa(\mathfrak{b}^{\varphi}(\vec{x},t)), i= \Upsilon_l(t), \forall l\in \mathcal{I}^{\phi}(t) \end{array}\label{eq:condition single shot conjunction} \\
     \vec{x} = \zeta_{x}(t), 
    \end{gather}
\end{subequations}
for all $t\in (s_j^{\phi},s_{j+1}^{\phi}), j\in [[2n_{\phi}]]$, while $\zeta_u(s^{\phi}_j)\; , \forall j\in [[2n_{\phi}]]$, can potentially take any bounded value, since this happens on a set of measure zero. Note that \eqref{eq:standard control input conjunction} is similar to \eqref{eq:standard control input} with the exception that now \eqref{eq:condition single shot conjunction} must be satisfied for each task index $l\in \mathcal{I}^{\phi}(t)$ among the tasks active at time $t$. The following result shows that if  \eqref{eq:standard control input conjunction} is feasible for each $t\in (s_j^{\phi},s_{j+1}^{\phi}), j\in [[2n_{\phi}]]$, then it satisfies \eqref{eq:forward invariance conjuntion}.
\begin{proposition}\label{prop:min lie derivative prop}
    Let an interval $(s_j^{\phi},s_{j+1}^{\phi}),\;j\in [[2n_{\phi}]]$. For any $(\vec{x},t) \in \mathbb{X} \times (s^{\phi}_{j},s^{\phi}_{j+1})$, if there exists $\vec{u} \in \mathbb{U}$ such that
     \begin{equation}\label{eq:lie derivative constraint in parallel}
     \begin{gathered}
      \min \mathcal{L}^i_{\mathfrak{b}^{\varphi}_l}(\vec{x},\vec{u},t) \geq -\kappa(\mathfrak{b}^{\varphi}_l(\vec{x},t)), \; i=\Upsilon_l(t),\; \forall l\in \mathcal{I}^{\phi}(t),
        \end{gathered}
    \end{equation}
    then $\min \mathcal{L}_{\mathfrak{b}^{\phi}}(\vec{x},\vec{u},t)  \geq -\kappa(\mathfrak{b}^{\phi}(\vec{x},t))$.
\end{proposition}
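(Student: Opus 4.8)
The plan is to exploit the structure of $\mathfrak{b}^{\phi}$ as a pointwise minimum of the individual barriers $\mathfrak{b}_l^{\varphi}$ over the active index set $\mathcal{I}^{\phi}(t)$, combined with the generalized-gradient calculus for minima of locally Lipschitz functions. First I would fix an arbitrary $(\vec{x},t) \in \mathbb{X} \times (s_j^{\phi}, s_{j+1}^{\phi})$ together with the input $\vec{u} \in \mathbb{U}$ supplied by the hypothesis, and recall from \eqref{eq:Constance relation} that both $\mathcal{I}^{\phi}(t)$ and each $\Upsilon_l(t)$ are constant on the open interval $(s_j^{\phi}, s_{j+1}^{\phi})$. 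Consequently, on this interval each active $\mathfrak{b}_l^{\varphi}$ lies in a single linear time-section $i=\Upsilon_l(t)$, so that $\mathcal{L}^i_{\mathfrak{b}_l^{\varphi}}(\vec{x}, t, \vec{u}) = \mathcal{L}_{\mathfrak{b}_l^{\varphi}}(\vec{x}, t, \vec{u})$ there, and the hypothesis \eqref{eq:lie derivative constraint in parallel} reads, by the definition of the generalized Lie derivative \eqref{eq:generalized lie derivative}, as $\vec{\nu}_l^T(\bar{A}\vec{z} + \bar{B}\vec{u} + \bar{\vec{p}}) \geq -\kappa(\mathfrak{b}_l^{\varphi}(\vec{x},t))$ for every $\vec{\nu}_l \in \partial \mathfrak{b}_l^{\varphi}(\vec{x},t)$ and every $l \in \mathcal{I}^{\phi}(t)$.

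The key step is to relate $\partial \mathfrak{b}^{\phi}$ to the generalized gradients of the active individual barriers. Introducing the active set $\mathcal{A}^{\phi}(\vec{x},t) := \{l \in \mathcal{I}^{\phi}(t) \mid \mathfrak{b}_l^{\varphi}(\vec{x},t) = \mathfrak{b}^{\phi}(\vec{x},t)\}$ of tasks attaining the minimum in \eqref{eq:super min barrier}, I would invoke the same generalized-gradient calculus for pointwise minima already used in \eqref{eq:task generalized gradient} (cf. \cite{cortes2008discontinuous}), now applied to the outer minimum over $l$, to obtain the inclusion $\partial \mathfrak{b}^{\phi}(\vec{x},t) \subseteq \text{co}\{\partial \mathfrak{b}_l^{\varphi}(\vec{x},t) \mid l \in \mathcal{A}^{\phi}(\vec{x},t)\}$. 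Hence every $\vec{\nu} \in \partial \mathfrak{b}^{\phi}(\vec{x},t)$ admits a representation $\vec{\nu} = \sum_{l \in \mathcal{A}^{\phi}(\vec{x},t)} \mu_l \vec{\nu}_l$ with $\vec{\nu}_l \in \partial \mathfrak{b}_l^{\varphi}(\vec{x},t)$, $\mu_l \geq 0$ and $\sum_l \mu_l = 1$.

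It then remains to combine the two ingredients. For any such $\vec{\nu}$, linearity gives $\vec{\nu}^T(\bar{A}\vec{z} + \bar{B}\vec{u} + \bar{\vec{p}}) = \sum_{l \in \mathcal{A}^{\phi}(\vec{x},t)} \mu_l \, \vec{\nu}_l^T(\bar{A}\vec{z} + \bar{B}\vec{u} + \bar{\vec{p}}) \geq -\sum_{l \in \mathcal{A}^{\phi}(\vec{x},t)} \mu_l \, \kappa(\mathfrak{b}_l^{\varphi}(\vec{x},t))$, using the hypothesis for each active $l$. Since every $l \in \mathcal{A}^{\phi}(\vec{x},t)$ satisfies $\mathfrak{b}_l^{\varphi}(\vec{x},t) = \mathfrak{b}^{\phi}(\vec{x},t)$, the right-hand side collapses to $-\kappa(\mathfrak{b}^{\phi}(\vec{x},t)) \sum_l \mu_l = -\kappa(\mathfrak{b}^{\phi}(\vec{x},t))$. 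As this bound is uniform over all $\vec{\nu} \in \partial \mathfrak{b}^{\phi}(\vec{x},t)$, taking the minimum over the generalized Lie derivative \eqref{eq:generalized lie derivative} yields $\min \mathcal{L}_{\mathfrak{b}^{\phi}}(\vec{x},t,\vec{u}) \geq -\kappa(\mathfrak{b}^{\phi}(\vec{x},t))$, which is the claim.

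I expect the main obstacle to be the rigorous justification of the generalized-gradient inclusion for the minimum: one must confirm that, over the index-constant family $\mathcal{I}^{\phi}(t)$, the outer $\min$ genuinely restricts $\partial \mathfrak{b}^{\phi}$ to the convex hull of the active pieces, mirroring the inner-minimum computation of \eqref{eq:task generalized gradient}. The remainder is elementary convex-combination bookkeeping, and the decisive simplification — that $\kappa(\mathfrak{b}_l^{\varphi}) = \kappa(\mathfrak{b}^{\phi})$ exactly on the active set — is precisely what allows the weights $\mu_l$ to factor out and sum to one.
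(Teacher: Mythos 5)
Your proposal is correct and follows essentially the same route as the paper: both compute the generalized gradient of the outer minimum as the convex hull of the gradients of the active barriers $l\in\mathcal{A}^{\phi}(\vec{x},t)\cap\mathcal{I}^{\phi}(t)$, exploit that $\mathfrak{b}_l^{\varphi}(\vec{x},t)=\mathfrak{b}^{\phi}(\vec{x},t)$ on the active set so the $\kappa$-terms factor out of the convex combination, and use the constancy of $\mathcal{I}^{\phi}$ and $\Upsilon_l$ on $(s_j^{\phi},s_{j+1}^{\phi})$ to identify $\mathcal{L}^i_{\mathfrak{b}_l^{\varphi}}$ with the relevant Lie derivative. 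The paper merely packages the final step as two auxiliary propositions (one by contradiction via a set-inclusion of Lie derivatives) instead of your direct convex-combination bound, which is a presentational rather than substantive difference.
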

\begin{proof}
    Given in Appendix \ref{proof:min lie derivative prop}.
\end{proof}

Thus Prop. \ref{prop:min lie derivative prop} confirms that when the controller \eqref{eq:standard control input conjunction} is implemented over each interval $(s_j^{\phi},s_{j+1}^{\phi})$, then the constraint \eqref{eq:condition single shot conjunction} is such that condition \eqref{eq:forward invariance conjuntion} is satisfied and forward invariance of $\mathcal{B}^{\phi}(t)$ follows by Thm. \ref{thm:forward invariance non smooth theorem}. It thus remains to verify the feasibility of \eqref{eq:standard control input conjunction}. In this regard, we note that \eqref{eq:lie derivative constraint in parallel} is equivalent to \eqref{eq:condition single shot}, but repeated for every $l\in \mathcal{I}^{\phi}(t)$. We can thus reuse the same result of Prop. \ref{prop:the piece of the puzzle that will make me the king} to transform \eqref{eq:lie derivative constraint in parallel} into a set of linear inequalities alike \eqref{eq:mega compact represenation} from which the feasibility of \eqref{eq:standard control input conjunction} for each interval $\mathbb{X} \times (s_j^{\phi}, s_{j+1}^{\phi})$ can be checked numerically  over the vertices
\begin{equation}\label{eq: conjunction vertices}
V_{\mathbb{Z}}^j = V_{\mathbb{X}} \times \{s_j^{\phi}, s_{j+1}^{\phi}\},\;  \forall j\in [[2n_{\phi}]].  
\end{equation}
The proof of the next result is omitted as it follows the same steps of the proof of Lemma \ref{lemma:numerical forward invariance}.
\begin{lemma}\label{lemma: final lemma because I am done with the phd}
Let a linear class-$\mathcal{K}$ function $\kappa : \mathbb{R} \rightarrow \mathbb{R}$  and an interval $(s_j^{\phi}, s_{j+1}^{\phi}),\, j\in [[2n_{\phi}]]$. If for every $\vec{\eta}^j_q \in V_{\mathbb{Z}}^j$, there exists $\vec{u}^j_q\in \mathbb{U}$, such that 
\begin{equation}\label{eq: vertex constraint representation for parallel lie derivative}
\begin{aligned}
E_l^i(\vec{\vartheta}_l) (\bar{A} \vec{\eta}^j_q + \bar{B} \vec{u}^j_q + \bar{\vec{p}}) \geq -\kappa (E_l^i(\vec{\vartheta}_l)\vec{\eta}^j_q  +\vec{g}_l^i(\vec{\vartheta}_l)) ,\\
i = \Upsilon(s^{\phi}_j),\; \forall l\in \mathcal{I}^{\phi}(s^{\phi}_j),
\end{aligned}
\end{equation}
then $\mathcal{B}^{\varphi}(t)$ is forward invariant over $[s_j^{\phi},s_{j+1}^{\phi}]$. Moreover, if \eqref{eq: vertex constraint representation for parallel lie derivative} holds for every interval $(s_j^{\phi}, s_{j+1}^{\phi}),\, j\in [[2n_{\phi}]]$, then $\mathcal{B}^{\varphi}(t)$ is forward invariant and the control law \eqref{eq:standard control input conjunction} is well defined over the interval $[0,\beta_{\phi}]$.
\end{lemma}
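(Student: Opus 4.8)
The plan is to replicate, almost verbatim, the convexity argument from the proof of Lemma~\ref{lemma:numerical forward invariance}, the only genuine addition being that several barriers are now active simultaneously on each sub-interval. First I would fix an interval $(s_j^{\phi},s_{j+1}^{\phi})$ with $j\in[[2n_{\phi}]]$ and invoke \eqref{eq:Constance relation}: both the switch map $\mathcal{I}^{\phi}(t)$ and every index map $\Upsilon_l(t)$ are constant there, equal to $\mathcal{I}^{\phi}(s_j^{\phi})$ and $\Upsilon_l(s_j^{\phi})$. Hence the polytope $\mathbb{X}\times(s_j^{\phi},s_{j+1}^{\phi})$ carries the single, task-independent vertex set $V_{\mathbb{Z}}^j = V_{\mathbb{X}}\times\{s_j^{\phi},s_{j+1}^{\phi}\}$ of \eqref{eq: conjunction vertices}, and for each active $l$ the matrices $E_l^i,\vec{g}_l^i$ in \eqref{eq: vertex constraint representation for parallel lie derivative} are fixed constants on the interval.

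Next I would lift the vertex inequalities to the whole polytope. For any $\vec{z}=\sum_{q}\lambda_q\vec{\eta}^j_q\in\mathbb{X}\times(s_j^{\phi},s_{j+1}^{\phi})$ with $\lambda_q\geq 0$ and $\sum_q\lambda_q=1$, set $\vec{u}=\sum_q\lambda_q\vec{u}^j_q$, which lies in $\mathbb{U}$ by convexity. Since the extended dynamics are affine in $(\vec{z},\vec{u})$ and $\kappa$ is linear, the exact algebra of Lemma~\ref{lemma:numerical forward invariance} applies once for each fixed $l$: weighting \eqref{eq: vertex constraint representation for parallel lie derivative} by $\lambda_q$ and summing yields
\begin{equation*}
E_l^i(\vec{\vartheta}_l)(\bar{A}\vec{z}+\bar{B}\vec{u}+\bar{\vec{p}}) \geq -\kappa\big(E_l^i(\vec{\vartheta}_l)\vec{z}+\vec{g}_l^i(\vec{\vartheta}_l)\big),\quad i=\Upsilon_l(s_j^{\phi}),
\end{equation*}
for every $l\in\mathcal{I}^{\phi}(s_j^{\phi})$. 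These are precisely the matrix forms of \eqref{eq:vertices condition weaker} for the active barriers, so by Prop.~\ref{prop:the piece of the puzzle that will make me the king} they give $\min\mathcal{L}^i_{\mathfrak{b}^{\varphi}_l}(\vec{x},t,\vec{u})\geq-\kappa(\mathfrak{b}^{\varphi}_l(\vec{x},t))$ for all $l\in\mathcal{I}^{\phi}(t)$, i.e.\ \eqref{eq:lie derivative constraint in parallel} holds; Prop.~\ref{prop:min lie derivative prop} then upgrades this to \eqref{eq:forward invariance conjuntion}.

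To close, I would observe that the above exhibits a feasible point of \eqref{eq:standard control input conjunction} for every $(\vec{x},t)\in\mathbb{X}\times(s_j^{\phi},s_{j+1}^{\phi})$, so $\zeta_u$ is well defined on the interval and $\mathcal{B}^{\phi}(t)$ is forward invariant over $[s_j^{\phi},s_{j+1}^{\phi}]$ by Thm.~\ref{thm:forward invariance non smooth theorem}. Since $s_1^{\phi}=0$ and the ordered switches $(s_j^{\phi})_{j=1}^{2n_{\phi}+1}$ partition $[0,\beta_{\phi}]$, feasibility on every piece makes the control law well defined a.e.\ on $[0,\beta_{\phi}]$; applying Thm.~\ref{thm:forward invariance non smooth theorem} over the full switching sequence, together with the viability of $\mathcal{B}^{\phi}$ from Lemma~\ref{lemma:conjunction} (which rules out escape at the switching instants $s_j^{\phi}$), yields forward invariance over $[0,\beta_{\phi}]$.

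The main obstacle is the simultaneity in the second paragraph: one must check that a \emph{single} input $\vec{u}=\sum_q\lambda_q\vec{u}^j_q$, built from weights $\lambda_q$ determined by $\vec{z}$ alone, satisfies the constraints of \emph{all} active tasks at once, rather than requiring a separate input per task. This works precisely because the decision variables $\vec{u}^j_q$ are indexed by the vertex $q$ and not by the task $l$, so the constancy of $\mathcal{I}^{\phi}$ and $\Upsilon_l$ from \eqref{eq:Constance relation} forces every active barrier to be decomposed over the common space-time vertex set $V_{\mathbb{Z}}^j$; once this is recognized, the remainder is the routine affine/convexity bookkeeping already performed in Lemma~\ref{lemma:numerical forward invariance}.
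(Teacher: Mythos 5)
Your argument is correct and is exactly the route the paper intends: the paper omits this proof with the remark that it "follows the same steps" as Lemma \ref{lemma:numerical forward invariance}, and your reconstruction carries out precisely those steps — constancy of $\mathcal{I}^{\phi}$ and $\Upsilon_l$ on each sub-interval via \eqref{eq:Constance relation}, the convex-combination lift of the vertex inequalities using linearity of $\kappa$ and affinity of the dynamics, then Prop.~\ref{prop:the piece of the puzzle that will make me the king} and Prop.~\ref{prop:min lie derivative prop} to reach \eqref{eq:forward invariance conjuntion}, and Thm.~\ref{thm:forward invariance non smooth theorem} interval by interval. Your closing observation that a single input per vertex suffices for all simultaneously active barriers, because the decision variables $\vec{u}^j_q$ are indexed by vertex rather than by task, is the one genuinely new ingredient relative to the single-task case and you handle it correctly.
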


We thus arrived at the desired result as we found a set of linear constraints on the parameters $\vec{\vartheta}_l,\;\forall l\in [[n_{\phi}]]$, namely \eqref{eq: vertex constraint representation for parallel lie derivative}, that guarantees the forward invariance of the set $\mathcal{B}^{\phi}(t|\vec{\theta})$ and which we can use in the optimization program \eqref{eq:very general optimization program}.
\subsection{Final optimization program}
We can now write the full optimization program we use for the selection of the optimal parameters $\vec{\theta}^{\star} = [\vec{\vartheta}_1^{\star T}, \ldots \vec{\vartheta}_{n_{\phi}}^{\star T}]^T$ (and recall $\vec{\vartheta}_l = [r_l, \bar{\gamma}_l]^T$) such that $\mathcal{B}^{\phi}(t|\vec{\theta}^{\star})$ is viable, forward invariant and the robustness of satisfaction for the task $\phi = \land_{l=1}^{n_{\phi}} \varphi_l$ is maximized. Namely, for each interval $ (s_j^{\phi}, s^{\phi}_{j+1})$, with $j\in [[2n_{\phi}]]$, let the sets of vertices $V_{\mathbb{Z}}^j= \{\vec{\eta}^j_{q}\}_{q=1}^{n^v_{\mathbb{Z}}},\; \forall j\in [[2n_{\phi}]]$, the set of input variables $V_{\mathbb{U}}^{j} = \{ \vec{u}_q^{j}\}_{q=1}^{n^v_{\mathbb{Z}}},\; \forall j\in [[2n_{\phi}]]$ and the union of these $V_{\mathbb{U}} = \bigcup_{j=1}^{2n_{\phi}} V_{\mathbb{U}}^j$. Moreover let the set of state variables $\Xi = \{\vec{\xi}_l\}_{l=1}^{n^{\phi}} \in \mathbb{R}^n$ and an initial known state $\vec{x}_0\in \mathbb{X}$ from which we seek to start the system, (e.g. the starting point from which we start planning a trajectory satisfyion $\phi$ as we show in the next section). The optimal parameters $\vec{\theta}^{\star}$ for the set $\mathcal{B}^{\phi}(t|\vec{\theta})$ are found as 

\begin{subequations}\label{eq:the final optimization program}
\begin{align}
&\qquad \qquad  \min_{\vec{\theta},  V_{\mathbb{U}}, \Xi} \; \quad  - \sum_{l=1}^{n_{\phi}} r_l \qquad \text{s.t.}\\
& \vec{\vartheta}_l \in \Theta_l, \; \vec{\xi}_l\in \mathbb{X}, \qquad \forall l\in [[n_{\phi}]]\label{eq:parameters constraints}\\
&\vec{u}_q^{j} \in \mathbb{U},\; \qquad \forall q\in [[n^v_{\mathbb{Z}}]],\; \forall j\in [[2n_{\phi}]],\label{eq:input constraints}\\
&E^1_{l}(\vec{\vartheta})  \begin{bmatrix} \vec{x}_0 \\ 0\end{bmatrix} + \vec{g}_{l}^1(\vec{\vartheta}) \geq 0,\, \begin{array}{l}
\forall l\in [[n_{\phi}]] \end{array}, \label{eq:non emptiness initial}\\
&\begin{aligned}
&E^i_{\tilde{l}}(\vec{\vartheta}) \begin{bmatrix} \vec{\xi}_l \\ \beta_l\end{bmatrix} + \vec{g}_{\tilde{l}}^i(\vec{\vartheta}) \geq 0, \\
& \hspace{0.8cm} i = \Upsilon_{\tilde{l}}(\beta_l), \; \forall l \in [[n_{\phi}]], \; \forall \tilde{l}\in \mathcal{I}^{\phi}(\beta_{l-1}),
\end{aligned} \label{eq:non emptiness} \\
&\begin{aligned}
&E_l^i(\vec{\vartheta}_l)(\bar{A} \vec{\eta}^j_q + \bar{B} \vec{u}^j_q + \bar{\vec{p}}) \geq -\kappa (E_l^i(\vec{\vartheta})\vec{\eta}^j_q  +\vec{g}_l^i(\vec{\vartheta}_l)) ,  \\
& i = \Upsilon_l(s_j^{\phi}),\;  \forall j\in [[2n_{\phi}]],\; \forall q\in [[n_{\mathbb{Z}}^v]],\; \forall l\in \mathcal{I}^{\phi}(s^{\phi}_j),\\
\end{aligned}
\label{eq:global constraint on forward invariance} 
\end{align}
\end{subequations}
where constraint \eqref{eq:parameters constraints} enforces each parameters $\vec{\vartheta}_l$ and state variables $\vec{\xi}_l$ in  their respective sets $\Theta_l$ and $\mathbb{X}$, constraint \eqref{eq:input constraints} enforces each input vector $\vec{u}_q^{j}$ in the sets $V_{\mathbb{U}}^{j}$ to be within the input bounds, constraint \eqref{eq:non emptiness initial} enforces  $\vec{x}_0 \in \mathcal{B}^{\phi}(0|\vec{\theta})$, constraint \eqref{eq:non emptiness} enforces viability of the set $\mathcal{B}^{\phi}(t|\vec{\theta})$ as per Proposition \ref{prop:discrete viability prop} and finally \eqref{eq:global constraint on forward invariance} enforces the forward invariance properties of the set $\mathcal{B}^{\phi}(t|\vec{\theta})$ as per Lemma \ref{lemma: final lemma because I am done with the phd}. Note that \eqref{eq:the final optimization program} is a linear program which can be solved efficiently using standard off-the-shelf solvers for very high-dimensional problems.

We can give an exact characterization of the dimensionality of \eqref{eq:the final optimization program}, for which we count $2n_{\phi}$ parameters in the vector $\vec{\theta}\in \mathbb{R}^{2n_{\phi}}$,  $2n_{\phi}\cdot m \cdot n^{v}_{\mathbb{Z}} = 4n_{\phi}\cdot m \cdot n^{v}_{\mathbb{X}}$ control variables $\vec{u}_q^{j}$ corresponding to a number $2n^{v}_{\mathbb{X}}$ of control variables with dimension $m$ for each interval $(s_i^{\phi},s_{i+1}^{\phi})$, and $n_{\phi}\cdot n$ state variables in the set $\Xi$ applied to check viability of the set $\mathcal{B}^{\phi}(t|\vec{\theta})$. Assuming that $\mathbb{X}$ is an hypercube for which  $n^v_{\mathbb{X}} = 2^n$, a lower bound on the dimensionality of \eqref{eq:the final optimization program} is $4n_{\phi}(2^{n} \cdot m+ \frac{n}{4} + \frac{1}{2})$. We conclude this section by summarizing the obtained result deriving from Lemma \ref{lemma: final lemma because I am done with the phd} and Lemma \ref{lemma:conjunction}.

\begin{theorem}\label{thm:validity lemma}
    If \eqref{eq:the final optimization program} is feasible, with solution $\vec{\theta}^{\star}$, then $\mathcal{B}^{\varphi}(t|\vec{\theta}^{\star})$ is forward invariant over the interval $[0,\beta_{\phi}]$. Moreover, for every trajectory $\zeta_x:[0, \beta_{\phi}] \rightarrow \mathbb{X}$ such that  $\zeta_x \rhd_{[0,\beta_{\phi}]} \mathcal{B}^{\phi}(t|\vec{\theta}^{\star})$, then $(\zeta_x,0) \models_r \phi$ with robustness $r = \min_{l} \{r_l^{\star}\}$.
\end{theorem}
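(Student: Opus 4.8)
The plan is to recognize Theorem \ref{thm:validity lemma} as a direct consolidation of the viability and forward-invariance machinery already established, so that the proof reduces to verifying that feasibility of the linear program \eqref{eq:the final optimization program} supplies precisely the hypotheses of Proposition \ref{prop:discrete viability prop}, Lemma \ref{lemma: final lemma because I am done with the phd}, and Lemma \ref{lemma:conjunction}. No genuinely new estimate is needed; the work is in matching the indexed constraints of \eqref{eq:the final optimization program} to the premises of each cited result.

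First I would unpack feasibility. The optimizer $\vec{\theta}^{\star} = [\vec{\vartheta}_1^{\star T}, \ldots, \vec{\vartheta}_{n_{\phi}}^{\star T}]^T$ satisfies every constraint. By \eqref{eq:parameters constraints} each $\vec{\vartheta}_l^{\star} \in \Theta_l$, so Proposition \ref{prop:important properties} guarantees that every single-task set $\mathcal{B}_l^{\varphi}(t|\vec{\vartheta}_l^{\star})$ is viable and left-continuous. Next, the state variables $\vec{\xi}_l \in \mathbb{X}$ together with constraint \eqref{eq:non emptiness} encode exactly $\vec{\xi}_l \in \lim_{t\rightarrow^{-}\beta_l}\mathcal{B}^{\phi}(t|\vec{\theta}^{\star})$ at each switching time $\beta_l$; this is the hypothesis of Proposition \ref{prop:discrete viability prop}, from which viability of the conjunction set $\mathcal{B}^{\phi}(t|\vec{\theta}^{\star})$ over $[0,\beta_{\phi}]$ follows.

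For forward invariance I would observe that constraint \eqref{eq:global constraint on forward invariance} is, term for term, the vertex condition \eqref{eq: vertex constraint representation for parallel lie derivative} of Lemma \ref{lemma: final lemma because I am done with the phd}, holding for every interval $(s_j^{\phi}, s_{j+1}^{\phi})$ with $j \in [[2n_{\phi}]]$, including the index selection $i = \Upsilon_l(s_j^{\phi})$ and the active-task restriction $l \in \mathcal{I}^{\phi}(s_j^{\phi})$. Invoking Lemma \ref{lemma: final lemma because I am done with the phd} then yields forward invariance of $\mathcal{B}^{\phi}(t|\vec{\theta}^{\star})$ over the full horizon $[0,\beta_{\phi}]$ and well-definedness of the control law \eqref{eq:standard control input conjunction}. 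I would note here that forward invariance as in Definition \ref{def:forward invariance} quantifies over all $\vec{x}_0 \in \mathcal{B}^{\phi}(0|\vec{\theta}^{\star})$; this universal statement is delivered by the convex-combination argument inside Lemma \ref{lemma: final lemma because I am done with the phd}, which establishes feasibility of \eqref{eq:standard control input conjunction} at every point of each space-time slab rather than only at the designated $\vec{x}_0$. The separate constraint \eqref{eq:non emptiness initial} is therefore not needed for the present conclusions; it only guarantees the planning start $\vec{x}_0$ lies in the initial set, for later use in Section \ref{sec:rrt}.

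The last step is task satisfaction, which I would obtain immediately from the second part of Lemma \ref{lemma:conjunction}: since $\mathcal{B}^{\phi}(t|\vec{\theta}^{\star})$ is viable and $\beta_{\phi} = \beta_{n_{\phi}}$, any trajectory $\zeta_x : [0,\beta_{\phi}] \rightarrow \mathbb{X}$ with $\zeta_x \rhd_{[0,\beta_{\phi}]} \mathcal{B}^{\phi}(t|\vec{\theta}^{\star})$ robustly satisfies $\phi$ with degree $r = \min_{l}\{r_l^{\star}\}$, completing the proof. The main obstacle — if one can call it that — is purely bookkeeping: confirming that the switching structure baked into \eqref{eq:the final optimization program} (the ordered sequence $(s_j^{\phi})$, and the maps $\Upsilon_l$ and $\mathcal{I}^{\phi}$) aligns exactly with the premises of the lemmas, and that $\beta_{\phi}$ coincides with the maximal switching time $\beta_{n_{\phi}} = \max_l\{\beta_l\}$ used throughout.
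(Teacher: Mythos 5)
Your proposal is correct and follows essentially the same route as the paper, which states Theorem \ref{thm:validity lemma} as a direct consequence of Lemma \ref{lemma: final lemma because I am done with the phd} (forward invariance via the vertex constraints \eqref{eq:global constraint on forward invariance}) and Lemma \ref{lemma:conjunction} (viability plus robust satisfaction), with viability certified through Proposition \ref{prop:discrete viability prop} and constraint \eqref{eq:non emptiness}. Your additional observations — that \eqref{eq:non emptiness initial} is not needed for the theorem's conclusions and that $\beta_{\phi}$ must be identified with $\beta_{n_{\phi}} = \max_l\{\beta_l\}$ — are accurate bookkeeping consistent with the paper.
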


When considering the disjunction $\psi= \lor_k \phi_k$, an instance of the optimization problem \eqref{eq:the final optimization program} is solved in parallel for each $\phi_k$, and the task with the highest robustness is then selected for execution to satisfy the disjunction $\psi$.

\section{Trajectory planning via RRT$^{\star}$}\label{sec:rrt}
\begin{figure}[b]
    \centering
    \includegraphics[width=0.5\textwidth]{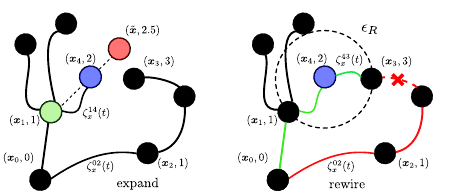}
    \caption{Example of an expansion phase of the $\text{RRT}^{*}$. (Left), the tree $\mathcal{T}$ is expanded toward the sampled node $(\tilde{\vec{x}},2.5)$ (red node) from the closest node $(\vec{x}_1,1)$ with time $1\leq 2.5$ (green node), until the new node $(\vec{x}_4,2)$ (blue node). (Right), the rewire phase, where the node $(\vec{x}_3,3)$ is rewired to the new node $(\vec{x}_4,2)$ to obtain a lower cost-to-go path (green) compared to the original path (red) in $\mathcal{T}$.}
    \label{fig:example of rrt algorithm}
\end{figure}
Given an STL task $\phi$, this section provides a detailed explanation of how Problem \ref{prob:problem 1} is approached leveraging the RRT$^{\star}$ planning algorithm. Loosely speaking, the approach taken consists on computing the time-varying set $\mathcal{B}^{\phi}(t)$, obtained as the solution to the parameters optimization problem \eqref{eq:the final optimization program} (dropping $\vec{\theta}$), and iteratively construct a tree of trajectories that evolves within the set $\mathcal{B}^{\phi}(t)$. It follows from the result of Lemma \ref{lemma:conjunction}, that every such trajectory robustly satisfies $\phi$. The reason why an RRT$^{\star}$ planning approach is relevant to solve Problem \ref{prob:problem 1} is that it allows to generate safe trajectories that satisfy a task $\phi$, by rejecting trajectories that can potentially hit obstacles in the environment.

\subsection{The algorithm}
RRT$^{\star}$ is a planning algorithm based on the iterative construction of a tree graph $\mathcal{T}(\mathcal{V}, \mathcal{E})$ of trajectories exploring the state space $\mathbb{X}$ (see Fig \ref{fig:example of rrt algorithm}). Namely, let a set $\mathcal{V} \subset \mathbb{Z}:= \mathbb{X}\times \mathbb{R}_{\geq 0}$ given by state-time pairs (or nodes) with $(\vec{x},t) \in \mathbb{Z}$, and let $\mathcal{E}$ be a set of \textit{edge} trajectories $\zeta_x^{ij}: [t_i,t_j] \rightarrow \mathbb{X}$ found as a  solution to \eqref{eq:single agent dynamics} and connecting node $(\vec{x}_i,t_i)$ to $(\vec{x}_j,t_j)$ such that $\zeta_x^{ij}(t_i) = \vec{x}_i$ and $\zeta_x^{ij}(t_j) = \vec{x}_j$. Let $\mathcal{O}_k,\; k\in [[n_o]]$, for some $n_o \geq 0$, represent a set of obstacles in $\mathbb{X}$ within which no valid trajectory should enter. Additionally, consider the root node $(\vec{x}_0,0)$ which defines the initial state from which the planning task is undertaken, and for each node index $i \in [[|\mathcal{V}|]]$, let the set-valued map
\begin{equation}\label{eq: path of nodes function}
\Pi(i) := \{ (\sigma_0 = 0,\sigma_1),(\sigma_1,\sigma_2), \ldots (\sigma_{l-1},\sigma_l= i)\},
\end{equation}
that returns the sequence of index pairs $(\sigma_k,\sigma_{k+1}) \in \mathbb{N} \times \mathbb{N}$ defining the unique path of length $l\geq 1$ from the root node $(\vec{x}_0,0)$ to $(\vec{x}_i,t_i) \in \mathcal{V}$. For example in Fig \ref{fig:example of rrt algorithm}-(left) we have that $\Pi(4) = \{(0,1),(1,4)\}$. Based the path defined by the map $\Pi$, the cost-to-go from the root node $(\vec{x}_0,0)$ to $(\vec{x}_i,t_i)$ is then computed by the function $\textbf{c2g}: \mathbb{N} \rightarrow \mathbb{R}_{\geq 0}$ as
\begin{equation}\label{eq:cost to go function}
\textbf{c2g}(i) := \sum_{(\sigma_k,\sigma_{k+1}) \in \Pi(i)} \int_{t_{\sigma_k}}^{t_{\sigma_{k+1}}} \|\dot{\zeta}_x^{\sigma_k,\sigma_{k+1}}(t)\|dt,
\end{equation}
which intuitively represents the total length of the trajectory from the root node $(\vec{x}_0,0)$ to $(\vec{x}_i,t_i)$ obtained by concatenating the edge trajectories $\zeta_x^{\sigma_k,\sigma_{k+1}}$. The objective is to construct a tree of trajectories with minimum cost, with respect to \eqref{eq:cost to go function}, that satisfy $\phi$. This is done in two main phases: an \textit{expansion} phase and a \textit{rewiring} phase, which we detail next.

\subsubsection{Initialization} At initialization (lines \ref{code:line:init1}-\ref{code:line:init2} in Alg. \ref{alg:rrt}) the root node $(\vec{x}_0,0)$ and a desired STL task $\psi = \lor_k \phi_k$ is provided, encoding a set of possible tasks to be achieved. Based on this information, the time-varying sets $\mathcal{B}_k^{\phi}(t)$ are computed by solving the parameter optimization \eqref{eq:the final optimization program} for each $\phi_k$. The task $\phi$ with the highest robustness is then selected for execution with the corresponding time-varying set $\mathcal{B}^{\phi}(t)$. Moreover, the tree $\mathcal{T}$ is initialized with $\mathcal{V} = \{(\vec{x}_0,0)\}$ and edge set $\mathcal{E} = \emptyset$.
\subsubsection{Expansion phase} After initialization, the main iteration of the algorithm starts (lines \ref{code:line:rand select}-\ref{code:line:steer} in Alg. \ref{alg:rrt}). During this phase a new sample $(\tilde{\vec{x}},\tilde{t}) \in \mathbb{Z}$ is obtained by first sampling a random time instant as $\tilde{t} \sim \textit{Unif}([0,t_{hr}(\phi)])$, where $\textit{Unif}(\cdot)$ represents a uniform probability distribution, and then a state $\tilde{\vec{x}}$ is sampled from the set $\mathcal{B}^{\phi}(\tilde{t})$ as $\tilde{\vec{x}} \sim \textit{Unif}(\mathcal{B}^{\phi}(\tilde{t}))$. We choose samples with maximum time $t_{hr}(\phi)$ as the task satisfaction is determined within this maximum time. Since $\mathcal{B}^{\phi}(t)$ is defined up to the time $\beta_{\phi} \leq t_{hr}(\phi)$ as per \eqref{eq:maximum satisfaction time}, we consider the convention $\mathcal{B}^{\phi}(t) = \mathbb{X}$ for all $t\in [\beta_{\phi}, t_{hr}(\phi)]$, which is without loss of generality since the satisfaction of $\phi$ is determined within the time $\beta_{\phi}$ as per Lemma \ref{lemma: final lemma because I am done with the phd}.

After sampling, the nearest neighbour $(\vec{x}_i,t_i) \in \mathcal{V}$ to $(\tilde{\vec{x}},\tilde{t})$ is found via the function $\textbf{past\_nn}: \mathbb{Z} \rightarrow  \mathcal{V}$, defined as 
\begin{equation}\label{eq:past nn}
\begin{aligned}
\textbf{past\_nn}(\tilde{\vec{x}},\tilde{t}) = \; &\text{argmin}_{(\vec{x},t) \in \mathcal{V}} \; \; \textbf{dist}((\tilde{\vec{x}},\tilde{t}),(\vec{x},t)),\\
&\qquad s.t. \quad t\leq \tilde{t}, 
\end{aligned}
\end{equation}
using the distance metric $\textbf{dist} : \mathbb{Z} \times \mathbb{Z} \rightarrow \mathbb{R}_{\geq0}$ given as
\begin{equation}
\textbf{dist}((\tilde{\vec{x}},\tilde{t}),(\vec{x},t)) = \| \tilde{\vec{x}} - \vec{x}\|_2 + |\tilde{t} -t|,
\end{equation}
which considered both the distance in terms of time and space. By \eqref{eq:past nn} the nearest neighbour $(\vec{x}_i,t_i)= \textbf{past\_nn}(\tilde{\vec{x}},\tilde{t})$ is such that $t_i \leq \tilde{t}$, to maintain the time consistency of the tree during the expansion. Indeed, after the nearest neighbour $(\vec{x}_i,t_i)$ is found, we apply the function  $\textbf{steer} : \mathcal{V} \times \mathbb{Z} \rightarrow \mathcal{E}$ which returns the trajectory $\zeta_x^{ij} = \textbf{steer}((\vec{x}_i,t_i), (\tilde{\vec{x}},\tilde{t}))$, obtained by solving
\begin{subequations}\label{eq:steer function}
    \begin{align}
    \min_{\zeta^{ij}_{u},\zeta^{ij}_{x}} \int_{t_i}^{t_i+\Delta} \|\zeta^{ij}_{x}(t)- \tilde{\vec{x}}\|^2_{Q} +  \|\zeta^{ij}_{u}(t)\|^2_{R} \; dt
    \end{align}
    \begin{align}
    \zeta^{ij}_{u}(t)&\in \mathbb{U},  &\forall t\in [t_i,t_j]  \label{eq:input and state bounds}  \\
    \dot{\zeta}^{ij}_{x}(t) &= A\zeta^{ij}_{x}(t) + B\zeta^{ij}_{u}(t) + \vec{p},\; a.e.\; &t\in [t_i,t_j]  \label{eq:dynamics bound}\\
    \zeta^{ij}_{x}(t) &\in \mathcal{B}^{\phi}(t),   &\forall t\in [t_i,t_j] \label{eq:barrier constraint}\\
    \zeta^{ij}_{x}(t_i) &= \vec{x}_i, 
\end{align}
\end{subequations}
where $Q\in \mathbb{R}^{n\times n}$ and $R \in \mathbb{R}^{m\times m}$ are two positive definite matrices, and $\Delta = \min\{(\tilde{t}-t_i), \delta_{max}\}$ with $\delta_{max} \geq 0$ being a maximum expansion interval. Intuitively, the optimal trajectory $\zeta^{ij}_{x}$ obtained by solving \eqref{eq:steer function} is such that it starts from $(\vec{x}_i,t_i)$ and it moves toward the node $(\tilde{\vec{x}},\tilde{t})$ for a time interval $\Delta$ until the node $(\vec{x}_j,t_j)$ with $t_j = t_i + \Delta$ and $\vec{x}_j = \zeta_x^{ij}(t_j)$, as depicted in  Fig. \ref{fig:example of rrt algorithm}-(left). Note constraint \eqref{eq:barrier constraint} is a set of time-varying linear constraints with the form \eqref{eq:polyhedral representation}, such that \eqref{eq:steer function} is a quadratic program (quadratic cost and linear constraints), which can be solved efficiently.

An important fact is that, thanks to the construction approach taken for $\mathcal{B}^{\phi}(t)$, the control input signal $\zeta^{ij}_{u} : [t_i,t_j] \rightarrow \mathbb{U}$ generated by applying \eqref{eq:standard control input conjunction} from the initial state $\vec{x}_i$ (with resulting state signal $\zeta^{ij}_{x}: [t_i,t_j] \rightarrow \mathbb{U}$) always represents a feasible solution to \eqref{eq:steer function} as per Lemma \ref{lemma: final lemma because I am done with the phd}, from which the following proposition is derived.

\begin{proposition}
For each $(\vec{x}_i, t_i)\in \mathcal{V}$ in  $\mathcal{T}$ it holds $\vec{x}_i \in \mathcal{B}^{\phi}(t_i)$ and a solution $\zeta_x^{ij}: [t_i,t_j] \rightarrow \mathbb{X}$ to \eqref{eq:steer function} always exists, from any $(\vec{x}_i, t_i)\in \mathcal{V}$, with $\zeta_{x}^{ij}(t) \rhd_{[t_i,t_j]} \mathcal{B}^{\phi}(t)$.
\end{proposition}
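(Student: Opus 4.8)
The plan is to argue by induction over the nodes of the tree $\mathcal{T}$, in the order in which they are inserted, establishing simultaneously the membership $\vec{x}_i \in \mathcal{B}^{\phi}(t_i)$ and the existence of a feasible, set-invariant steering trajectory out of each node. For the base case I take the root $(\vec{x}_0,0)$: constraint \eqref{eq:non emptiness initial} of the parameter optimization \eqref{eq:the final optimization program} is precisely the requirement $\vec{x}_0 \in \mathcal{B}^{\phi}(0|\vec{\theta}^{\star})$, so the membership claim holds at the root. The engine driving the inductive step is the forward-invariance controller \eqref{eq:standard control input conjunction}, which by Lemma \ref{lemma: final lemma because I am done with the phd} is well defined on $[0,\beta_{\phi}]$ and renders $\mathcal{B}^{\phi}(t)$ forward invariant.

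For the inductive step I assume a node $(\vec{x}_i, t_i)\in\mathcal{V}$ with $\vec{x}_i \in \mathcal{B}^{\phi}(t_i)$ and prove both feasibility of \eqref{eq:steer function} from it and membership of any child node. Since $\vec{x}_i \in \mathcal{B}^{\phi}(t_i)$, applying \eqref{eq:standard control input conjunction} from $\vec{x}_i$ produces, by Lemma \ref{lemma: final lemma because I am done with the phd}, a measurable input $\zeta_u:[t_i,t_j]\to\mathbb{U}$ whose corresponding solution obeys $\zeta_x(t)\in\mathcal{B}^{\phi}(t)$ for all $t\in[t_i,t_j]$. This pair satisfies every constraint of \eqref{eq:steer function}: input admissibility \eqref{eq:input and state bounds}, the dynamics \eqref{eq:dynamics bound}, the barrier constraint \eqref{eq:barrier constraint}, and the initial condition $\zeta_x(t_i)=\vec{x}_i$. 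Hence the feasible set of \eqref{eq:steer function} is nonempty, and since the cost is convex and continuous over this nonempty admissible set an optimizer $\zeta_x^{ij}$ is attained. By \eqref{eq:barrier constraint} the optimizer satisfies $\zeta_x^{ij}\rhd_{[t_i,t_j]}\mathcal{B}^{\phi}(t)$, and evaluating at $t_j$ yields $\vec{x}_j=\zeta_x^{ij}(t_j)\in\mathcal{B}^{\phi}(t_j)$, discharging the induction for the newly added node. The rewiring phase is covered by the same reasoning, since each rewired edge is again produced by \eqref{eq:steer function} and therefore inherits the barrier constraint \eqref{eq:barrier constraint}.

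The main obstacle is bridging the abstract forward-invariance guarantee — existence of \emph{some} admissible input keeping the state in the set, per Def. \ref{def:forward invariance} and Lemma \ref{lemma: final lemma because I am done with the phd} — to feasibility of the specific constrained problem \eqref{eq:steer function}; the crux is recognizing that the invariance controller \eqref{eq:standard control input conjunction} furnishes an explicit feasible point, so that feasibility is never in question regardless of whether the sampled target $(\tilde{\vec{x}},\tilde{t})$ is actually reached. A secondary point is the horizon beyond $\beta_{\phi}$: when $t_j>\beta_{\phi}$, the convention $\mathcal{B}^{\phi}(t)=\mathbb{X}$ for $t\in[\beta_{\phi},t_{hr}(\phi)]$ makes \eqref{eq:barrier constraint} trivial on $[\beta_{\phi},t_j]$, so one needs only an $\mathbb{X}$-valued solution there, which the adopted solution concept already guarantees; I would state this explicitly to close the argument cleanly. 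Finally, attainment of the optimizer, as opposed to mere feasibility, is a routine convexity/compactness argument that I would state but not belabor.
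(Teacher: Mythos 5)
Your proposal is correct and takes essentially the same approach as the paper, which states this proposition without a formal proof and justifies it in the immediately preceding paragraph by exactly your key observation: the invariance controller \eqref{eq:standard control input conjunction} applied from any $\vec{x}_i \in \mathcal{B}^{\phi}(t_i)$ furnishes an explicit feasible point of \eqref{eq:steer function}, and constraint \eqref{eq:barrier constraint} then propagates membership to the child node. One minor slip: rewired edges come from \eqref{eq:bridge function}, not \eqref{eq:steer function}, but since \eqref{eq:bridge function} carries the same barrier constraint and rewiring adds no new nodes to $\mathcal{V}$, your conclusion is unaffected.
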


Once the edge trajectory $\zeta^{ij}_{x}$ is found as a solution to \eqref{eq:steer function}, then the trajectory is checked for collision against the set of obstacles $\mathcal{O}_k, \; \forall k \in [[n_{o}]]$ in the workspace, to ensure safety. If a collision is found i.e. there exists $t_c \in [t_i,t_j]$ such that $\zeta_x^{ij}(t_c)\in \mathcal{O}_k$, for some $k\in [[n_{o}]]$, then the trajectory is rejected, and a new sample is drawn, otherwise the trajectory $\zeta_x^{ij}$ is added to $\mathcal{E}$ and the new node $(\vec{x}_j,t_j)$ is added to $\mathcal{V}$ thus leading the the rewiring phase. 
\subsubsection{Rewiring phase} After the expansion phase has taken place, a set $\mathcal{R} = \{(\vec{x}_r,t_r)\}_r \subseteq \mathcal{V}$ of nodes in $\mathcal{T}$ is retrieved in the neighborhood of the newly connected node $(\vec{x}_j,t_j)$ by the function $\textbf{future\_nns} : \mathcal{V} \rightarrow 2^{\mathcal{V}}$ given by 
\begin{equation}
\begin{aligned}
&\textbf{future\_nns}((\vec{x}_j,t_j)) = \{ (\vec{x}_r,t_r) \in \mathcal{V} \mid \\
&\hspace{1cm}\textbf{dist}((\vec{x}_r,t_r),(\vec{x}_j,t_j))\leq \epsilon_R \land\;  t_r \geq t_j\}.
\end{aligned}
\end{equation}
with $\epsilon_{R}>0$ being a given \textit{rewiring radius}. Intuitively, the set $\mathcal{R} = \textbf{future\_nns}((\vec{x}_j,t_j))$ is the set of nodes within a ball of radius $\epsilon_{R} > 0$ from $(\vec{x}_j,t_j)$ and with $t_{r} \geq t_j$ (see right panel in Fig. \ref{fig:example of rrt algorithm}) and for which we want to reduce the current cost-to-go in the tree $\mathcal{T}$. The constraint $t_{r} \geq t_j$ serves to guarantee the time consistency in the expansion of the tree during the rewiring. Indeed, for each node $(\vec{x}_r,t_r) \in \mathcal{R}$ a trajectory $\zeta_{x}^{jr}$ is computed from $(\vec{x}_j,t_j)$ to $(\vec{x}_r,t_r)$ by the function $\zeta_{x}^{jr} = \textbf{bridge}((\vec{x}_j,t_j),(\vec{x}_r,t_r))$, which is given by the solution of the following optimization program
\begin{subequations}\label{eq:bridge function}
    \begin{align}
    \min_{\zeta^{jr}_{u},\zeta^{jr}_{x}} \int_{t_j}^{t_r} \|\zeta^{jr}_{x}(t)\|_{Q} +  \|\zeta^{jr}_{u}(t)\|_{R} \; dt
    \end{align}
    \begin{align}
    \zeta^{jr}_{u}(t)&\in \mathbb{U},  &\forall t\in [t_j,t_r]   \\
    \dot{\zeta}^{jr}_{x}(t) &= A\zeta^{jr}_{x}(t) + B\zeta^{jr}_{u}(t) + \vec{p}, \; a.e.\; &t\in [t_j,t_r] \\
    \zeta^{jr}_{x}(t) &\in \mathcal{B}^{\phi}(t),   &\forall t\in [t_j,t_r] \\
    \zeta^{jr}_{x}(t_i) &= \vec{x}_j, \quad  \zeta^{jr}_{x}(t_r) = \vec{x}_r\label{eq: boundary value constraint}
\end{align}
\end{subequations}
which is similar to \eqref{eq:steer function}, with the additional constraint \eqref{eq: boundary value constraint} that enforces the terminal value constraint $\zeta^{ir}_{x}(t_r) = \vec{x}_r$. In principle, \eqref{eq:bridge function} is computationally harder compared to \eqref{eq:steer function}, due to the double boundary value constraint. Moreover, different from \eqref{eq:steer function}, the feasibility of \eqref{eq: boundary value constraint} can not be guaranteed due to the boundary value constraint. At this point, a trajectory is considered for rewiring if 1) a solution to \eqref{eq:bridge function} exists, 2) such solution is not in collision and 3) the cost-to-go for node $(\vec{x}_r,t_r)$ is reduced by the rewiring through node $(\vec{x}_j,t_j)$ as 
$$
\underbrace{\textbf{c2g}(j) +  \int_{t_j}^{t_r} \|\zeta^{jr}_x(t)\|dt}_{\text{potential new cost for node $(\vec{x}_r,t_r)$}} \leq \textbf{c2g}(r) 
$$
then the new trajectory $\zeta^{jr}_x$ is added to the tree and the old trajectory  $\zeta^{pr}_x$ from the parent node $(\vec{x}_p,t_p)$ of node $(\vec{x}_j,t_j)$ in the tree $\mathcal{T}$ is removed. In any other case, the current rewiring failed, and the algorithm continues.

The rewiring phase terminates when all possible rewiring attempts have been made, and the whole iteration is restarted.
\subsubsection{Termination} When the algorithm terminates, we are left with the set of nodes $\mathcal{V}$ and edge trajectories in $\mathcal{E}$ where each trajectory $\zeta^{ij}_x \in \mathcal{E}$ is such that $\zeta^{ij}_x(t) \in \mathcal{B}^{\phi}(t),\; \forall t\in [t_i,t_j]$. Thus, for every node $(\vec{x}_s,t_s)\in \mathcal{V}$, such that $t_s \geq t_{hr}(\phi)$, recall the definition of the map $\Pi(s)$ defining the indices of the path of nodes connecting the initial node $(\vec{x}_0,0)$ to $(\vec{x}_s,t_s)$. A satisfying trajectory $\zeta_{x}:[0,t_s] \rightarrow \mathbb{X}$ robustly satisfying $\phi$ is obtained by concatenating the trajectories along the edges in $\Pi(s)$ as shown next. 
\begin{theorem}\label{thm:the last theorem}
    For any node  $(\vec{x}_s,t_s)\in \mathcal{V}$ obtained by Alg. \ref{alg:rrt} such that $t_s \geq t_{hr}(\phi)$, let the signal $\zeta_x: [0,t_s] \rightarrow \mathbb{X}$ defined as the concatenation  
   \begin{equation}\label{eq:concatenation}
    \zeta_x(t) = \zeta_x^{\sigma_{k},\sigma_{k+1}}(t),\;  \forall t\in [t_{\sigma_{k}},t_{\sigma_{k+1}}], \;\forall (\sigma_{k},\sigma_{k+1}) \in \Pi(s), 
    \end{equation}
    such that $t_{\sigma_0}=0$ and $t_{\sigma_{l}}=t_s$ with $l = |\Pi(s)|$. Then $(\zeta_x,0) \models_{r} \phi$ and $\zeta_x$ respects the dynamics \eqref{eq:single agent dynamics}.
\end{theorem}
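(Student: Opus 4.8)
The plan is to show that the concatenated signal $\zeta_x$ in \eqref{eq:concatenation} is a genuine Carathéodory solution of \eqref{eq:single agent dynamics} that moreover evolves within $\mathcal{B}^{\phi}(t)$ over the whole satisfaction horizon, and then to invoke Theorem \ref{thm:validity lemma} to obtain robust satisfaction. I would split the argument into three independent checks: (i) $\zeta_x$ is well defined and absolutely continuous, (ii) it respects the dynamics, and (iii) it stays inside $\mathcal{B}^{\phi}(t)$ on $[0,\beta_\phi]$.

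First I would argue that $\zeta_x$ is well defined and absolutely continuous. Along the path $\Pi(s) = \{(\sigma_0=0,\sigma_1),\ldots,(\sigma_{l-1},\sigma_l=s)\}$ the node times satisfy $0 = t_{\sigma_0} < t_{\sigma_1} < \cdots < t_{\sigma_l} = t_s$, a direct consequence of the time-consistency enforced by $\textbf{past\_nn}$ and $\textbf{future\_nns}$, which only connect nodes to strictly later times. Hence the intervals $[t_{\sigma_k}, t_{\sigma_{k+1}}]$ tile $[0, t_s]$ with no gaps and overlap only at their endpoints. At each junction $t_{\sigma_k}$, the incoming edge obeys $\zeta_x^{\sigma_{k-1},\sigma_k}(t_{\sigma_k}) = \vec{x}_{\sigma_k}$ while the outgoing edge obeys $\zeta_x^{\sigma_k,\sigma_{k+1}}(t_{\sigma_k}) = \vec{x}_{\sigma_k}$, so the concatenation is continuous; since a finite concatenation of absolutely continuous functions matching at the junctions is itself absolutely continuous, $\zeta_x$ is absolutely continuous on $[0, t_s]$.

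Next I would verify the dynamics. Each edge $\zeta_x^{\sigma_k,\sigma_{k+1}}$ solves \eqref{eq:single agent dynamics} under a measurable input through the dynamics constraint \eqref{eq:dynamics bound} of the steer/bridge programs. Concatenating these inputs yields a measurable signal $\zeta_u$ for which $\dot{\zeta}_x(t) = f(\zeta_x(t), \zeta_u(t))$ for a.e.\ $t\in[0,t_s]$ (the finitely many junction points form a null set and do not affect the a.e.\ identity), so $\zeta_x$ is a Carathéodory solution, establishing the dynamic-feasibility claim. For the STL part, the barrier constraint \eqref{eq:barrier constraint} forces each edge to satisfy $\zeta_x^{\sigma_k,\sigma_{k+1}}(t) \in \mathcal{B}^{\phi}(t)$ on the closed interval $[t_{\sigma_k}, t_{\sigma_{k+1}}]$; since these intervals tile $[0, t_s]$, we get $\zeta_x \rhd_{[0,t_s]} \mathcal{B}^{\phi}(t)$. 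Because $t_s \geq t_{hr}(\phi) \geq \beta_\phi$, this restricts to $\zeta_x \rhd_{[0,\beta_\phi]} \mathcal{B}^{\phi}(t)$, whence Theorem \ref{thm:validity lemma} (equivalently Lemma \ref{lemma:conjunction}) yields $(\zeta_x, 0) \models_{r} \phi$ with $r = \min_{l}\{r_l^\star\}$.

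I expect the only delicate point to be the gluing itself: verifying that the glued trajectory is absolutely continuous across the junction nodes and that the node times exactly tile $[0,t_s]$ without gaps. Both follow directly from the tree construction, namely the shared node state at each junction and the strictly increasing times along any root-to-leaf path, so beyond this bookkeeping no substantial obstacle remains; the heavy lifting (forward invariance and the link between membership in $\mathcal{B}^{\phi}(t)$ and robust satisfaction) has already been discharged in Theorem \ref{thm:validity lemma}.
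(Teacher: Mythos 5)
Your proposal is correct and follows essentially the same route as the paper: each edge trajectory is dynamically feasible and contained in $\mathcal{B}^{\phi}(t)$ by construction of the \textbf{steer}/\textbf{bridge} programs, the concatenation inherits both properties, and Lemma \ref{lemma:conjunction} (equivalently Theorem \ref{thm:validity lemma}) then yields $(\zeta_x,0)\models_r\phi$. The extra care you take with the gluing (matching node states, monotone times, absolute continuity) and with restricting membership to $[0,\beta_\phi]$ before invoking the lemma only makes explicit what the paper treats as immediate.
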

\begin{proof}
    The result follows directly by the fact that each trajectory $\zeta_x^{\sigma_{k},\sigma_{k+1}} \in \mathcal{E}$ is dynamically feasible and such that $\zeta_x^{\sigma_{k},\sigma_{k+1}}(t)\in\mathcal{B}^{\phi}(t),\; \forall t\in [t_{\sigma_{k}},t_{\sigma_{k+1}}]$, as it is the solution to either \eqref{eq:steer function} or \eqref{eq:bridge function}. Since $\zeta_x: [0,t_s] \rightarrow \mathbb{X}$ in \eqref{eq:concatenation} is the concatenation of the sequence of edge trajectries $\zeta_x^{\sigma_{k},\sigma_{k+1}}$, then $\zeta_x \in \mathcal{B}^{\phi}(t),\; \forall t\in [0,t_{s}]$ and thus  $(\zeta_{x} ,0) \models_r \phi$ with robustness $r>0$ as per Lemma \ref{lemma:conjunction}.
\end{proof}

Thus, our solution to Problem \ref{prob:problem 1} is given by selecting the node $(\vec{x}^{\star},t^{\star})\in \mathcal{V}$ and the corresponding trajectory $\zeta^{\star}: [0,t_s] \rightarrow \mathbb{X}$, with $t^{\star} \geq t_{hr}(\phi)$ and minimum cost-to-go (Alg. \ref{alg:rrt} line \ref{code:line:last line}). By Thm. \ref{thm:the last theorem} such trajectory robustly satisfies $\phi$ and thus the original disjunction $\psi$.
\begin{algorithm}
\caption{Modified RRT$^{\star}$ Algorithm}
\begin{algorithmic}[1]
\State $\textbf{Input} \quad \psi = \lor_k \phi_k \; \text{and} \; \vec{x}_0 \in \mathbb{X}$
\State Compute $\mathcal{B}^{\phi}_k(t)$ for each $\phi_k$ by \eqref{eq:the final optimization program}\label{code:line:init1}
\State Select $\mathcal{B}^{\phi}(t)$ with maximum robustness
\State Initialize $\mathcal{V} \gets \{(\vec{x}_{0},0)\}$, $\mathcal{E} \gets \emptyset$ with $\vec{x}_0 \in \mathcal{B}^{\phi}(0)$\label{code:line:init2}
\For{$i = 1$ to $n_{max}$}\label{code:line:for loop start}
    \State Sample $\tilde{t} \sim \textit{Unif}([0,t_{hr}(\phi)])$ and $\tilde{\vec{x}} \sim \textit{Unif}(\mathcal{B}^{\phi}(\tilde{t}))$  \label{code:line:rand select}
    \State $(\vec{x}_{i},t_{i}) \gets \textbf{past\_nn}((\tilde{\vec{x}},\tilde{t}))$
    \State $\vec{\zeta}_x^{ij}(t) \gets \textbf{steer}((\vec{x}_{i}, t_{i}),(\tilde{\vec{x}},\tilde{t}))$ \label{code:line:steer}
    \If{ $\vec{\zeta}_x^{ij}(t)$ is \textbf{not} in collision}
        \State $\mathcal{V} \gets \mathcal{V} \cup \{(\vec{x}_{j},t_{j})\}$ 
        \State $\mathcal{E} \gets \mathcal{E} \cup \{\zeta_x^{ij}(t)\}$
        \State $\mathcal{R} = \{(\vec{x}_r,t_{r})\}_r \gets \textbf{future\_nns}((\vec{x}_{j},t_j))$
        \ForAll{$(\vec{x}_r,t_{r})\in \mathcal{R}$}
            \State $\mathcal{E} \gets \textbf{rewire}((\vec{x}_{j},t_j),(\vec{x}_{r},t_r)) $
        \EndFor
    \Else{}
        \State continue
    \EndIf
\EndFor
\State $\zeta^{\star} \gets \textbf{get\_trajectory}(\mathcal{T})$\label{code:line:last line}
\State \Return $\zeta^{\star}$ \label{code:line:for loop end}
\end{algorithmic}
\label{alg:rrt}
\end{algorithm}

\vspace{-0.6cm}
\begin{algorithm}
\caption{\textbf{rewire}}
\begin{algorithmic}[1]
\State \textbf{Function} \textbf{rewire}{($(\vec{x}_{j},t_j), (\vec{x}_{r},t_r))$}
\State $\zeta_x^{jr}(t) \gets \textbf{bridge}((\vec{x}_{j}, t_{j}),(\vec{x}_{r}, t_{r}))$
\If{$\textbf{ctg}((\vec{x}_{j}, t_{j})) + \int_{t_j}^{t_r} \|\dot{\zeta}_{x}^{jr}(t)\| dt<\textbf{ctg}(r)$} \label{code:line:cost check}
        \If{$\zeta^{jr}_x(t)$ is \textbf{not} in collision} \label{code:line:start rewire}
            \State Get $(\vec{x}_{p},t_{p})$, parent of $(\vec{x}_{r},t_{r}))$ in $\mathcal{T}$
            \State $\mathcal{E} \gets (\mathcal{E}\setminus \{\zeta^{pr}_x\}) \cup \{\zeta_x^{jr}\}$ \label{code:line:end rewire}
        \EndIf
 \EndIf
 \Return $\mathcal{E}$
\end{algorithmic}
\label{alg:rewire}
\end{algorithm}

\section{Simulations}\label{sec:sim}
\begin{figure*}[t!]
    \centering

    \begin{subfigure}[b]{0.3\textwidth}
        \centering
        \includegraphics[width=\linewidth]{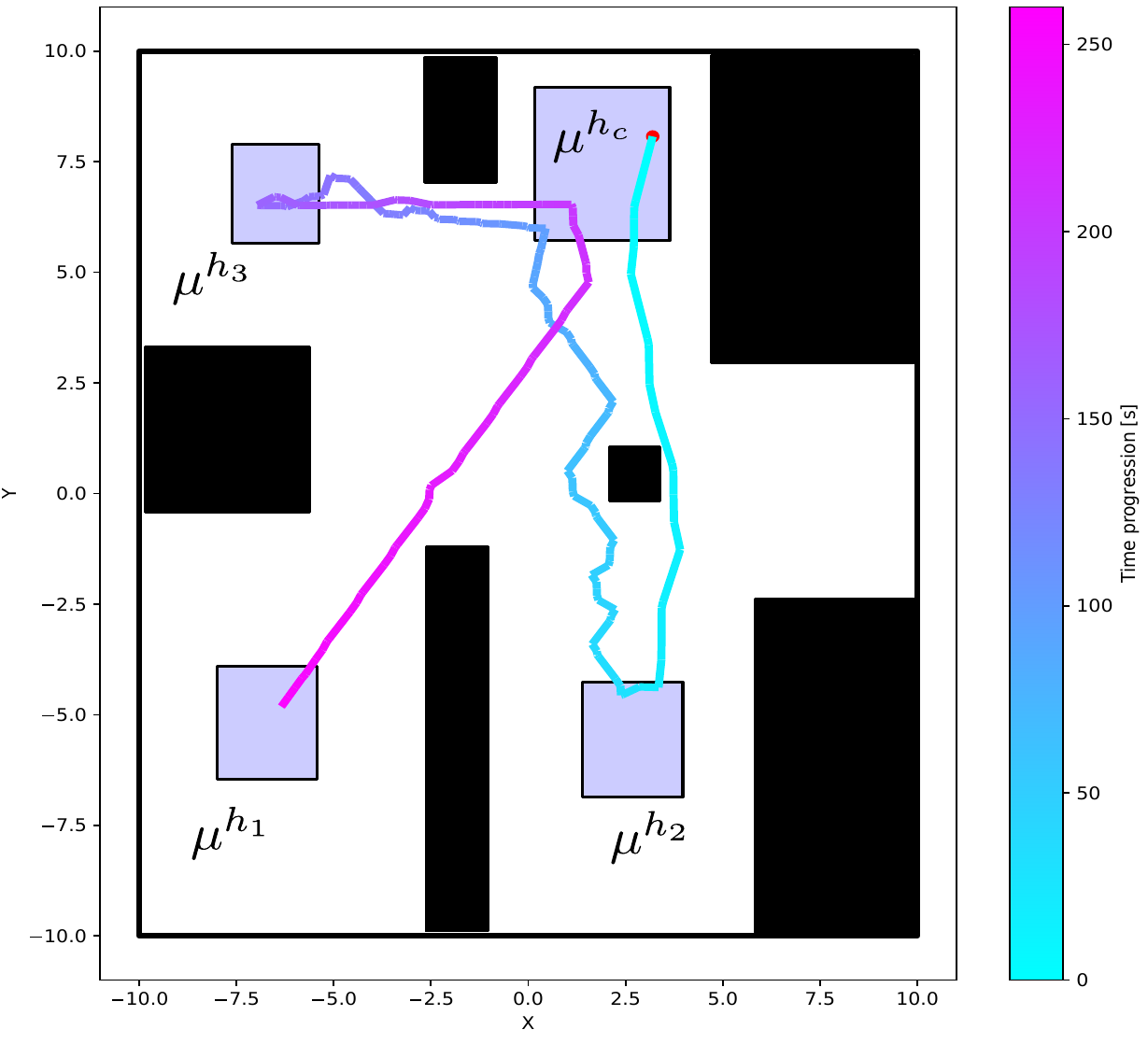}
        \caption{}
        \label{fig:room service}
    \end{subfigure}
    \hspace{0.3cm}
        \begin{subfigure}[b]{0.3\textwidth}
        \centering
        \scriptsize
        \begin{tabular}{lll|ll}
        \toprule[2pt]\toprule
         \multicolumn{5}{c}{RRT Planning} \\\midrule
        & \multicolumn{2}{l}{Case (1)} & \multicolumn{2}{l}{Case (2)} \\\midrule
        robustness        &  \multicolumn{2}{l}{0.12} & \multicolumn{2}{l}{1.58} \\
        task opt. (s) &  \multicolumn{2}{l}{0.019$^{*}$}        & \multicolumn{2}{l}{0.024} \\ 
        \midrule[2pt]
        planning          & First & Best  & First & Best\\
        \midrule
        avg. cost         & 65.76 & 65.33 & 540.23 & 538.92 \\
        std. cost         & 0.81  & 0.71  & 2.31 & 2.33 \\
        avg. time (s)     & 1.39  & 3.0   & 3.77 & 6.06 \\
        std. time (s)     & 0.51  & 1.05  & 0.59 & 1.19 \\
        \bottomrule
        \bottomrule
         \multicolumn{5}{l}{\text{$*$ Sum of solver times of \eqref{eq:the final optimization program} for tasks $\phi_1$ and $\phi_2$}}
        \end{tabular}
        \caption{}
        \label{tab:simulations summary}
    \end{subfigure}
    \hspace{0.3cm}
    \begin{subfigure}[b]{0.3\textwidth}
        \centering
        \includegraphics[width=1.1\linewidth]{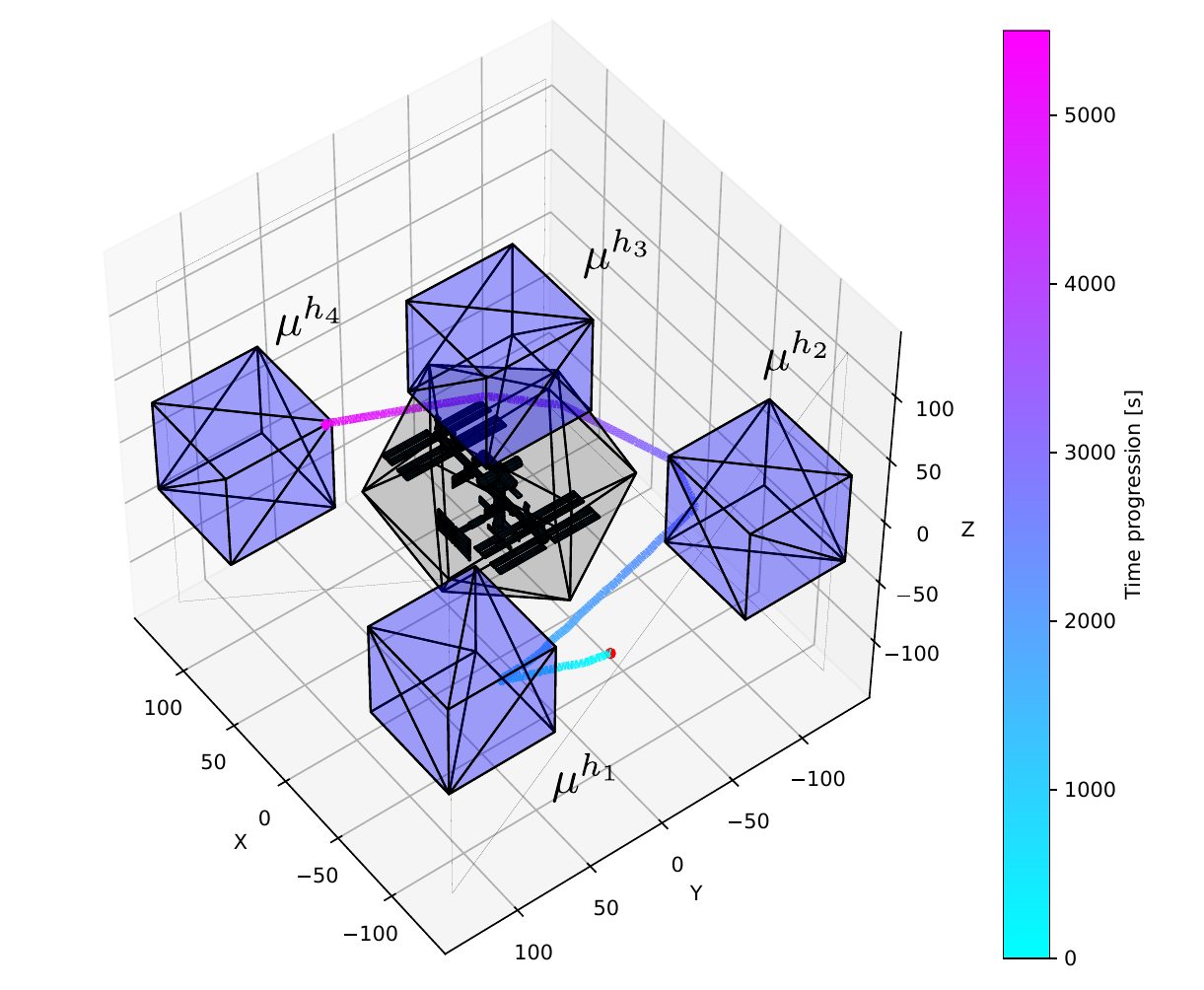}
        \caption{}
        \label{fig:iss inspection}
    \end{subfigure}

    \caption{Trajectories planned by Alg. \ref{alg:rrt} for the room servicing scenario (a) and ISS inspection (c). The color gradient denotes time progression. On panel (b), we show the performance summary table. First, the task optimization time required to compute the set $\mathcal{B}^{\phi}(t)$ by solving \eqref{eq:the final optimization program} (Alg. \ref {alg:rrt} line \ref{code:line:init1} ) is reported with the obtained robustness. Then, the average and standard deviation of the cost/time for the first/best solution found by our RRT$^{\star}$ algorithm for the two case studies (the time is computed based on the RRT$^{\star}$ iteration Alg. \ref{alg:rrt} from line \ref{code:line:for loop start} to\ref{code:line:for loop start}) is reported based on a batch of 100 simulations. Simulations where run on Intel Core i7-1265U with 32GB RAM.}
    \label{fig:all-scenarios}
\end{figure*}

We present two use cases of our proposed RRT algorithm. One involves servicing some interest points while ensuring revisit of a charging station, and one involves inspecting the International Space Station (ISS) with a deputy spacecraft over a long-horizon mission. The open source software \texttt{cvxpy} \cite{diamond2016cvxpy} was employed to solve the linear program \eqref{eq:the final optimization program} for each STL task and to transcribe the continuous time optimization programs \eqref{eq:steer function} and \eqref{eq:bridge function} necessary to run Alg. \ref{alg:rrt}.
\subsubsection*{Case study (1)}
Consider a robot with state $\vec{x} = \vec{p}$, where $\vec{p}$ is the position of the robot moving in the workspace $\mathbb{X} = \{ \vec{p} \in \mathbb{R}^2 \mid \|\vec{p}\|_{\infty} \leq 10 \}$ (see Fig. \ref{fig:room service}) and input bound $\mathbb{U} = \{ \vec{u} \in \mathbb{R}^2 \mid \|\vec{u}\|_{\infty} \leq 5 \}$ such that
\begin{equation}\label{eq:single integrator dyn}
\dot{\vec{p}} = \begin{bmatrix} -0.049 & -0.029 \\ -0.071 & -0.049 \end{bmatrix} \vec{p} +  \vec{u}
 \end{equation}
 The dynamics \eqref{eq:single integrator dyn} represent a single integrator dynamics with a position-dependent velocity drift. We assign the robot the specification $\psi= \phi_1 \lor \phi_2$ where 
 $\phi_1 = F_{[20,25]} \mu^{h_1}  \land F_{[120,150]} \mu^{h_2} \land G_{[0,200]} F_{[0,100]} \mu^{h_c}  \land G_{[255,265]} \mu^{h_3}$ and $\phi_2 = F_{[20,25]} \mu^{h_2}  \land F_{[120,150]} \mu^{h_3} \land G_{[0,200]} F_{[0,100]} \mu^{h_c}  \land G_{[255,265]} \mu^{h_1}$ where the predicate functions $h_{i}, i\in \{1,2,3\}$ represent areas of interest, while $h_c$ represents the charging station which should be revisited  (see purple boxes in Fig. \ref{fig:room service}). The difference between $\phi_1$ and $\phi_2$ is only in the order in which the regions of interest are reached. The task $\phi_2$ was selected for the satisfaction of $\psi$ with maximum robustness $r = 0.12$, while the resulting robustness for $\phi_1$ was only $r=0.06$. The best trajectory (in terms of the cost in \eqref{eq:cost to go function}) obtained by our algorithm is shown in Fig. \ref{fig:room service}, with the color gradient representing the time progression along the trajectory. The algorithm was run for a total of 500 iterations, over 100 experiments with running time and trajectory cost summarized in Tab. \ref{tab:simulations summary}. Due to the sampling-based nature of the algorithm, the obtained trajectory is only piece-wise smooth. While we did not smooth the resulting trajectory, any smoothing algorithm could be applied to further improve the final result, e.g., using splines optimization \cite{splines}.

\subsubsection*{Case study (2)}
We consider a deputy spacecraft inspecting the International Space Station (ISS) by visiting a set of pre-selected observation regions (purple boxes in Fig. \ref{fig:iss inspection}). The deputy is modelled as a double integrator governed by the standard Clohessy–Wiltshire model \cite{clohessy1960terminal}. Namely, let the state $\vec{x} = [\vec{p}\; \vec{v}]^T \in \mathbb{R}^6$ represent the position and velocity of the deputy with dynamics $\dot{\vec{x}} = A\vec{x} + B \vec{u}$ as
\begin{equation}
A = \left[
\begin{smallmatrix*}
 & 0_{3\times 3} &  &  & I_{3 \times 3} &  \\
\hline
3n^2 & 0 & 0 & 0 & 2n & 0 \\
0 & 0 & 0 & -2n & 0 & 0 \\
0 & 0 & -n^2 & 0 & 0 & 0 \\
\end{smallmatrix*}
\right], B =
\begin{bmatrix}
0_{3\times 3}\\
I_{3}
\end{bmatrix}
\end{equation}
where $n=1.13e^{-3}$ for the ISS, and $\mathbb{X} := \{ \vec{x} \in \mathbb{R}^6 \mid \|\vec{p}\|_{\infty} \leq 120,\;\|\vec{v}\|_{\infty} \leq 0.5\}$ and $\mathbb{U} := \{ \vec{u}\in \mathbb{R}^3 \mid \|\vec{u}\|_\infty \leq 1.5 \}$. The deputy is required to visit four observation regions (purple boxes in  Fig. \ref{fig:iss inspection}) and hold the position for a period of $400s$ ($\approx 6.7$ min). We encode this task in the STL formalism as (expressing time intervals in minutes) $\psi = \phi = F_{[16.7,23.3]}G_{[0,6.7]}\mu^{h_1} \land F_{[41.7,48.4]}G_{[0,6.7]}\mu^{h_2} \land F_{[58.3,65.0]}G_{[0,6.7]}\mu^{h_3} \land F_{[83.33,90.0]}G_{[0,6.7]}\mu^{h_4}.$
The best resulting trajectory is shown in Fig. \ref{fig:iss inspection}. The algorithm was run for a total of 500 iterations, over 100 experiments with running time and trajectory cost summarized in Tab. \ref{tab:simulations summary}.

\vspace{-0.3cm}
\section{Conclusion}\label{sec:conclusion}
We introduced a sampling-based planning framework, based on RRT$^\star$, to synthesize trajectories under STL specifications with real-time performance. Namely, our approach leverages suitably constructed time-varying sets, with provable forward invariance guarantees with respect to controllable and input limited linear dynamics, to synthesize trajectories robustly satisfying a given STL task. As a next step, we aim to further expand our framework to nonlinear systems, applying techniques from spline optimization, and to broaden the class of STL specifications that we can consider.

\appendix
\subsection{Proof of Rule \ref{rule:eventually always rule}}\label{app:rule 4}
\begin{proof}
By \eqref{eq:gamma general}, select $\alpha \in a'\oplus [a,b]$ and $\beta = \alpha + (b'-a') $, from which  $\gamma^{\varphi}(t|\vec{\theta}^{\varphi}) = - r, \forall t \in [\alpha,\beta]$, such that $\mathcal{B}^{\varphi}(t|\vec{\theta}^{\varphi}) =\mathcal{H}^r, \; \forall t \in [\alpha,\beta]$. Thus if $\zeta_x \rhd_{[0,\beta]} \mathcal{B}^{\varphi}(t|\theta)$, then $h(\zeta_x(t)) \geq r,\, \forall t \in [\alpha,\beta] \Rightarrow \min_{[\alpha,\beta]}\{h(\zeta_x(t))\}\geq r$. By the robust semantics \eqref{eq:always robust}-\eqref{eq:eventually robust} we also get $\rho^\varphi(\zeta_x,0) = \max_{t \in [a,b]} \{ \min_{\tau \in t\oplus[a',b']}\{ h(\zeta_x(\tau) \}\}$. Let then $t^{\star} = \alpha - a' \in [a,b]$ such that, by the selection of $\alpha$ and $\beta$, we get $t^{\star}\oplus[a',b'] = [\alpha,\beta]$. Then  $\rho^\varphi(\zeta_x,0) = \max_{t \in [a,b]} \{ \min_{\tau \in t \oplus [a',b']}\{ h(\zeta_x(\tau)\}\} \geq  \min_{\tau \in t^{\star}\oplus [a',b']}\{ h(\zeta_x(\tau)\} = \min_{\tau \in [\alpha,\beta]}\{ h(\zeta_x(\tau)\} \geq r >0$, concluding that $(\zeta_x,0) \models_r \varphi$.
\end{proof}

\subsection{Proof of Proposition \ref{prop:always eventually decomposition}}\label{app:always eventually proof}

\begin{proof}
By the semantics \eqref{eq:always robust}, \eqref{eq:eventually robust} we have 
\begin{equation}\label{eq:some random passages I want to forget}
\rho^{\phi}(\zeta_x,0) = \min_{t \in [a,b]}  \{ \max_{\tau \in t\oplus [a', b']} \{h(\zeta_x(\tau)\}\}
\end{equation}
with robust satisfaction condition $\rho^{\phi}(\zeta_x,0) \geq r$. Take then any sequence of time instants $\tau_{w}$ with $\tau_{w} = \tau_{w-1} + \delta_{w} (b'-a')$ for some $\delta_w\in [0,1]$ and $\tau_0=a + a'$. We want to prove that if $h(\zeta_x(\tau_w))\geq r, \; \forall w\in[[n_{f}]]$, then $\rho^{\phi}(\zeta_x,0)\geq r$.  Start by noting that since $0 \leq \tau_{w}-\tau_{w-1} \leq (b'-a')$, then for any index $w$ it holds that $\tau_{w} \in t\oplus [a',b']$ for all $t\in -a'\oplus [\tau_{w-1},\tau_{w}]$. Hence, we can write 
\begin{equation}\label{eq:implication that we use}
\begin{aligned}
& h(\zeta_x(\tau_w))\geq r \Rightarrow \\
&\max_{\tau\in t\oplus [a',b']}\{  h(\zeta_x(\tau))\} \geq r,\; \forall t\in -a'\oplus [\tau_{w-1},\tau_{w}], \Rightarrow   \\
&\min_{t\in -a'\oplus [\tau_{w-1},\tau_{w}]}\{\max_{\tau\in t\oplus [a',b']}\{  h(\zeta_x(\tau))\} \} \geq r.
\end{aligned}
\end{equation}\
With the proposed iteration $\tau_{w} = \tau_{w-1} + \delta_w(b'-a'),\; \delta_w\in [0,1]$, consider now the union of intervals $\cup_{w\in [[n_f]]}-a'\oplus [\tau_{w-1},\tau_{w}] = -a'\oplus \cup_{w\in [[n_f]]} [\tau_{w-1},\tau_{w}]= -a' \oplus [a+a',\tau_{n_f}] =  [a,\tau_{n_f}-a']$.  By \eqref{eq:implication that we use} and recalling that for any union of sets $\cup_w[c_w,d_w]$ we have $
\min_{t\in \cup_w[c_w,d_w]}\{h(\zeta_x(t)\} \geq r \Leftrightarrow \min_{t\in [c_w,d_w]}\{h(\zeta_x(t)\}\geq r,\; \forall w$ then
$$
\begin{aligned}
&h(\zeta_x(\tau_{w}))\geq r, \forall w\in [[n_f]] \Rightarrow \text{by \eqref{eq:implication that we use}} \\
&\min_{t\in -a'\oplus [\tau_{w-1},\tau_{w}]}\{\max_{\tau\in t\oplus [a',b']}\{ h(\zeta_x(\tau))\} \} \geq r,\; \forall w\in [[n_f]] \Leftrightarrow \\
&\min_{t\in [a,\tau_{n_f}-a']}\{\max_{\tau\in t\oplus [a',b']}\{ h(\zeta_x(\tau))\} \} \geq r. 
\end{aligned}
$$
Compared to $\rho^{\phi}(\zeta_x,0)$ in \eqref{eq:some random passages I want to forget}, it is sufficient to have $\tau_{n_f} -a' \geq b$ in order to satisfy $\rho^{\phi}(\zeta_x,0) \geq r$. Since $\tau_{n_f} = a + a' + \sum_{w=1}^{n_f}\delta_w(b'-a')$ (by applying the recursion) then we must ensure 
$
\tau_{n_f}-a' \geq b \Rightarrow a + \sum_{w=1}^{n_f}\delta_w(b'-a') \geq b, \; \forall \sum_{w=1}^{n_f}\delta_w.
$
Let then the minimum constant coefficient $\underline{\delta} \in [0,1]$ such that the sum $a+\sum_{w=1}^{n_f} \underline{\delta} = a+ n_f \underline{\delta}  \geq b$. Then $n_f$ must be such that $n_f  \geq  \frac{1}{\underline{\delta}}\frac{b-a}{b'-a'}$. Since $\underline{\delta} \in [0,1]$, the minimum value for $n_f$ is given when $\underline{\delta} =1$ and the maximum is obtained when $\underline{\delta} = 0$ leading to $n_{f}=\infty$ (corresponding indeed to the satisfaction of the formula $G_{[a,b]}G_{[a',b']}\mu^h$). Since $n_f$ is an integer, then we conclude that the minimum lower bound on $n_f$ is given as $n_{f} \geq \lceil \frac{b-a}{b'-a'}\rceil$, with $\underline{\delta} = 1$. Furthermore, for a given selection of $n_{f}$ then $\underline{\delta} \geq  \frac{1}{n_f} \frac{b-a}{b'-a'}$ and the recursive formula for the selection of the time instants $\tau_w$ becomes $\tau_{w} = \tau_{w-1} + \delta_w(b'-a')$ with  $\delta_w \in [\underline{\delta},1] = [\frac{1}{n_f} \frac{b-a}{b'-a'},1]$. We thus arrived at the conclusion of the proof, since we just proved that for any given selection of $n_{f} \geq \lceil \frac{b-a}{b'-a'}\rceil$ if there exists a sequence of $\tau_{w}, w\in [[n_f]]$, respecting the given recursion with $h(\zeta_x(\tau_{w}))\geq r$, then $\rho^{\phi}(\zeta_x,0) \geq r$. Noting that $a_w = b_w = \tau_w$ in \eqref{eq:tau constraints} and that $(\zeta_x,0) \models_r F_{[a_w,b_w]} \mu^h \Leftrightarrow h(\zeta_x(\tau_w)) \geq r$, we have that, $(\zeta_x,0) \models_r \land_{w=1}^{n_f} F_{[a_w,b_w]} \Rightarrow h(\zeta_x(\tau_w)) \geq r, \; \forall w\in [[n_f]] \Rightarrow \rho^{\phi}(\zeta_x,0) = \min_{t\in [a,b]}\{\max_{\tau\in t\oplus [a',b']}\{ \rho^{\mu^h}(\zeta_x,\tau)\} \} \geq r \Rightarrow (\zeta_x,0)\models_{r} \phi$ concluding the proof.
\end{proof}
\subsection{Proof of Proposition \ref{prop:discrete viability prop}} \label{proof of discrete viability prop}
\begin{proof} Consider the sequence $(\beta_l)_{l=0}^{n_{\phi}}$ with $\beta_0=0$ with $\beta_{l} \geq \beta_{l-1}$. We have, by \eqref{eq:switch map}, that $\mathcal{I}^{\phi}(t) := \mathcal{I}^{\phi}_l \; \forall t \in [\beta_{l-1},\beta_{l})$ is constant in the half-open interval $[\beta_{l-1},\beta_{l})$. Moreover, by Prop \ref{prop:important properties}, we have that for every $\tilde{l} \in \mathcal{I}^{\phi}_l$ it holds $\mathcal{B}_{\tilde{l}}^{\varphi}(t) \supseteq \mathcal{B}_{\tilde{l}}^{\varphi}(\tau), \;  \forall \tau \in [t, \beta_l],\; \forall t\in [0,\beta_l] \supset (\beta_{l-1},\beta_l)$. Since $\mathcal{B}^{\phi}(t) = \cap_{\tilde{l}\in \mathcal{I}^{\phi}_l}\mathcal{B}_{\tilde{l}}^{\varphi}(t),\; \forall t\in  [\beta_{l-1},\beta_l)$ we conclude that $\mathcal{B}^{\phi}(t)\supseteq \mathcal{B}^{\phi}(\tau) ,\;\forall \tau \in [t, \beta_l),\; \forall t\in [\beta_{l-1},\beta_l)$. If there exists $\vec{\xi}_l\in \mathbb{X}$ such that $\vec{\xi}_l\in \lim_{\tau \rightarrow^{-} \beta_{l}}\mathcal{B}^{\phi}(\tau)$, then $\mathcal{B}^{\phi}(t) \supseteq \lim_{\tau \rightarrow^{-} \beta_{l}}\mathcal{B}^{\phi}(\tau) \supseteq\{\vec{\xi}_l\} \neq \emptyset $ for all $t\in [\beta_{l-1}, \beta_l)$. Since $\mathcal{B}^{\phi}(t)$ is viable, by Lemma \ref{lemma:conjunction}, then $\lim_{\tau \rightarrow^{-} \beta_{l}}\mathcal{B}^{\phi}(\tau) \subseteq \mathcal{B}^{\phi}(\beta_l)$ from which $\mathcal{B}^{\phi}(t)  \supseteq \{\vec{\xi}_l\} \neq \emptyset$ for all $t\in [\beta_{l-1}, \beta_l]$. The argument can now be repeated for every interval $(\beta_{l-1},\beta_{l})$ concluding that $\mathcal{B}^{\phi}(t) \neq \emptyset$ for all $t\in [\beta_{0}, \beta_{n_{\phi}}] = [0, \beta_{n_{\phi}}]$ if there exists  $\vec{\xi}_l\in \lim_{\tau \rightarrow^{-} \beta_{l}}\mathcal{B}^{\phi}(\tau) \; \forall l\in [[n_{\phi}]]$.
\end{proof}

\subsection{Proof of Proposition \ref{prop:preparatory convex hull representation}}\label{proof:preparatory convex hull representation}
\begin{proof}
    ($\Rightarrow$) The elements of $\mathcal{L}^i_{\mathfrak{b}^{\varphi}}(\vec{x},\vec{u},t)$ in \eqref{eq:convex hull lie derivative representation} are all scalars $\sum\limits_{k\in \mathcal{A}(\vec{x},t)} \lambda_k [\vec{d}_k^T\; e_{i}] (\bar{A}\vec{z} + \bar{B}\vec{u} + \bar{\vec{p}})$, for $\lambda_k \geq 0,\; \forall k\in \mathcal{A}(\vec{x},t)$, and $\sum_{k\in \mathcal{A}(\vec{x},t)} \lambda_k  =1 $. Recalling $\dot{\vec{z}} = \bar{A}\vec{z} + \bar{B}\vec{u} + \bar{\vec{p}}$, then for any element in $\mathcal{L}^i_{\mathfrak{b}^{\varphi}}(\vec{x},\vec{u},t)$ we have $\sum\limits_{k\in \mathcal{A}(\vec{x},t)} \lambda_k [\vec{d}_k^T\; e_{i}] \dot{\vec{z}} + \kappa(\mathfrak{b}(\vec{x},t)) = 
    \sum\limits_{k\in \mathcal{A}(\vec{x},t)} \lambda_k [\vec{d}_k^T\; e_{i}] \dot{\vec{z}} + \hspace{-0.4cm}\sum\limits_{k\in \mathcal{A}(\vec{x},t)} \hspace{-0.4cm}\lambda_k\kappa(\mathfrak{b}(\vec{x},t)) =
    \sum\limits_{k\in \mathcal{A}(\vec{x},t)} \lambda_k \left([\vec{d}_k^T\; e_{i}]\dot{\vec{z}}   + \kappa(\mathfrak{b}(\vec{x},t)) \right) \underset{ \eqref{eq:vertices condition}}{\geq} 0,$
    concluding that \eqref{eq:vertices condition} implies $\min \mathcal{L}^i_{\mathfrak{b}^{\varphi}}(\vec{x},\vec{u},t) \geq - \kappa(\mathfrak{b}(\vec{x},t))$. Note that the argument is independent of the index $i\in \{1,2\}$.
    ($\Leftarrow$) By contradiction, let $\min \mathcal{L}^i_{\mathfrak{b}^{\varphi}}(\vec{x},\vec{u},t) \geq - \kappa(\mathfrak{b}(\vec{x},t))$ be satisfied and  consider a single index $\tilde{k} \in \mathcal{A}(\vec{x},t)$ such that $[\vec{d}_{\tilde{k}}^T\; e_{i}] \dot{\vec{z}} <  - \kappa(\mathfrak{b}^{\varphi}(\vec{x},t))$. Let $\lambda_{\tilde{k}} = 1$ and $\lambda_{k} =0,\; \forall k\in \mathcal{A}(\vec{x},t)\setminus \{\tilde{k}\}$. Then $\sum_{k\in \mathcal{A}(\vec{x},t)} \lambda_k [d_k^T\; e_{i}]\dot{\vec{z}}  = [\vec{d}_{\tilde{k}}^T\; e_{i}] \dot{\vec{z}} \in \mathcal{L}^i_{\mathfrak{b}^{\varphi}}(\vec{x},\vec{u},t)$, which contradicts the initial assumption, as $\min \mathcal{L}^i_{\mathfrak{b}^{\varphi}}(\vec{x},\vec{u},t) \leq [\vec{d}^T_{\tilde{k}}\; e_{i}] \dot{\vec{z}} <  - \kappa(\mathfrak{b}^{\varphi}(\vec{x},t))$.
\end{proof}

\subsection{Proof of Proposition \ref{prop:min lie derivative prop}}\label{proof:min lie derivative prop}
\begin{proof}
By definition we have that $\mathfrak{b}^{\phi}(\vec{x},t) = \min_{l \in \mathcal{I}^{\phi}(t)} \{\mathfrak{b}_l^{\varphi}(\vec{x},t)\}$ in \eqref{eq:super min barrier} where $\mathfrak{b}_l^{\varphi}(\vec{x},t) = h_{l}(\vec{x}) + \gamma_l^{\varphi}(t)$. Since over the intervals $(s_j^{\phi},s_{j+1}^{\phi})$ the function $\mathfrak{b}_l^{\varphi}(\cdot,t)$ is differentiable in $t$ and $\mathfrak{b}_l^{\varphi}(\vec{x},\cdot)$ is the minimum of linear functions, then, the generalized gradient of $\partial^j \mathfrak{b}^{\phi}$ for each interval $(s^{\phi}_j, s^{\phi}_{j+1})$ is given by (see \cite[Prop. 7(iii)]{cortes2008discontinuous} and \cite[Example 15]{cortes2008discontinuous})
\begin{equation}\label{eq:some random mid passages I dislike}
\partial^j \mathfrak{b}^{\phi}(\vec{x},t) = \text{co}(\{ \partial^j \mathfrak{b}^{\varphi}_l(\vec{x},t) \mid \forall l\in \mathcal{A}^{\phi}(\vec{x},t) \cap \mathcal{I}^{\phi}(s_j) \}),
\end{equation}
where $\partial^j \mathfrak{b}^{\varphi}_l$ is the generalized gradient of each function $\mathfrak{b}^{\varphi}_l$ over the interval $(s^{\phi}_j, s^{\phi}_{j+1})$ and
\begin{equation}\label{eq:active set map for phi}
\mathcal{A}^{\phi}(\vec{x},t) = \{ l\in [[n_{\phi}]] \mid \mathfrak{b}_l^{\varphi}(\vec{x},t) = \mathfrak{b}^{\phi}(\vec{x},t)\},
\end{equation}
is the set of barriers equal to the minimum, while $\mathcal{I}^{\phi}(t)$ is the set of tasks that are still on-going at time $t$, and is constant over the interval $(s^{\phi}_j, s^{\phi}_{j+1})$, as per \eqref{eq:Constance relation}. Recalling that for any set of vectors $\{\vec{\nu}_j\}\subset \mathbb{R}^n$ and vector $\vec{c} \in \mathbb{R}^n$ it holds $\{\vec{\nu}^T \vec{c}\mid  \vec{\nu} \in co(\{ \vec{\nu}_j \})   \} = \text{co}(\{\vec{c}^T \vec{\nu}_j\})$, then we can replace the definition of $\partial^j \mathfrak{b}^{\phi}$ given in \eqref{eq:some random mid passages I dislike}, in the definition of weak set-valued Lie derivative in \eqref{eq:generalized lie derivative} to obtain the Lie derivative $\mathcal{L}^j_{\mathfrak{b}^{\phi}}(\vec{x},\vec{u},t)$ of $\mathfrak{b}^{\phi}$ over the interval $(s^{\phi}_j, s^{\phi}_{j+1})$ as
$$
\begin{aligned}
&\mathcal{L}^j_{\mathfrak{b}^{\phi}}(\vec{x},\vec{u},t) = \\
&\{\vec{\nu}^T (\bar{A}\vec{z} + \bar{B}\vec{u} + \bar{\vec{p}}) \mid \; \forall \vec{\nu}\in \text{co}(\{ \hspace{-0.7cm}\bigcup_{l\in \mathcal{A}^{\phi}(\vec{x},t) \cap \mathcal{I}^{\phi}(t)}\partial^j\mathfrak{b}^{\varphi}_l(\vec{x},t)\} )\} = \\
&\text{co} (\{ \vec{\nu}^T (\bar{A}\vec{z} + \bar{B}\vec{u}+\bar{\vec{p}})\mid \forall \vec{\nu}\in \hspace{-0.7cm}\bigcup_{l\in \mathcal{A}^{\phi}(\vec{x},t) \cap \mathcal{I}^{\phi}(t)}\partial^j\mathfrak{b}^{\varphi}_l(\vec{x},t) \}) =\\
&\text{co} ( \hspace{-0.5cm} \bigcup_{l\in \mathcal{A}^{\phi}(\vec{x},t) \cap \mathcal{I}^{\phi}(t)} \{ \vec{\nu}^T (\bar{A}\vec{z} + \bar{B}\vec{u}+\bar{\vec{p}})\mid \forall \vec{\nu}\in \partial^j\mathfrak{b}^{\varphi}_l(\vec{x},t) \}) =\\
&\text{co}(\hspace{-0.5cm}\bigcup_{ l\in \mathcal{A}^{\phi}(\vec{x},t) \cap \mathcal{I}^{\phi}(t)} \mathcal{L}^j_{\mathfrak{b}_l^{\varphi}}(\vec{x},\vec{u},t)),
\end{aligned}
$$
and by the definition of the convex hull we can write 
\begin{equation}\label{eq:subset condition}
\mathcal{L}^j_{\mathfrak{b}_l^{\varphi}}(\vec{x},\vec{u},t) \subseteq \mathcal{L}^j_{\mathfrak{b}^{\phi}}(\vec{x},\vec{u},t), \, \; \forall l\in \mathcal{A}^{\phi}(\vec{x},t) \cap \mathcal{I}^{\phi}(t).
\end{equation}
The following two preparatory propositions are then needed.
\begin{proposition}\label{prop:prop1 support}
For any $(\vec{x},t,\vec{u}) \in \mathbb{R}^n\times (s_j^{\phi},s_{j+1}^{\phi}) \times \mathbb{U}$ if $\min \mathcal{L}^j_{\mathfrak{b}_l^{\varphi}}(\vec{x},\vec{u},t) \geq -\kappa(\mathfrak{b}^{\phi}(\vec{x},t) ), \;\forall l\in \mathcal{A}^{\phi}(\vec{x},t) \cap \mathcal{I}^{\phi}(t)$ then $\min \mathcal{L}^j_{\mathfrak{b}^{\phi}}(\vec{x},\vec{u},t) \geq -\kappa(\mathfrak{b}^{\phi}(\vec{x},t))$.
\end{proposition}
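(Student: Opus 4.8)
The plan is to leverage the convex-hull identity for $\mathcal{L}^j_{\mathfrak{b}^{\phi}}$ derived immediately before the proposition, namely
\[
\mathcal{L}^j_{\mathfrak{b}^{\phi}}(\vec{x},\vec{u},t) = \text{co}\Bigl( \bigcup_{l\in \mathcal{A}^{\phi}(\vec{x},t) \cap \mathcal{I}^{\phi}(t)} \mathcal{L}^j_{\mathfrak{b}_l^{\varphi}}(\vec{x},\vec{u},t) \Bigr),
\]
together with the observation that each $\mathcal{L}^j_{\mathfrak{b}_l^{\varphi}}(\vec{x},\vec{u},t)$ is a subset of $\mathbb{R}$. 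Since every set involved is one-dimensional, the whole argument reduces to an elementary fact about minima of convex hulls of unions of real intervals, and the hypothesis then transfers directly to the left-hand side.

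First I would argue that each $\mathcal{L}^j_{\mathfrak{b}_l^{\varphi}}(\vec{x},\vec{u},t)$ is a compact interval of $\mathbb{R}$: the generalized gradient $\partial^j\mathfrak{b}_l^{\varphi}(\vec{x},t)$ is the convex hull of finitely many vectors, as per \eqref{eq:task generalized gradient}, hence compact and convex, and $\mathcal{L}^j_{\mathfrak{b}_l^{\varphi}}$ is its image under the linear functional $\vec{\nu} \mapsto \vec{\nu}^T(\bar{A}\vec{z}+\bar{B}\vec{u}+\bar{\vec{p}})$, so it is a compact convex subset of $\mathbb{R}$, i.e., a closed interval. Next I would use the fact that, for finitely many compact subsets $S_l \subset \mathbb{R}$, the convex hull of their union satisfies $\min \text{co}(\cup_l S_l) = \min(\cup_l S_l) = \min_l (\min S_l)$: this holds because the convex hull of a bounded subset of $\mathbb{R}$ is just the interval between its infimum and supremum, so passing to the convex hull cannot lower the minimum, while the minimum over a union is the minimum of the minima. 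Applying this with $S_l = \mathcal{L}^j_{\mathfrak{b}_l^{\varphi}}(\vec{x},\vec{u},t)$ yields
\[
\min \mathcal{L}^j_{\mathfrak{b}^{\phi}}(\vec{x},\vec{u},t) = \min_{l\in \mathcal{A}^{\phi}(\vec{x},t) \cap \mathcal{I}^{\phi}(t)} \bigl\{ \min \mathcal{L}^j_{\mathfrak{b}_l^{\varphi}}(\vec{x},\vec{u},t) \bigr\}.
\]

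Finally I would invoke the hypothesis, which guarantees $\min \mathcal{L}^j_{\mathfrak{b}_l^{\varphi}}(\vec{x},\vec{u},t) \geq -\kappa(\mathfrak{b}^{\phi}(\vec{x},t))$ for every $l\in \mathcal{A}^{\phi}(\vec{x},t) \cap \mathcal{I}^{\phi}(t)$; taking the minimum of this uniform lower bound over the finitely many active indices preserves the inequality, so the displayed identity gives $\min \mathcal{L}^j_{\mathfrak{b}^{\phi}}(\vec{x},\vec{u},t) \geq -\kappa(\mathfrak{b}^{\phi}(\vec{x},t))$, as claimed. The only step requiring genuine care is the interchange of minimization with the convex-hull operation, and this is precisely where one-dimensionality of the Lie-derivative sets is essential, since in higher dimensions the scalar $\min$ would not be meaningful and the identity could fail. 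I expect no other obstacle: once the convex-hull identity preceding the proposition is in hand, the remaining steps are routine.
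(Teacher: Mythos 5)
Your proposal is correct, and it takes a genuinely different route from the paper's. The paper proves this proposition by contradiction, relying only on the inclusion \eqref{eq:subset condition}, $\mathcal{L}^j_{\mathfrak{b}_l^{\varphi}}(\vec{x},\vec{u},t)\subseteq \mathcal{L}^j_{\mathfrak{b}^{\phi}}(\vec{x},\vec{u},t)$ for active $l$, which yields $\min \mathcal{L}^j_{\mathfrak{b}^{\phi}}\leq \min \mathcal{L}^j_{\mathfrak{b}_l^{\varphi}}$; note that this inequality alone bounds the conjunction's Lie derivative from \emph{above} by the individual minima, and the paper's contradiction is set up against the negation of the hypothesis rather than the negation of the conclusion, so the argument as written leans on the reader to see why the convex hull cannot dip below every individual minimum. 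Your argument supplies exactly that missing reverse direction explicitly: since each $\mathcal{L}^j_{\mathfrak{b}_l^{\varphi}}(\vec{x},\vec{u},t)$ is the image of a compact convex polytope under a linear functional, hence a compact interval of $\mathbb{R}$, the convex hull of their finite union is the interval from the overall minimum to the overall maximum, giving the exact identity
\begin{equation*}
\min \mathcal{L}^j_{\mathfrak{b}^{\phi}}(\vec{x},\vec{u},t) = \min_{l\in \mathcal{A}^{\phi}(\vec{x},t) \cap \mathcal{I}^{\phi}(t)} \bigl\{ \min \mathcal{L}^j_{\mathfrak{b}_l^{\varphi}}(\vec{x},\vec{u},t) \bigr\},
\end{equation*}
from which the claim follows by a direct forward implication. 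What your approach buys is a complete and self-contained justification of the interchange of $\min$ with the convex hull of a union (valid precisely because the Lie-derivative sets are one-dimensional), at the cost of a short digression on compactness; what the paper's approach buys is brevity, since the subset relation \eqref{eq:subset condition} is already on record. Both rest on the same convex-hull identity for $\mathcal{L}^j_{\mathfrak{b}^{\phi}}$ established just before the proposition, and your version is, if anything, the more airtight of the two.
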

\begin{proof}
    By contradiction, consider an index $\tilde{l}\in \mathcal{A}^{\phi}(\vec{x},t) \cap \mathcal{I}^{\phi}(t)$ such that  $\min \mathcal{L}^j_{\mathfrak{b}_l^{\varphi}}(\vec{x},\vec{u},t) < -\kappa(\mathfrak{b}^{\phi}(\vec{x},t) )$, while $\min \mathcal{L}^j_{\mathfrak{b}^{\phi}}(\vec{x},\vec{u},t) \geq -\kappa(\mathfrak{b}^{\phi}(\vec{x},t))$. Since $\mathcal{L}^j_{\mathfrak{b}_l^{\varphi}}(\vec{x},\vec{u},t)\subseteq \mathcal{L}^j_{\mathfrak{b}^{\phi}}(\vec{x},\vec{u},t)$, as per \eqref{eq:subset condition}, then $\min \mathcal{L}^j_{\mathfrak{b}^{\phi}}(\vec{x},\vec{u},t) \leq \min \mathcal{L}^j_{\mathfrak{b}_l^{\varphi}}(\vec{x},\vec{u},t) < -\kappa(\mathfrak{b}^{\phi}(\vec{x},t) )$ which is a contradiction, concluding the proof.
\end{proof}

\begin{proposition}\label{prop:prop2 support}
For any $(\vec{x},t,\vec{u}) \in \mathbb{R}^n\times (s_j^{\phi},s_{j+1}^{\phi}) \times \mathbb{U}$, if $\min \mathcal{L}^j_{\mathfrak{b}_l^{\varphi}}(\vec{x},\vec{u},t) \geq -\kappa(\mathfrak{b}_l^{\varphi}(\vec{x},t) ), \;\forall l\in \mathcal{I}^{\phi}(t)$ then $\min \mathcal{L}^j_{\mathfrak{b}^{\phi}}(\vec{x},\vec{u},t) \geq -\kappa(\mathfrak{b}^{\phi}(\vec{x},t))$.
\end{proposition}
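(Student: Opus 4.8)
The plan is to derive Proposition \ref{prop:prop2 support} directly from the already-established Proposition \ref{prop:prop1 support}. The two statements have the same conclusion, so the only work is to show that the hypothesis of Proposition \ref{prop:prop2 support}, which ranges over \emph{all} ongoing tasks $l \in \mathcal{I}^{\phi}(t)$ and bounds each Lie derivative by the \emph{individual} value $-\kappa(\mathfrak{b}_l^{\varphi}(\vec{x},t))$, implies the hypothesis of Proposition \ref{prop:prop1 support}, which ranges only over the \emph{active} ongoing tasks $l \in \mathcal{A}^{\phi}(\vec{x},t) \cap \mathcal{I}^{\phi}(t)$ and bounds each Lie derivative by the \emph{minimum} value $-\kappa(\mathfrak{b}^{\phi}(\vec{x},t))$. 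The bridge between the two is the defining property of the active set, namely that on $\mathcal{A}^{\phi}(\vec{x},t)$ each individual barrier coincides with the pointwise minimum $\mathfrak{b}^{\phi}$.

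Concretely, I would proceed as follows. First, observe the inclusion $\mathcal{A}^{\phi}(\vec{x},t) \cap \mathcal{I}^{\phi}(t) \subseteq \mathcal{I}^{\phi}(t)$, so that the assumed bound $\min \mathcal{L}^j_{\mathfrak{b}_l^{\varphi}}(\vec{x},\vec{u},t) \geq -\kappa(\mathfrak{b}_l^{\varphi}(\vec{x},t))$ is in particular valid for every $l \in \mathcal{A}^{\phi}(\vec{x},t) \cap \mathcal{I}^{\phi}(t)$. Second, invoke the definition of $\mathcal{A}^{\phi}(\vec{x},t)$ in \eqref{eq:active set map for phi}: for any such active index $l$ we have $\mathfrak{b}_l^{\varphi}(\vec{x},t) = \mathfrak{b}^{\phi}(\vec{x},t)$, and hence $\kappa(\mathfrak{b}_l^{\varphi}(\vec{x},t)) = \kappa(\mathfrak{b}^{\phi}(\vec{x},t))$. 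Substituting this equality into the per-barrier bound yields $\min \mathcal{L}^j_{\mathfrak{b}_l^{\varphi}}(\vec{x},\vec{u},t) \geq -\kappa(\mathfrak{b}^{\phi}(\vec{x},t))$ for all $l \in \mathcal{A}^{\phi}(\vec{x},t) \cap \mathcal{I}^{\phi}(t)$, which is exactly the premise required by Proposition \ref{prop:prop1 support}. Applying that proposition then delivers the desired conclusion $\min \mathcal{L}^j_{\mathfrak{b}^{\phi}}(\vec{x},\vec{u},t) \geq -\kappa(\mathfrak{b}^{\phi}(\vec{x},t))$.

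There is no genuine obstacle here; the argument is essentially a one-line reduction once the active-set identity is spotted. The only point worth a moment's care is that $\mathcal{A}^{\phi}(\vec{x},t) \cap \mathcal{I}^{\phi}(t)$ is nonempty, which holds because $\mathfrak{b}^{\phi}(\vec{x},t)$ is defined in \eqref{eq:super min barrier} as a minimum over $\mathcal{I}^{\phi}(t)$ and is therefore attained by at least one index in $\mathcal{I}^{\phi}(t)$, so that the hypotheses of Proposition \ref{prop:prop1 support} are applied to a genuine (nonempty) collection of active constraints. Everything else is immediate from the set inclusion and the equality $\mathfrak{b}_l^{\varphi} = \mathfrak{b}^{\phi}$ on the active set.
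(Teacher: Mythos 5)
Your proposal is correct and follows essentially the same route as the paper's proof: restrict the hypothesis to the active indices in $\mathcal{A}^{\phi}(\vec{x},t) \cap \mathcal{I}^{\phi}(t)$, use the identity $\mathfrak{b}_l^{\varphi}(\vec{x},t) = \mathfrak{b}^{\phi}(\vec{x},t)$ on that set to convert the per-barrier bound into the common bound $-\kappa(\mathfrak{b}^{\phi}(\vec{x},t))$, and then invoke Proposition~\ref{prop:prop1 support}. Your additional remark on the non-emptiness of the active set is a minor point the paper leaves implicit, but it does not change the argument.
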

\begin{proof}
    By the definition of $\mathcal{A}^{\phi}(\vec{x},t)$ in \eqref{eq:active set map for phi} we have that for all $l\in \mathcal{A}^{\phi}(\vec{x},t) \cap \mathcal{I}^{\phi}(t)$ it holds $\mathfrak{b}_l^{\varphi}(\vec{x},t) = \mathfrak{b}^{\phi}(\vec{x},t)$ while $\mathfrak{b}_l^{\varphi}(\vec{x},t) > \mathfrak{b}^{\phi}(\vec{x},t)$ for all $l\in  \mathcal{I}^{\phi}(t)\setminus \mathcal{A}^{\phi}(\vec{x},t)$. We thus have that $\min \mathcal{L}^j_{\mathfrak{b}_l^{\varphi}}(\vec{x},\vec{u},t) \geq -\kappa(\mathfrak{b}_l^{\varphi}(\vec{x},t) ), \;\forall l\in \mathcal{I}^{\phi}(t)$ implies $\min \mathcal{L}^j_{\mathfrak{b}_l^{\varphi}}(\vec{x},\vec{u},t) \geq -\kappa(\mathfrak{b}_l^{\varphi}(\vec{x},t) ) = - \kappa(\mathfrak{b}_{\phi}(\vec{x},t) ), \;\forall l\in \mathcal{A}^{\phi}(\vec{x},t) \cap \mathcal{I}^{\phi}(t)$ which implies $\min \mathcal{L}^j_{\mathfrak{b}^{\phi}}(\vec{x},\vec{u},t) \geq -\kappa(\mathfrak{b}^{\phi}(\vec{x},t))$ by Prop. \ref{prop:prop1 support}
\end{proof}

Hence, by Proposition \ref{prop:prop2 support} we derived the implication $\min \mathcal{L}^j_{\mathfrak{b}_l^{\varphi}}(\vec{x},\vec{u},t) \geq -\kappa(\mathfrak{b}_l^{\varphi}(\vec{x},t) ), \;\forall l\in \mathcal{I}^{\phi}(t) \\
 \Rightarrow \min \mathcal{L}^j_{\mathfrak{b}^{\phi}}(\vec{x},\vec{u},t) \geq -\kappa(\mathfrak{b}^{\phi}(\vec{x},t))$. As a last step, note that for each $l\in \mathcal{I}^{\phi}(t)$ either $(s_1^l,s_2^l) \supseteq (s_j^{\phi},s_{j+1}^{\phi})$ or $(s_2^l,s_3^l) \supseteq (s_j^{\phi},s_{j+1}^{\phi})$, meaning $(s_j^{\phi},s_{j+1}^{\phi})$ is contained in one of the two linear sections of each function $\gamma^{\varphi}_l$, such that $\mathcal{L}^j_{\mathfrak{b}_l^{\varphi}}(\vec{x},\vec{u},t)$ over the interval $(s_j^{\phi},s_{j+1}^{\phi})$ is equal to \eqref{eq:convex hull lie derivative representation} as
$$
\mathcal{L}^j_{\mathfrak{b}_l^{\varphi}}(\vec{x},\vec{u},t) = \mathcal{L}^i_{\mathfrak{b}_l^{\varphi}}(\vec{x},\vec{u},t), \; i=\Upsilon_l(t), \forall t\in (s_j^{\phi},s_{j+1}^{\phi}).
$$
where the map $\Upsilon_l(t)$ is constant over $(s_j^{\phi},s_{j+1}^{\phi})$ and identifies which linear sections of $\gamma^{\varphi}_l$ the interval $(s_j^{\phi},s_{j+1}^{\phi})$ is spanning. Thus we showed that for each interval $(s_j^{\phi},s_{j+1}^{\phi})$ it holds 
$
\min \mathcal{L}^i_{\mathfrak{b}_l^{\varphi}}(\vec{x},\vec{u},t) \geq -\kappa(\mathfrak{b}_l^{\varphi}(\vec{x},t) ), i = \Upsilon(t), \;\forall l\in \mathcal{I}^{\phi}(t) 
 \Rightarrow \min \mathcal{L}^j_{\mathfrak{b}^{\phi}}(\vec{x},\vec{u},t) \geq -\kappa(\mathfrak{b}^{\phi}(\vec{x},t))
$ as we wanted to prove.
\end{proof}

\vspace{-0.3cm}
\bibliographystyle{ieeetr} 
\bibliography{references}

\begin{thebibliography}{10}

\bibitem{plaku2016motion}
E.~Plaku and S.~Karaman, ``Motion planning with temporal-logic specifications: Progress and challenges,'' {\em AI communications}, vol.~29, no.~1, pp.~151--162, 2016.

\bibitem{branicky2006sampling}
M.~S. Branicky, M.~M. Curtiss, J.~Levine, and S.~Morgan, ``Sampling-based planning, control and verification of hybrid systems,'' {\em IEE Proceedings-Control Theory and Applications}, vol.~153, no.~5, pp.~575--590, 2006.

\bibitem{kress2009temporal}
H.~Kress-Gazit, G.~E. Fainekos, and G.~J. Pappas, ``Temporal-logic-based reactive mission and motion planning,'' {\em IEEE transactions on robotics}, vol.~25, no.~6, pp.~1370--1381, 2009.

\bibitem{saha2014automated}
I.~Saha, R.~Ramaithitima, V.~Kumar, G.~J. Pappas, and S.~A. Seshia, ``Automated composition of motion primitives for multi-robot systems from safe ltl specifications,'' in {\em 2014 IEEE/RSJ International Conference on Intelligent Robots and Systems}, pp.~1525--1532, IEEE, 2014.

\bibitem{MurrayLTL}
E.~M. Wolff, U.~Topcu, and R.~M. Murray, ``Optimization-based trajectory generation with linear temporal logic specifications,'' in {\em 2014 IEEE International Conference on Robotics and Automation (ICRA)}, pp.~5319--5325, 2014.

\bibitem{aminof2019planning}
B.~Aminof, G.~De~Giacomo, A.~Murano, and S.~Rubin, ``Planning under ltl environment specifications,'' in {\em Proceedings of the International Conference on Automated Planning and Scheduling}, vol.~29, pp.~31--39, 2019.

\bibitem{plaku2012planning}
E.~Plaku, ``Planning in discrete and continuous spaces: From ltl tasks to robot motions,'' in {\em Advances in Autonomous Robotics: Joint Proceedings of the 13th Annual TAROS Conference and the 15th Annual FIRA RoboWorld Congress, Bristol, UK, August 20-23, 2012 13}, pp.~331--342, Springer, 2012.

\bibitem{kurtz2022mixed}
V.~Kurtz and H.~Lin, ``Mixed-integer programming for signal temporal logic with fewer binary variables,'' {\em IEEE Control Systems Letters}, vol.~6, pp.~2635--2640, 2022.

\bibitem{raman2015reactive}
V.~Raman, A.~Donz{\'e}, D.~Sadigh, R.~M. Murray, and S.~A. Seshia, ``Reactive synthesis from signal temporal logic specifications,'' in {\em Proceedings of the 18th international conference on hybrid systems: Computation and control}, pp.~239--248, 2015.

\bibitem{sun2022multi}
D.~Sun, J.~Chen, S.~Mitra, and C.~Fan, ``Multi-agent motion planning from signal temporal logic specifications,'' {\em IEEE Robotics and Automation Letters}, vol.~7, no.~2, pp.~3451--3458, 2022.

\bibitem{sewlia2023cooperative}
M.~Sewlia, C.~K. Verginis, and D.~V. Dimarogonas, ``Cooperative sampling-based motion planning under signal temporal logic specifications,'' in {\em 2023 American Control Conference (ACC)}, pp.~2697--2702, IEEE, 2023.

\bibitem{lavalle2001randomized}
S.~M. LaValle and J.~J. Kuffner~Jr, ``Randomized kinodynamic planning,'' {\em The international journal of robotics research}, vol.~20, no.~5, pp.~378--400, 2001.

\bibitem{karaman2011sampling}
S.~Karaman and E.~Frazzoli, ``Sampling-based algorithms for optimal motion planning,'' {\em The international journal of robotics research}, vol.~30, no.~7, pp.~846--894, 2011.

\bibitem{lindemann2018control}
L.~Lindemann and D.~V. Dimarogonas, ``Control barrier functions for signal temporal logic tasks,'' {\em IEEE control systems letters}, vol.~3, no.~1, pp.~96--101, 2018.

\bibitem{charitidou2021barrier}
M.~Charitidou and D.~V. Dimarogonas, ``Barrier function-based model predictive control under signal temporal logic specifications,'' in {\em 2021 European Control Conference (ECC)}, pp.~734--739, IEEE, 2021.

\bibitem{yu2024continuous}
P.~Yu, X.~Tan, and D.~V. Dimarogonas, ``Continuous-time control synthesis under nested signal temporal logic specifications,'' {\em IEEE Transactions on Robotics}, 2024.

\bibitem{Jiang}
F.~J. Jiang, K.~M. Arfvidsson, C.~He, M.~Chen, and K.~H. Johansson, ``Guaranteed completion of complex tasks via temporal logic trees and hamilton-jacobi reachability,'' in {\em 2024 IEEE 63rd Conference on Decision and Control (CDC)}, pp.~5203--5210, 2024.

\bibitem{reach1}
W.~Ren and R.~Jungers, ``Reachability-based control synthesis under signal temporal logic specifications,'' in {\em 2022 American Control Conference (ACC)}, pp.~2078--2083, 2022.

\bibitem{barbosa2019guiding}
F.~S. Barbosa, D.~Duberg, P.~Jensfelt, and J.~Tumova, ``Guiding autonomous exploration with signal temporal logic,'' {\em IEEE Robotics and Automation Letters}, vol.~4, no.~4, pp.~3332--3339, 2019.

\bibitem{vasile2017sampling}
C.-I. Vasile, V.~Raman, and S.~Karaman, ``Sampling-based synthesis of maximally-satisfying controllers for temporal logic specifications,'' in {\em 2017 IEEE/RSJ International Conference on Intelligent Robots and Systems (IROS)}, pp.~3840--3847, IEEE, 2017.

\bibitem{linard2023real}
A.~Linard, I.~Torre, E.~Bartoli, A.~Sleat, I.~Leite, and J.~Tumova, ``Real-time rrt* with signal temporal logic preferences,'' in {\em 2023 IEEE/RSJ International Conference on Intelligent Robots and Systems (IROS)}, pp.~8621--8627, IEEE, 2023.

\bibitem{karlsson2020sampling}
J.~Karlsson, F.~S. Barbosa, and J.~Tumova, ``Sampling-based motion planning with temporal logic missions and spatial preferences,'' {\em IFAC-PapersOnLine}, vol.~53, no.~2, pp.~15537--15543, 2020.

\bibitem{fainekos2009robustness}
G.~E. Fainekos and G.~J. Pappas, ``Robustness of temporal logic specifications for continuous-time signals,'' {\em Theoretical Computer Science}, vol.~410, no.~42, pp.~4262--4291, 2009.

\bibitem{ahmad2022adaptive}
A.~Ahmad, C.~Belta, and R.~Tron, ``Adaptive sampling-based motion planning with control barrier functions,'' in {\em 2022 IEEE 61st Conference on Decision and Control (CDC)}, pp.~4513--4518, IEEE, 2022.

\bibitem{yang2019sampling}
G.~Yang, B.~Vang, Z.~Serlin, C.~Belta, and R.~Tron, ``Sampling-based motion planning via control barrier functions,'' in {\em Proceedings of the 2019 3rd International Conference on Automation, Control and Robots}, pp.~22--29, 2019.

\bibitem{planningwithSTL}
G.~Yang, C.~Belta, and R.~Tron, ``Continuous-time signal temporal logic planning with control barrier functions,'' in {\em 2020 American Control Conference (ACC)}, pp.~4612--4618, 2020.

\bibitem{ziegler2012lectures}
G.~M. Ziegler, {\em Lectures on polytopes}, vol.~152.
\newblock Springer Science \& Business Media, 2012.

\bibitem{cortes2008discontinuous}
J.~Cortes, ``Discontinuous dynamical systems,'' {\em IEEE Control systems magazine}, vol.~28, no.~3, pp.~36--73, 2008.

\bibitem{maler2004monitoring}
O.~Maler and D.~Nickovic, ``Monitoring temporal properties of continuous signals,'' in {\em International symposium on formal techniques in real-time and fault-tolerant systems}, pp.~152--166, Springer, 2004.

\bibitem{farahani2018shrinking}
S.~S. Farahani, R.~Majumdar, V.~S. Prabhu, and S.~Soudjani, ``Shrinking horizon model predictive control with signal temporal logic constraints under stochastic disturbances,'' {\em IEEE Transactions on Automatic Control}, vol.~64, no.~8, pp.~3324--3331, 2018.

\bibitem{donze2013efficient}
A.~Donz{\'e}, T.~Ferrere, and O.~Maler, ``Efficient robust monitoring for stl,'' in {\em International conference on computer aided verification}, pp.~264--279, Springer, 2013.

\bibitem{lindemann_control_2019}
L.~Lindemann and D.~V. Dimarogonas, ``Control {Barrier} {Functions} for {Signal} {Temporal} {Logic} {Tasks},'' {\em IEEE Control Systems Letters}, vol.~3, pp.~96--101, Jan. 2019.

\bibitem{glotfelter2019hybrid}
P.~Glotfelter, I.~Buckley, and M.~Egerstedt, ``Hybrid nonsmooth barrier functions with applications to provably safe and composable collision avoidance for robotic systems,'' {\em IEEE Robotics and Automation Letters}, vol.~4, no.~2, pp.~1303--1310, 2019.

\bibitem{boyd2004convex}
S.~Boyd and L.~Vandenberghe, {\em Convex optimization}.
\newblock Cambridge university press, 2004.

\bibitem{clarke1990optimization}
F.~H. Clarke, {\em Optimization and nonsmooth analysis}.
\newblock SIAM, 1990.

\bibitem{glotty}
P.~Glotfelter, J.~Cortés, and M.~Egerstedt, ``Nonsmooth barrier functions with applications to multi-robot systems,'' {\em IEEE Control Systems Letters}, vol.~1, no.~2, pp.~310--315, 2017.

\bibitem{lindemann2025formal}
L.~Lindemann and D.~V. Dimarogonas, {\em Formal Methods for Multi-Agent Feedback Control Systems}.
\newblock MIT Press, 2025.

\bibitem{diamond2016cvxpy}
S.~Diamond and S.~Boyd, ``{CVXPY}: {A} {P}ython-embedded modeling language for convex optimization,'' {\em Journal of Machine Learning Research}, vol.~17, no.~83, pp.~1--5, 2016.

\bibitem{splines}
B.~Lau, C.~Sprunk, and W.~Burgard, ``Kinodynamic motion planning for mobile robots using splines,'' in {\em 2009 IEEE/RSJ International Conference on Intelligent Robots and Systems}, pp.~2427--2433, 2009.

\bibitem{clohessy1960terminal}
W.~Clohessy and R.~Wiltshire, ``Terminal guidance system for satellite rendezvous,'' {\em Journal of the aerospace sciences}, vol.~27, no.~9, pp.~653--658, 1960.

\end{thebibliography}

\end{document}